\newtheorem{theorem}{Theorem}
\newtheorem{lemma}{Lemma}
\newtheorem{definition}{Definition}
\newtheorem{corollary}{Corollary}
\theoremstyle{plain}
\numberwithin{theorem}{section}
\numberwithin{lemma}{section}
\numberwithin{definition}{section}
\numberwithin{corollary}{section}
\numberwithin{equation}{section}
\newcommand{\eps}{\varepsilon}
\renewcommand{\geq}{\geqslant}
\renewcommand{\le}{\leqslant}
\renewcommand{\ge}{\geqslant}
\newcommand{\Oh}{\mathcal{O}}
\newcommand{\N}{\mathbb{N}}
\newcommand{\R}{\mathbb{R}}
\newcommand{\thmref}[1]{Theorem~\ref{thm:#1}}
\newcommand{\lemref}[1]{Lemma~\ref{lem:#1}}
\newcommand{\corref}[1]{Corollary~\ref{cor:#1}}
\newcommand{\figref}[1]{Figure~\ref{fig:#1}}
\newcommand{\secref}[1]{Section~\ref{sec:#1}}
\newcommand{\eq}[1]{Equation~\eqref{eq:#1}}
\newcommand{\Prob}[1]{\mathrm{Pr}\left[\,#1\,\right]}
\newcommand{\Pro}[2]{\underset{\substack{#1}}{\mathrm{Pr}}\left[\,#2\,\right]}
\newcommand{\Ex}[1]{\mathbb{E}\left[\,#1\,\right]}
\newcommand{\Wlog}{W.\,l.\,o.\,g.\xspace}
\newcommand{\wlo}{w.\,l.\,o.\,g.\xspace}
\newcommand{\ie}{i.\,e.\xspace}
\newcommand{\eg}{e.\,g.\xspace}
\newcommand{\cf}{c.\,f.\xspace}
\newcommand{\aas}{a.\thinspace a.\thinspace s.\xspace}
\newcommand{\n}[1]{\overline{#1}}
\title{The Satisfiability Threshold for Non-Uniform Random 2-SAT\footnote{This paper is partially funded by the project \emph{Scalefree Satisfiability} (project no. 416061626) of the German Research Foundation (DFG).}}
\author[1]{Tobias Friedrich}
\author[1]{Ralf Rothenberger}
\affil[1]{\small Hasso Plattner Institute, University of Potsdam, Potsdam, Germany\\
  \texttt{firstname.lastname@hpi.de}}
\date{}
\begin{document}

\clearpage\maketitle

\thispagestyle{empty}
\begin{abstract}
Propositional satisfiability (SAT) is one of the most fundamental problems in computer science.
Its worst-case hardness lies at the core of computational complexity theory, for example in the form of NP-hardness and the (Strong) Exponential Time Hypothesis.
In practice however, SAT instances can often be solved efficiently.
This contradicting behavior has spawned interest in the average-case analysis of SAT and has triggered the development of sophisticated rigorous and non-rigorous techniques for analyzing random structures.

Despite a long line of research and substantial progress, most theoretical work
on random SAT assumes a \emph{uniform} distribution on the variables.
In contrast, real-world instances often exhibit large fluctuations in variable occurrence.
This can be modeled by a \emph{non-uniform} distribution of the variables,
which can result in distributions closer to industrial SAT instances.

We study satisfiability thresholds of non-uniform random $2$-SAT with $n$~variables and $m$~clauses and with an arbitrary ensemble of probability distributions $(\vec{p}^{(n)})_{n\in\N}$ with $p_1^{(n)}\ge p_2^{(n)}\ge\ldots\ge p_n^{(n)}>0$ for each $n\in\N$.
We show that for $p_{1}^2\in o\left(\sum_{i=1}^n p_i^2\right)$ there is a sharp satisfiability threshold at $m=1/(\sum_{i=1}^n p_i^2)$.
Otherwise, the asymptotic threshold is at $m\in\Theta((1-{\sum_{i=1}^n p_i^2})/(p_1\cdot(\sum_{i=2}^n p_i^2)^{1/2}))$ and the threshold is coarse.
This result generalizes the seminal works by Chvatal and Reed~[FOCS 1992] and by Goerdt~[JCSS 1996].
\end{abstract}

\newpage

\section{Introduction}

Satisfiability of Propositional Formulas (SAT) is one of the most thoroughly researched topics in theoretical computer science.
It was one of the first problems shown to be NP-complete by Cook~\cite{Cook71} and, independently, by Levin~\cite{Levin73}.
Today SAT stands at the core of many techniques in modern complexity theory, for example NP-completeness proofs \cite{Karp72} or running time lower bounds assuming the (Strong) Exponential Time Hypothesis \cite{Bringmann14,CyganNPPRW11,ImpagliazzoP01,ImpagliazzoPZ98}.

In addition to its importance for theoretical research, Propositional Satisfiability is also famously applied in practice.
Despite the theoretical hardness of SAT, many problems arising in practice can be transformed to SAT instances and then solved efficiently with state-of-the-art solvers.
Problems like hard- and software verification, automated planning, and circuit design are often transformed into SAT instances.
Such formulas arising from practical and industrial problems are therefore referred to as \emph{industrial SAT instances}.
The efficiency of SAT solvers on these instances suggests that they have a structure that makes them easier to solve than the theoretical worst-case.

\subsection{Uniform Random k-SAT and the satisfiability threshold conjecture:}

Random $k$-SAT is used to study the average-case complexity of Boolean Satisfiability.
In the model, a random formula $\Phi$ with $n$ variables, $m$ clauses, and $k$ literals per clause is generated in conjunctive normal form.
Each of these formulas has the same uniform probability to be generated.
Therefore, we also refer to this model as \emph{uniform random $k$-SAT}.

One of the most prominent questions related to studying uniform random $k$-SAT is trying to prove the \emph{satisfiability threshold conjecture}.
The conjecture states that for a uniform random $k$-SAT formula $\Phi$ with $n$ variables and $m$ clauses there is a real number $r_k$ such that
 \[
\lim_{n \to \infty} \Pr\left(\Phi \text{~is satisfiable}\right) =
\begin{cases}
  1 & m/n < r_k;\\
  0 & m/n > r_k.
\end{cases}
\]
Chvatal and Reed~\cite{chvatalreed92} and, independently, Goerdt~\cite{Goerdt1996threshold} proved the conjecture for $k=2$ and showed that $r_2=1$. 
For larger values of $k$ upper and lower bounds have been established, \eg, $3.52\le r_3\le4.4898$~\cite{DBLP:journals/tcs/DiazKMP09,HajiaghayiSorkin2003threshold,Kaporis2006probabilistic}.
Methods from statistical mechanics~\cite{mezard2002analytic} were used to derive a numerical estimate of $r_3\approx 4.26$.
Coja-Oghlan and Panagiotou ~\cite{Coja-Oghlan:2014:AKT:2591796.2591822,kostathreshold} showed a bound (up to lower order terms) for $k\ge3$ with 
\mbox{$r_k = 2^k \log 2 - \tfrac12 (1 + \log 2) \pm o_k(1)$}. 
Finally, Ding, Sly, and Sun~\cite{Ding:2015:PSC:2746539.2746619} proved the exact position of the threshold for sufficiently large values of $k$.
Still, for $k$ between 3 and the values determined by Ding, Sly, and Sun the conjecture remains open.

The satisfiability threshold is also connected to the average hardness of solving instances.
For uniform random $k$-SAT for example, the on average hardest instances are concentrated around the threshold~\cite{hardeasy}.

\subsection{Non-Uniform Random SAT}

There is a large body of work which considers other random SAT models, \eg regular random $k$-SAT~\cite{regkSAT2016A,regkSAT2005,regkSAT2016B,regkSAT2010}, random geometric $k$-SAT~\cite{geometricSAT} and $2+p$-SAT~\cite{twoplusp4,twoplusp3,twoplusp1,twoplusp2}.
However, most of these are not motivated by modeling the properties of industrial instances.
One such property is community structure~\cite{industrialSAT1}, \ie some variables have a bias towards appearing together in clauses.
It is clear by definition that such a bias does not exist in uniform random $k$-SAT.
Therefore, Gir\'aldez-Cru and Levy~\cite{giraldez2015modularity} proposed the Community Attachment Model, which creates random formulas with clear community structure.
However, the work of Mull et al.~\cite{communityHardness} shows that instances generated by this model have exponentially long resolution proofs with high probability, making them hard on average for solvers based on conflict-driven clause learning.

Another important property of industrial instances is their degree distribution.
The degree distribution of a formula~$\Phi$ is a function $f\colon\N\to\N$, where $f(x)$ denotes the fraction of different Boolean variables that appear $x$ times in $\Phi$ (negated or unnegated).
Instances created with the uniform random $k$-SAT model have a binomial distribution, while some families of industrial instances appear to follow a power-law distribution~\cite{ABL2009}, \ie $f(x)\sim x^{-\beta}$, where $\beta$ is a constant intrinsic to the instance.
Therefore, Ans\'otegui et al.~\cite{AnsoteguiBL09} proposed a random $k$-SAT model with a power-law degree distribution. 
Empirical studies by the same authors~\cite{AnsoteguiBGL14,AnsoteguiBGL15,ABL2009,AnsoteguiBL09} found that this distribution is beneficial for the runtime of SAT solvers specialized in industrial instances.
However, it looks like instances generated with their model can be solved faster than uniform instances, but not as fast as industrial ones: 
median runtimes around the threshold still seem to scale exponentially for several state-of-the-art solvers~\cite{AAAI17}.

Therefore, we want to consider a generalization of the model by Ans\'otegui et al.~\cite{ABL2009}.
Our model allows instances with \emph{any} given ensemble of variable distributions $(\vec{p}^{(n)})_{n\in\N}$ instead of only power laws:
We draw $m$ clauses of length $k$ over $n$ Boolean variables at random. 
For each clause the $k$ variables are drawn with a probability proportional to $\vec{p}^{(n)}$, the $n$-th distribution in the ensemble. 
Then, they are negated independently with a probability of $\nicefrac12$ each. 
This means, the probability ensemble is part of the model, but the number of variables $n$ determines which distribution from the ensemble we actually use. 
We call this model \emph{non-uniform random $k$-SAT} and denote it by $\mathcal{D}\left(n,k,(\vec{p}^{(n)})_{n\in\N},m\right)$.
Although $\mathcal{D}\left(n,k,(\vec{p}^{(n)})_{n\in\N},m\right)$ cannot capture all properties of industrial instances, e.g. community structure, it can help us to investigate the influence of the degree distribution on the structure and on the computational complexity of such instances in an average-case scenario. 

As one of the steps in analyzing this connection, we would like to find out for which ensembles of variable probability distributions an equivalent of the satisfiability threshold conjecture holds in non-uniform random $k$-SAT.
In previous works we already proved upper and lower bounds on the threshold position~\cite{ESA17} and showed sufficient conditions on sharpness~\cite{SAT18}.
In this work we are interested in actually determining the satisfiability threshold for $k=2$.
It has to be noted that Cooper et al.~\cite{cooper2sat} and Levy~\cite{levy17} already studied thresholds in a similar random 2-SAT model.
The difference is that in their models the degrees are fixed and the random instances determined in a configuration-model-like fashion, while in our model we only have a sequence of expected degrees from which the actual degrees might deviate.
Another difference is that we do a complete analysis of the model we consider, while they have additional constraints on their degree sequences.
However, if we assume the expected degrees that our model implies to be the actual degrees, the thresholds determined by Cooper et al. and by Levy coincide with the ones we derive for our model.

\subsection{Our Results} \label{sec:results}

We investigate the position and behavior of the satisfiability threshold for non-uniform random 2-SAT.
That is, we fix an ensemble of variable probability distributions $({\vec{p}}^{(n)})_{n\in\N}=(p_1^{(n)},p_2^{(n)},\ldots,p_n^{(n)})_{n\in\N}$ and a function $m\colon \N\to\N$ for the number of clauses, depending on the number of variables $n$ and analyze the probability of satisfiability asymptotically as $n$ increases.
To this end, we use the following definition and say that a function $m^\star\colon \N\to\N$ is an \emph{asymptotic satisfiability threshold} if for any function $m\colon \N\to\N$
\[\lim_{n\to\infty}\Pr_{\Phi\sim \mathcal{D}\left(n,k,(\vec{p}^{(n)})_{n\in\N},m(n)\right)}\left(\Phi\text{ satisfiable}\right)=\begin{cases}1&\text{if}\ m\in o(m^\star))\\0&\text{if}\ m\in \omega(m^\star).\end{cases}\]
We also say that an asymptotic satisfiability threshold $m^\star$ is \emph{sharp} if for all $\eps>0$
\[\lim_{n\to\infty}\Pr_{\Phi\sim \mathcal{D}\left(n,k,(\vec{p}^{(n)})_{n\in\N},m(n)\right)}\left(\Phi\text{ satisfiable}\right)=\begin{cases}1&\text{if}\ m\le(1-\eps)\cdot m^\star\\0&\text{if}\ m\ge(1+\eps)\cdot m^\star.\end{cases}\]
If an asymptotic threshold is not sharp, we call it \emph{coarse}.

For an ensemble of probability distributions $(\vec{p}^{(n)})_{n\in\N}=(p_1^{(n)},p_2^{(n)},\ldots,p_n^{(n)})_{n\in\N}$, \wlo we assume $p_1^{(n)}\ge p_2^{(n)} \ge \ldots \ge p_n^{(n)}$ for all $n\in\N$. 
We also interpret $p_i\colon \N\setminus [i-1]\to\R^+$ as a function in $n$ with $p_i(n)={p_i}^{(n)}$.
We will see that the threshold position and its sharpness depend on how the functions of the two highest variable probabilities $p_1$ and $p_2$ behave compared to the other variable probabilities.
Moreover, it depends on the asymptotic behavior of those values.
The squares of $p_1$ and $p_2$ will be compared to the sum of squares of the other variable probabilities, $\sum_{i=1}^n p_i^2$ and $\sum_{i=2}^n p_i^2$.
Note that these sums of squares are functions in $n$ as well.
The conditions on those functions can be checked if we know the ensemble of probability distributions that our model uses.
We are going to show that there are four cases depending on $p_1$, $p_2$, $\sum_{i=1}^n p_i^2$, and $\sum_{i=2}^n p_i^2$:
\begin{enumerate}
\item If $p_1^2\in o\left(\sum_{i=1}^n p_i^2\right)$, then there is a sharp threshold at exactly \[m^\star=\frac{1}{\sum_{i=1}^n p_i^2}.\]
\item If $p_1^2\in\Theta\left(\sum_{i=1}^n p_i^2\right)$ and $p_2^2\in o\left(\sum_{i=2}^n p_i^2\right)$, then the asymptotic threshold is at \[m^\star\in\Theta\left(\frac{1-\sum_{i=1}^n p_i^2}{p_1\cdot\left(\sum_{i=2}^n p_i^2\right)^{1/2}}\right)\] and it is coarse.
The coarseness stems from the emergence of an unsatisfiable sub-formula with $3$ variables and $4$ clauses.
Furthermore, we can show that there is a range of size $\Theta(m^\star)$ around the threshold in which the probability to generate satisfiable instances is a constant bounded away from zero and one in the limit.
\item If $p_1^2\in\Theta\left(\sum_{i=1}^n p_i^2\right)$ and $p_2^2\in\Theta\left(\sum_{i=2}^n p_i^2\right)$, then we can show that the asymptotic satisfiability threshold is at \[m^\star\in\Theta\left(\frac{1-\sum_{i=1}^n p_i^2}{p_1\cdot p_2}\right),\] which is proportional to $1/q_{\max}$, where $q_{\max}$ is the maximum clause probability.
We can also show that this threshold is coarse.
This time the coarseness stems from the emergence of an unsatisfiable sub-formula of size $4$, which contains only the two most probable variables.
Again, we can show that there is a range of size $\Theta(m^\star)$ around the threshold in which the probability to generate satisfiable instances is a constant bounded away from zero and one in the limit.
\item If none of the above cases apply, there is a threshold at \[m^\star\in\Theta\left(\frac{1-\sum_{i=1}^n p_i^2}{\sum_{i=2}^n p_i^2+p_1\cdot\left(\sum_{i=2}^n p_i^2\right)^{1/2}}\right).\]
The threshold is again coarse, but this time the probability in the threshold interval cannot be bounded.
\end{enumerate}
It is important to understand why we want to show that in the second and third case there is a range of size $\Theta(m^\star)$ around the threshold in which the probability to generate satisfiable instances can be bounded away from zero and one by a constant.
This implies that the probability cannot approach zero or one for some functions $m$ that are only a constant factor away from the threshold.
According to our definition of sharp thresholds (see definition~\ref{def:general-sharpness}), that means in those cases the threshold cannot be sharp, but must be coarse.
We will later see that the statements for those two cases also imply coarseness of the threshold in the last case.
Moreover, in all four cases the asymptotic threshold is at 
\[m^\star\in\Theta\left(\frac{1-\sum_{i=1}^n p_i^2}{\sum_{i=2}^n p_i^2+p_1\cdot\left(\sum_{i=2}^n p_i^2\right)^{1/2}}\right).\]
Together with the conditions on $p_1^2$ and $p_2^2$ this threshold function simplifies to the ones we stated in the first three cases, respectively.
The four cases give us a complete dichotomy of coarseness and sharpness for the satisfiability threshold of non-uniform random 2-SAT.
This result generalizes the seminal works by \citet{chvatalreed92} and by \citet{Goerdt1996threshold} to arbitrary ensembles of variable probability distributions and includes their findings as a special case (\cf \secref{2-SAT-examples}). 
We summarize our findings in the following theorem.
\begin{restatable}{theorem}{statemainthm}\label{thm:main}
Given an ensemble of probability distributions $(\vec{p}^{(n)})_{n\in\N}$ with $p_1^{(n)}\ge p_2^{(n)} \ge \ldots \geq p_n^{(n)}$ for all $n\in\N$.
\begin{enumerate}
\item If $p_1^2\in o\left(\sum_{i=1}^n p_i^2\right)$, then non-uniform random 2-SAT with this ensemble of probability distributions has a sharp satisfiability threshold at $m^\star=1/\sum_{i=1}^n p_i^2$. 
\item Otherwise, non-uniform random 2-SAT with this ensemble of probability distributions has a coarse satisfiability threshold at 
\[m^\star\in\Theta\left(\frac{1-\sum_{i=1}^n p_i^2}{\sum_{i=2}^n p_i^2+p_1\cdot\left(\sum_{i=2}^n p_i^2\right)^{1/2}}\right).\]
\end{enumerate}
\end{restatable}

\subsection{Techniques}

For the sharp threshold result, we only show the conditions on sharpness.
These also imply the existence of an asymptotic threshold.
For the coarse threshold results, however, we first have to show the existence of an asymptotic threshold function $m^\star$.
Then, we have to show that for some range of constants $\eps\in[\eps_1,\eps_2]$ the probability to generate a satisfiable instance at $\eps\cdot m^\star$ is a constant bounded away from zero and one in the limit.

We extend and generalize the proof ideas of Chvatal and Reed~\cite{chvatalreed92}.
In order to show a lower bound on the threshold, we investigate the existence of bicycles.
Bicycles were introduced by Chvatal and Reed. 
They are sub-formulas which appear in every unsatisfiable formula.
We can show with a first moment argument, that these do not appear below a certain number of clauses, thus making formulas satisfiable.

In order to show an upper bound on the threshold, we investigate the existence of snakes.
Snakes are unsatisfiable sub-formulas and have also been introduced by Chvatal and Reed.
We can show with a second-moment argument that snakes of certain sizes do appear above a certain number of clauses, thus making formulas unsatisfiable.
However, we need to be careful and distinguish more possibilities of partially mapping snakes onto each other than in the uniform case.
Unfortunately, this method does not work if the two largest variable probabilities are too large asymptotically.
In that case we lower-bound the probability that an unsatisfiable sub-formula containing only those two variables exists.
This can be done with a simple inclusion-exclusion argument and the resulting lemma also works for $k\ge 3$.

\section{Preliminaries}

\subsection{Mathematical Notation}

We use blackboard bold letters to denote number sets. 
$\N$ denotes the set of natural numbers including zero and $\R$ denotes the set of real numbers.
We let $\R^+$ denote the set of positive real numbers.
For any $x,y\in\R$ with $x\le y$ we let $[x,y]=\left\{z\in\R\mid x\le z\le y\right\}$ denote the closed interval of real numbers from $x$ to $y$.
We denote open intervals with round instead of square brackets.
For any $m,n\in\N$ we let $[m\ldots n]=[m,n]\cap\N$ and $[n]= [1\ldots n]$.
Also, we let $\mathcal{P}(\cdot)$ denote the power set and let $\mathcal{P}_k(\cdot)$ denote the set of cardinality-$k$ elements of the power set. 

For a real-valued function $f$ and $c\in\R$ we let $\lim_{x\to c} f(x)$ denote the limit of $f$ as $x$ approaches $c$.
For a sequence $a_1, a_2, \ldots$ of real numbers we let $\lim_{n\to\infty} a_n$ denote the limit of $a_n$ as $n$ approaches infinity.
It holds that $\lim_{n\to\infty} a_n=L$ if and only if for every real number $\eps>0$ there is an $n_0\in\N$ so that for all $n>n_0$ we have $|a_n-L|<\eps$.
Furthermore, we will use Landau notation.
That means, for two real-valued functions $f$ and $g$ defined on the same unbounded subset of $\R^+$ we use the following notation:
\begin{itemize}
\item $f\in\Oh(g)$ \tabto{1.55cm}$\Leftrightarrow$ $\exists\ \eps>0\  \exists\  n_0\  \forall\ n>n_0\colon |f(n)|\le \eps\cdot g(n)$,
\item $f\in\Theta(g)$ \tabto{1.55cm}$\Leftrightarrow$ $\exists\ \eps_1>0\ \exists\  \eps_2>0\  \exists\  n_0\  \forall\  n>n_0\colon \eps_1\cdot g(n)\le f(n)\le\eps_2\cdot g(n)$,
\item $f\in\Omega(g)$ \tabto{1.55cm}$\Leftrightarrow$ $\exists\  \eps>0\  \exists\  n_0\  \forall\  n>n_0\colon f(n)\ge\eps\cdot g(n)$,
\item $f\in o(g)$ \tabto{1.55cm}$\Leftrightarrow$ $\forall\  \eps>0\  \exists\  n_0\  \forall\  n>n_0\colon |f(n)|\le \eps\cdot g(n)$, and
\item $f\in \omega(g)$ \tabto{1.55cm}$\Leftrightarrow$ $\forall\  \eps>0\  \exists\  n_0\  \forall\  n>n_0\colon |f(n)|\ge \eps\cdot |g(n)|$.
\end{itemize}
The definitions of limits and Landau symbols will be used heavily when dealing with satisfiability thresholds in this paper.
Thus, it is important to state those definitions explicitly.
Another definition we use to compare functions is the following.
For two functions $f,g\colon X\to\R$ which are defined on the same domain $X$ we write $f\le g$ iff for all $x\in X$ it holds that $f(x)\le g(x)$.

\subsection{Boolean Satisfiability}

We analyze non-uniform random $k$-SAT on $n$ variables and $m$ clauses. 
We denote by $X_1, \ldots, X_n$ the Boolean variables. 
A clause is a disjunction of $k$ literals $\ell_1 \lor \ldots \lor \ell_k$, where each literal assumes a (possibly negated) variable. 
For a literal $\ell_i$ let $|\ell_i|$ denote the variable of the literal.
A formula $\Phi$ in conjunctive normal form (CNF) is a conjunction of clauses $c_1 \land \ldots \land c_m$. 
A formula is in \emph{k-CNF} if it is in CNF and each clause consists of exactly $k$ literals.
We conveniently interpret a clause $c$ both as a Boolean formula and as a set of literals. 
We say that $\Phi$ is satisfiable if there exists an assignment of variables $X_1, \ldots, X_n$ such that the formula evaluates to $1$.

\subsection{Non-Uniform Model} \label{sec:non-uniform}

In this section we introduce a generalization of random $k$-SAT, which we call \emph{non-uniform random $k$-SAT}.
The model is inspired by power-law random $k$-SAT and geometric random $k$-SAT by \citet{AnsoteguiBL09}.
These two models are also notable special cases of non-uniform random $k$-SAT.
As in random $k$-SAT we draw $m$ clauses independently at random.
However, the clause probabilities are now non-uniform.
This allows us to model different frequency distributions.
Formally, we define the model as follows.

\begin{definition}[Non-Uniform Random $k$-SAT]
\label{def:non-uniform-random-SAT} 
Let $m$, $n$, $k$ be given, and consider an ensemble of probability distributions $({\vec{p}}^{(n)})_{n\in\N}=({p_{1}}^{(n)},\ldots,{p_{n}}^{(n)})_{n\in\N}$, where each distribution ${\vec{p}}^{(n)}$ is defined over $n$ Boolean variables with ${p_{1}}^{(n)},\ldots,{p_{n}}^{(n)}>0$ and $\sum_{i=1}^n {p_{i}}^{(n)} = 1$. 
The \emph{non-uniform random $k$-SAT} model $\mathcal{D}^N(n, k, ({\vec{p}}^{(n)})_{n\in\N},\allowbreak m)$ constructs a random formula $\Phi$ in k-CNF by sampling $m$ clauses independently at random. Each clause is sampled as follows:
\begin{enumerate}
	\item Select $k$ variables independently at random according to the distribution ${\vec{p}}^{(n)}$. Repeat until no variables coincide.
	\item Negate each of the $k$ variables independently at random with probability $1/2$.
\end{enumerate}
\end{definition}

\Wlog we assume ${p_1}^{(n)}\ge {p_2}^{(n)}\ge \ldots {p_n}^{(n)}$ for all $n\in\N$. 
To simplify notation, we denote $p^{(n)}(X_i):=\Pr\left[X=X_i\right]={p_i}^{(n)}$.
Throughout this paper we will consider the limit behavior of probabilities in our ensembles.
Thus, based on an ensemble of discrete probability distributions $\left({\vec{p}}^{(n)}\right)_{n\in\N}$ we define for all $i\in\N$ the functions $p_i\colon \N\setminus [i-1]\to\R^+$ with $p_i(n)={p_i}^{(n)}$.
However, for the sake of brevity we omit the input parameter $n$ of those functions if it is not necessary, \ie most of the expressions we derive are actually functions in $n$ if not stated otherwise.

Non-uniform random $k$-SAT is equivalent to drawing each clause independently at random from the set of all $k$-clauses which contain no variable more than once.
The probability to draw a $k$-clause $c=(\ell_1\vee\ell_2\vee\ldots\vee\ell_k)$ over $n$ variables in this model is
\begin{equation}
q_c=\frac{\prod_{\ell \in c} p(|\ell|)}{2^k\sum_{J\in{\mathcal{P}_k\left(\left\{1,2,\ldots,n\right\}\right)}}{\prod_{j\in{J}}{p_{j}}}}. \label{eq:clause-prob}
\end{equation}
The factor $2^k$ in the denominator comes from the different possibilities to negate variables. 
Note that $k!\sum_{J\in{\mathcal{P}_k\left(\left\{1,2,\ldots,n\right\}\right)}}{\prod_{j\in{J}}{p_{j}}}$ is the probability of choosing a $k$-clause that contains no variable more than once. 
We define
\[C:=\left(k!\cdot\sum_{J\in{\mathcal{P}_k\left(\left\{1,2,\ldots,n\right\}\right)}}{\prod_{j\in{J}}{p_{j}}}\right)^{-1}\]
and write
\begin{equation}
q_c=C\cdot\frac{k!}{2^k} \cdot\prod_{\ell\in c} p(|\ell|). \label{eq:clausesample}
\end{equation}
However, for $k=2$ the factor $C$ simplifies to 
\begin{equation}\label{eq:C}C=\frac{1}{1-\sum_{i=1}^n p_i^2}.\end{equation}
Remember that both $C$ and $q_c$ are actually functions in $n$.
The representation of \eq{clausesample} makes clause probabilities easier to handle.
Since clauses are also drawn independently, the probability to generate a formula $\Phi$ with non-uniform random $k$-SAT essentially comes down to a product of variable probabilities for Boolean variables it contains. 
This makes the analysis of formulas a lot easier.
Throughout the paper we let $q_{\max}$ denote the maximum clause probability as defined in \eq{clausesample}.


The following are notable special cases of non-uniform random $k$-SAT from the literature.
They will serve as examples for applying our main theorem at the end of this work.

\subsubsection{Power-Law Random k-SAT}

Power-law random $k$-SAT was introduced by \citet{AnsoteguiBL09} as a more realistic model for industrial SAT instances.
The variable probabilities in this model follow a discrete power law with power-law exponent $\beta>2$.
More precisely for some fixed $\beta>2$ and some $n\in\N$ the distribution is $\vec{p}^{(n)}=\left(p_1^{(n)},p_2^{(n)}\ldots,p_n^{(n)}\right)$ with
\[p_i^{(n)}=\frac{(n/i)^{\frac{1}{\beta-1}}}{\sum_{j=1}^n (n/j)^{\frac{1}{\beta-1}}}.\]


\citet{AnsoteguiBL09} claim that instances generated with their model exhibit a satisfiability threshold. 
They experimentally determine the threshold position and examine the running time of state-of-the-art SAT solvers on instances generated at the threshold.
They observe that the running time of solvers can be controlled with the power-law exponent $\beta$.
With increasing exponent instances get more similar to those generated by random $k$-SAT.
Thus, solvers specialized in random instances perform better.
With small exponents, the performance of solvers specialized in industrial instances is better.
According to the authors this phenomenon can be used to generate instances on which the performance of state-of-the-art SAT solvers is comparable to the performance of those solvers on industrial instances.

However, \citeauthor{AnsoteguiBL09} do not determine the threshold position rigorously.
Furthermore they do not show theoretical bounds on the running times of the solvers they consider.
The results of this work complement theirs with regard to the threshold behavior of power-law random $2$-SAT.

In order to derive the results for power-law random $2$-SAT, we need the following bounds. 
For the sake of brevity, their proof can be found in the appendix.
\begin{restatable}{lemma}{stateplbounds}\label{lem:pl-aux1}
For the discrete power-law distribution $\vec{p}$ with exponent $\beta>2$ it holds that
\begin{align*}
p_i&=(1+o(1))\cdot\frac{\beta-2}{\beta-1}\cdot n^{-1}\cdot\left(\frac{n}{i}\right)^{1/(\beta-1)},\\
\sum_{i=1}^n {{p_i}}^2 &= \begin{cases}\Theta\left(n^{-2\frac{\beta-2}{\beta-1}}\right)&\text{for }\beta<3\\
\left(1\pm o(1)\right)\cdot\frac14\cdot\frac{\ln n}{n}&\text{for }\beta=3\\
\left(1\pm o(1)\right)\cdot\frac{(\beta-2)^2}{(\beta-3)\cdot(\beta-1)}\cdot n^{-1} & \text{for }\beta>3\text{, and}\end{cases}.
\end{align*}
\end{restatable}

\subsubsection{Geometric Random k-SAT}

Geometric Random $k$-SAT was introduced by \citet{AnsoteguiBL09} as an alternative to Power-law Random $k$-SAT.
In this model the variable probabilities are normalized terms of a geometric series with base $1/b$ for some constant $b>1$.
More precisely for some fixed $b>1$ and some $n\in\N$ the distribution is $\vec{p}^{(n)}=\left(p_1^{(n)},p_2^{(n)}\ldots,p_n^{(n)}\right)$ with
\[p_i^{(n)}=\frac{b\cdot(1-b^{-1/n})}{b-1}\cdot b^{-(i-1)/n}.\]

\citet{AnsoteguiBL09} determine the position of the satisfiability threshold for geometric random $k$-SAT experimentally.
For base parameter $b=1$ the model is equivalent to random $k$-SAT.
However, the authors observe that instances get easier as the base parameter $b$ increases.
As for power-law random $k$-SAT \citeauthor{AnsoteguiBL09} do not provide any rigorous results regarding the satisfiability threshold.
The results of this work complement theirs work with regard to the threshold of Geometric Random 2-SAT.

We will make use of the following lemma in order to derive our results. 
Its proof can be found in the appendix.

\begin{restatable}{lemma}{stategeombounds}\label{lem:g-aux1}
For the discrete geometric distribution $\vec{p}$ with base $b>1$ it holds that
\[\sum_{i=1}^n p_i^2 = \frac{b+1}{b-1}\cdot\frac{1-b^{-1/n}}{1+b^{-1/n}}=(1\pm o(1))\cdot\frac{b+1}{b-1}\cdot\frac{\ln b}{2}\cdot n^{-1}.\]
It also holds that
\[p_1=\frac{b\cdot(1-b^{-1/n})}{b-1}=(1-o(1))\cdot\frac{b\cdot\ln b}{b-1}\cdot n^{-1}.\]
\end{restatable} 

\subsection{Satisfiability Threshold}

If we fix the number of variables $n\in\N$, the clause length $k\in\N$, and a probability ensemble $\left(\vec{p}^{(n)}\right)_{n\in\N}$ and increase the number of clauses $m$ the probability that $\mathcal{D}^N$ generates satisfiable instances decreases.
This is not surprising, since each clause is a constraint on the satisfying assignments of a Boolean formula in $k$-CNF.
Thus, the more clauses a formula has, the more likely it is that the formula is not satisfiable.
This property of satisfiability is called \emph{monotonicity}.
More formally, if we have a sample space $V=\left\{0,1\right\}^N$, we call a property $P\subseteq V$ \emph{monotone} if 
\[\forall x\in P\ \forall y\in V\colon \left(\forall i\in[N]\colon y_i\ge x_i\right)\Rightarrow y\in P.\]
Intuitively, a property is monotone if adding additional elements to something with the property cannot violate it.
This is true for unsatisfiability of Boolean formulas:
If we have a set of clauses that is unsatisfiable, we cannot make it satisfiable by adding more clauses to it.

Monotonicity will play a crucial role in our proofs.
It implies that the probability for a property to hold increases with the scaling parameter we consider.
This holds for unsatisfiability and any other monotone property with regard to parameter $m$ of model $\mathcal{D}^N$ as we will show in the following lemma, the proof of which can be found in the appendix. 
\begin{restatable}{lemma}{statenondec}\label{lem:monotone-drawing}
Fix $n\in\N$, $k\in\N$, and a probability ensemble $\left(\vec{p}^{(n)}\right)_{n\in\N}$.
Let $P$ be any monotone property.
Then, the probability to generate an instance with property $P$ in $\mathcal{D}^N\left(n, k, \left(\vec{p}^{(n)}\right)_{n\in\N}, m\right)$ is non-decreasing in $m$.
\end{restatable}

However, for random $k$-SAT there is a point at which the probability to generate unsatisfiable instances suddenly increases from close to zero to close to one.
This point is called the \emph{satisfiability threshold}.
The range of $m$ in which the probability increases from close to zero to close to one is called the \emph{threshold interval}.
Formally, we define the satisfiability threshold as follows.



\begin{figure}[t!]
\centering
\includegraphics[width=0.45\linewidth]{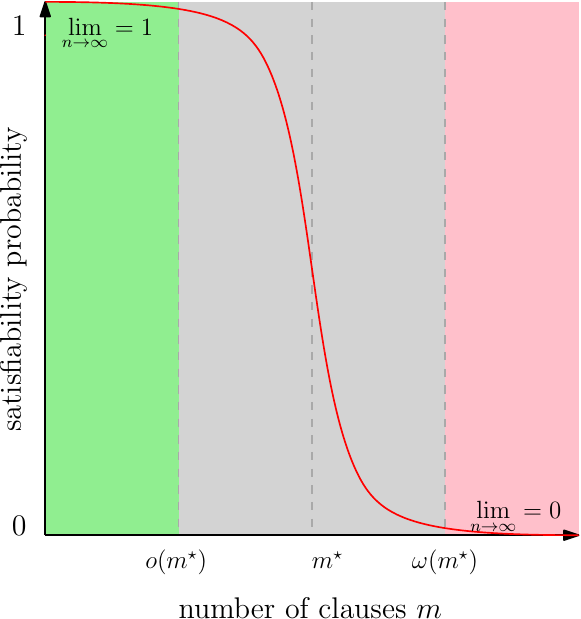}
\caption{Visual representation of a satisfiability threshold with asymptotic threshold function $m^\star$. 
For all functions $m\in o(m^\star)$ the probability tends to one (green region), for all functions $m\in\omega(m^\star)$ the probability tends to zero (red region), for all functions $m\in\Theta(m^\star)$ the probability is not restricted (gray region).}
\label{fig:asymptotic}
\end{figure}

\begin{definition}[Satisfiability Threshold]\label{def:general-threshold}
Fix a constant $k\in \N$ and an ensemble of probability distributions $({\vec{p}}^{(n)})_{n\in\N}$ and let $\mathcal{D}^N(n, k, ({\vec{p}}^{(n)})_{n\in\N},\allowbreak m)$ be a non-uniform random k-SAT model.
Let $m^\star$ and $m'$ be functions, which may depend on the other parameters of $\mathcal{D}$. 
$m^\star$ is an \emph{asymptotic threshold function for satisfiability} of $\mathcal{D}$ if for every $m'$
\begin{equation*}
    \lim_{n\to\infty}\Pro{\Phi\sim \mathcal{D}^N(n, k, ({\vec{p}}^{(n)})_{n\in\N},\allowbreak m'(n))}{\text{$\Phi$ satisfiable}}=
    \begin{cases}
      1, & \text{if}\ m'\in o_n(m^\star) \\
      0, & \text{if}\ m'\in\omega_n(m^\star).
    \end{cases}
\end{equation*}
We say that a satisfiability threshold with respect to $m$ exists if there is an asymptotic threshold function for satisfiability.
\end{definition}

It is important to realize what this definition actually says.
For example, we know that for random $k$-SAT with $k\ge2$ there is a satisfiability threshold and the asymptotic threshold function is $m^\star=n$~\cite{AchlioptasP04, KirousisKKS98}.
This means, if we draw an instance with $m=n^{0.5}\in o(n)$ clauses, then the probability that this instance is satisfiable is close to one.
If we draw an instance with $m=n^{1.2}\in \omega(n)$ clauses, then the probability that this instance is satisfiable is close to zero.
However, if we draw an instance with $m=\eps\cdot n\pm o(n)$ clauses for any constant $\eps>0$, we do not know what happens.
This is in line with our definition of a satisfiability threshold:
We do not care what happens at $m\in\Theta(n)$, as long as the probability to generate satisfiable instances is $1-o(1)$ for $m\in o(n)$ and $o(1)$ for $m\in \omega(n)$.
See Figure~\ref{fig:asymptotic} for a visual representation.

But what if we want to know what happens at $m\in\Theta(n)$?
There are two ways that the probability function could behave in the range $\Theta(n)$.
Either, there is a small interval of size $o(n)$, where it suddenly drops from close to one to close to zero.
If this is the case, we call the threshold \emph{sharp}.
This is what we observe for random $k$-SAT.
Intuitively, a sharp threshold means that the size of the threshold interval grows asymptotically slower than the actual threshold.
Thus, the threshold interval seems to vanish in the limit, which makes the probability function look more and more like a step function.
If the threshold is not sharp, we call it \emph{coarse}.
This could mean that the function decreases more slowly, in an interval of size $\Theta(n)$, in which the probability function is bounded away from zero and one by a constant.
However, it could also mean that the limit of the probability function is not defined in that region.
Formally, we define sharp and coarse thresholds as follows.

\begin{definition}[Sharpness]\label{def:general-sharpness}
Fix a constant $k\in \N$ and an ensemble of probability distributions $({\vec{p}}^{(n)})_{n\in\N}$ and let $\mathcal{D}^N(n, k, ({\vec{p}}^{(n)})_{n\in\N},\allowbreak m)$ be a non-uniform random k-SAT model.
Let $m^\star$ be an asymptotic threshold function of $\mathcal{D}$. 
We call $m^\star$ \emph{sharp} if for every function $m'$ and every constant $\eps>0$
\begin{equation*}
    \lim_{n\to\infty}\Pro{\Phi\sim \mathcal{D}^N(n, k, ({\vec{p}}^{(n)})_{n\in\N},\allowbreak m'(n))}{\text{$\Phi$ satisfiable}}=
    \begin{cases}
      1, & \text{if}\ m'\le(1-\eps)\cdot m^\star \\
      0, & \text{if}\ m'\ge(1+\eps)\cdot m^\star. 
    \end{cases}
\end{equation*}
Otherwise we call $m^\star$ coarse.
\end{definition}

\begin{figure}[t]
\centering
\begin{subfigure}[t]{.45\linewidth}
	\centering
	\includegraphics[width=\linewidth]{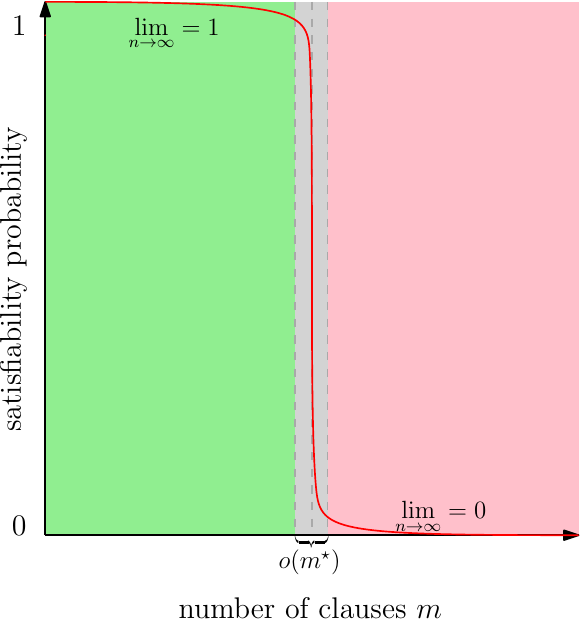}
	\caption{Sharp threshold: $m^\star$ is an asymptotic threshold function. For any constant $\eps>0$ the probability tends to one at $m=(1-\eps)\cdot m^\star$ (green region) and to zero at $m=(1+\eps)\cdot m^\star$ (red region).
	The range where the function is not restricted (gray region) is of size $o(m^\star)$.}
\end{subfigure}
\hfill
\begin{subfigure}[t]{.45\linewidth}
	\centering
	\includegraphics[width=\linewidth]{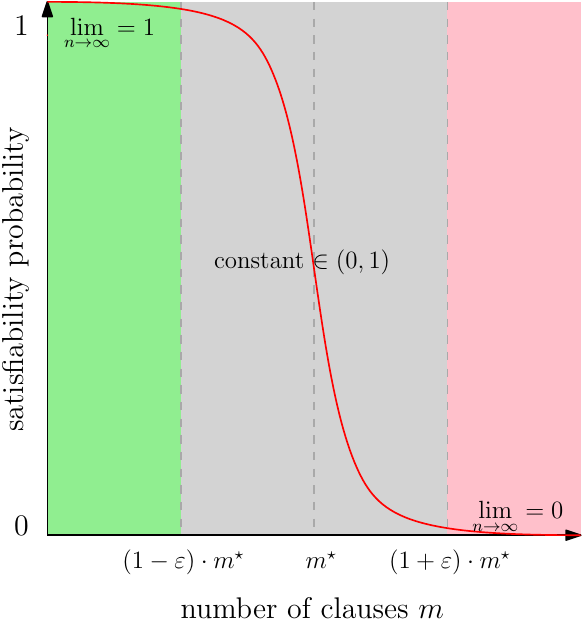}
	\caption{Coarse threshold: $m^\star$ is an asymptotic threshold function. There is a constant $\eps>0$ such that the satisfiability probability is bounded away from zero and one by a constant for all $m\in[(1-\eps)\cdot m^\star, (1+\eps)\cdot m^\star]$.}
\end{subfigure}
\caption{Visual representation of a sharp and a coarse satisfiability threshold with asymptotic threshold function $m^\star$.}
\label{fig:sharp}
\end{figure}

Those two cases are mutually exclusive and one of the two has to hold if an asymptotic threshold function exists.
Thus, if there is a satisfiability threshold, we can always classify it as either sharp or coarse.
See Figure~\ref{fig:sharp} for a visual representation of sharp and coarse thresholds.
\citet{friedgut1999thresholds} proved that random $k$-SAT with $k\ge2$ has a sharp threshold, although he did not determine the exact threshold function.

\section{What we are going to show} \label{sec:what}

First, we are going to discuss which kinds of results we are going to show and why we do not show something more intuitive.
Our results will assume certain relations between the functions $m$ and $m^\star$, $p_1$ and $\sum_{i=1}^n p_i^2$, and $p_2$ and $\sum_{i=2}^n p_i^2$.
Intuitively, those relations would be in terms of Landau notation as is suggested by the results we want to show for non-uniform random 2-SAT.
However, we are only going to assume that functions are smaller or bigger than other functions by some constant factor that is either given or that we can choose.
Instead of assuming $p_1^2\in o(\sum_{i=1}^n p_i^2)$, we only assume that we can choose a constant $\eps_1\in(0,1)$ so that $p_1^2\le\eps_1\cdot\sum_{i=1}^n p_i^2$.
This condition is implied by $p_1^2\in o(\sum_{i=1}^n p_i^2)$ for sufficiently large $n$.
Instead of assuming $p_1^2\in \Theta(\sum_{i=1}^n p_i^2)$, we assume that there is some constant $\eps_1\in(0,1)$ so that $p_1^2\ge\eps_1\cdot\sum_{i=1}^n p_i^2$.
Again, this condition is implied by $p_1^2\in \Theta(\sum_{i=1}^n p_i^2)$ for sufficiently large $n$.
Equivalently, we consistently use the factors $\eps_2$ and $\eps_m$ to define relationships between $p_2$ and $\sum_{i=2}^n p_i^2$, and $m$ and $m^\star$, respectively.
Additionally, we will use the placeholder $\eps$ without any index if we refer to relations from previous results that will only be used in a very local scope.
This is mainly to avoid using the same notation twice.

We choose to use these requirements for our results, because it will allow us to prove something in absence of asymptotic behavior as well, \ie in the fourth case mentioned in \secref{results}.
Remember what we want to show: If we assume the existence of a satisfiability threshold at some function $m^\star$, then 
\begin{equation*}
    \lim_{n\to\infty}\Pro{\Phi\sim \mathcal{D}^N\left(n,2,\left(\vec{p}^{(n)}\right)_{n\in\N},m\right)}{\text{$\Phi$ satisfiable}}=
    \begin{cases}
      1, & \text{if}\ m\in o_n(m^\star) \\
      0, & \text{if}\ m\in \omega_n(m^\star).
    \end{cases}
\end{equation*}
Let us concentrate on the case that $m\in o(m^\star)$.
Remembering the definition of limits, we want for any constant $\eps_P\in(0,1)$ that there is an $n_0\in\N$ so that for all $n\ge n_0$ we have \[\Pro{\Phi\sim \mathcal{D}^N\left(n,2,\left(\vec{p}^{(n)}\right)_{n\in\N},m\right)}{\text{$\Phi$ satisfiable}}\ge\eps_P.\]
We are going to show results of the following flavor:
\begin{enumerate}
\item Assume we can choose $\eps_1\in(0,1)$ with $p_1^2\le\eps_1\cdot \sum_{i=1}^n p_i^2$. Then for any given $\eps_m\in(0,1)$ with $m\le\eps_m\cdot m^\star$ and for any given $\eps_P\in(0,1)$, we can reach a probability of at least $\eps_P$ by choosing $\eps_1$ small enough.\label{enum:first}
\item Assume we are given $\eps_1\in(0,1)$ with $p_1^2\ge\eps_1\cdot \sum_{i=1}^n p_i^2$ and we can choose $\eps_m\in(0,1)$ with $m\le\eps_m\cdot m^\star$. Then, for any given $\eps_P\in(0,1)$, we can choose $\eps_m$ small enough to reach a probability of at least $\eps_P$.
\end{enumerate}
Let us now consider what these two results imply.
Assume we are given some $\eps_P\in(0,1)$ and some $m\in o(m^\star)$.
We have to show that the probability to generate satisfiable instances at $m$ is at least $\eps_P$ for all sufficiently large $n$.
First we note that, if $m\in o(m^\star)$, then for all $\eps_m\in(0,1)$ there is some $n_0\in\N$ so that for all $n\ge n_0$ we have $m\le\eps_m\cdot m^\star$.

If $p_1^2\in o(\sum_{i=1}^n p_i^2)$, then for every $\eps_1\in(0,1)$ there is some $n_0\in\N$ so that $p_1^2\le\eps_1\cdot \sum_{i=1}^n p_i^2$ holds for all $n\ge n_0$.
Due to the first result we can now simply choose some $\eps_m\in(0,1)$ (for example $\eps_m=\frac{1}{2}$) and choose $\eps_1$ small enough so that the resulting probability is at least $\eps_P$.
The requirements $p_1^2\in o(\sum_{i=1}^n p_i^2)$ and $m\in o(m^\star)$ guarantee that there is some $n_0\in\N$ so that both requirements hold for all $n\ge n_0$.

If $p_1^2\in\Theta(\sum_{i=1}^n p_i^2)$, then there are $\eps_1\in(0,1)$ and $n_0\in\N$ so that $p_1^2\ge\eps_1\cdot \sum_{i=1}^n p_i^2$ holds for all $n\ge n_0$.
For this value of $\eps_1$, we can now simply choose an $\eps_m$ small enough so that the resulting probability is at least $\eps_P$.
Again, this requirement is fulfilled for all sufficiently large $n$, since $m\in o(m^\star)$.

The last case is that neither $p_1^2\in o(\sum_{i=1}^n p_i^2)$ nor $p_1^2\in \Theta(\sum_{i=1}^n p_i^2)$.
First, we assume to be able to choose $\eps_1$.
Like in the first case, we choose some $\eps_m\in(0,1)$ (\eg $\eps_m=\frac{1}{2}$) and evaluate how small $\eps_1$ has to be in order to have a probability of at least $\eps_P$ due to our first result.
For sufficiently large $n$ the requirement $m\le\eps_m\cdot m^\star$ will be fulfilled.
However, $p_1^2\le\eps_1\cdot \sum_{i=1}^n p_i^2$ might not.
For all values of $n$ that fulfill the requirement, we are done already.
Thus, we only have to consider what happens if $p_1^2>\eps_1\cdot \sum_{i=1}^n p_i^2$.
For those values of $n$ we can simply use the second result and evaluate an $\eps_m\in(0,1)$ small enough so that we reach the given probability $\eps_P$.
Again, this requirement is fulfilled for sufficiently large $n$ due to $m\in o(m^\star)$.

This was a simplified example of what our results will look like and how we are going to use them to show the statements in \secref{results}.
We will show our results for slightly different functions $m^\star$.
However, these functions will asymptotically coincide as we will see later.

\subsection{How we are going to show it}

Another note on how we will derive our results might be necessary at this point.
As stated in \secref{non-uniform}, the probability to draw a certain clause is proportional to the product of variable probabilities for Boolean variables it contains. For example, the probability to draw a clause $c=(X_i\vee \n{X_j})$ is
\begin{equation}\label{eq:2-sat-clause}
\Pr{(X_i\vee \n{X_j})}=q_c=\frac{C}{2}\cdot p_i\cdot p_j,
\end{equation}
where $C=1/\left(k!\cdot\sum_{J\in{\mathcal{P}_k\left(\left\{1,2,\ldots,n\right\}\right)}}{\prod_{j\in{J}}{p_{j}}}\right)$ is the same for all clauses.
However, for $k=2$ the factor $C$ simplifies to $C=1/\left(1-\sum_{i=1}^n p_i^2\right)$.

Since clauses are also drawn independently, it holds that the probability of drawing a certain formula in non-uniform random $k$-SAT is proportional to the product of variable probabilities for each appearance of a Boolean variable in it.
For example, the probability of drawing $\Phi=(X_i\vee \n{X_j})\wedge(\n{X_h}\vee \n{X_i})$ is 
\[\Pro{\Phi\sim\mathcal{D}^N\left(n, 2, \left(\vec{p}^{(n)}\right)_{n\in\N}, 2\right)}{\Phi=(X_i\vee \n{X_j})\wedge(\n{X_h}\vee \n{X_i})}=\left(\frac{C}{2}\right)^2 \cdot p_h\cdot p_i^2\cdot p_j.\]
We will use this fact heavily in our analysis.
If we want to know the probability of drawing a formula, we only have to know which Boolean variables it contains how often.
Furthermore, we can also use this fact if we do not know the exact variables, but only how often they appear.
For example, if we are looking for a formula $\Phi=(\ell_i\vee \n{\ell_j})\wedge(\n{\ell_h}\vee \n{\ell_i})$, where $\ell_h$, $\ell_i$, and $\ell_j$ are literals of distinct Boolean variables, the probability is proportional to
\[\left(\sum_{h=1}^n p_h\right)\cdot\left(\sum_{\substack{i=1\\i\neq h}}^n p_i^2\right)\cdot\left(\sum_{\substack{j=1\\ j\neq h,\ j\neq i}}^n p_j\right)
\le\left(\sum_{h=1}^n p_h\right)\cdot\left(\sum_{i=1}^n p_i^2\right)\cdot\left(\sum_{j=1}^n p_j\right).\]

The following lemma shows how we can bound expressions of that kind.
It applies to situations where a set of variables all appear the same number of times and we already accounted for the possible ways to arrange them in clauses.
For example, if we want the probability for a formula $\Phi=(\ell_h\vee {\ell_i})\wedge({\ell_i}\vee {\ell_j})\wedge({\ell_j}\vee {\ell_h})$, where $\ell_h$, $\ell_i$, and $\ell_j$ are literals of distinct Boolean variables, the probability is proportional to
\[3!\cdot\sum_{\substack{A\in\mathcal{P}_3([n])}} \prod_{a\in A} p_a^2,\]
where $3!$ accounts for the possibilities to interchange the chosen variables.

\begin{lemma}\label{lem:aux1}
For every set $S\subseteq\left\{1,\ldots,n\right\}$, every integer $i\le|S|$, and every integer $l\ge 1$ it holds that
\[\sum_{A\in \mathcal{P}_i(S)} \prod_{a \in A} p_a^l \le \frac{1}{i!} \left(\sum_{s\in S} p_s^l\right)^i.\]
If $\left(\sum_{s\in S} p_s^l\right)\ge (i-1)\cdot \max\left\{p_s^l \mid s\in S\right\}$, then it also holds that
\[\sum_{A\in \mathcal{P}_i(S)} \prod_{a \in A} p_a^l \ge \frac{1}{i!} \left(\left(\sum_{s\in S} p_s^l\right)-(i-1)\cdot \max\left\{p_s^l \mid s\in S\right\}\right)^i.\]
\end{lemma}
\begin{proof}
For the first part, notice that each product $\prod_{a \in A} p_a^l$ for some $A\in \mathcal{P}_i(S)$ appears $i!$-times in $\left(\sum_{s\in S} p_s^l\right)^i$.
For the second part, $\sum_{A\in \mathcal{P}_i(S)} \prod_{a \in A} p_a^l$ can be expressed as the following nested sum
\[\sum_{A\in \mathcal{P}_i(S)} \prod_{a \in A} p_a^l = \frac{1}{i!}\cdot\sum_{a_1\in A} \left(p_{a_1}^l \cdot \sum_{a_2\in A\setminus\left\{a_1\right\}} \left(p_{a_2}^l \cdot \ldots\cdot\sum_{a_{i}\in A\setminus\left\{a_0, \ldots, a_{i-1}\right\}} p_{a_i}^l\right)\right).\]
This sum essentially captures the choices of elements we have for each term, where $a_j$ is the $j$-th chosen element for $j=1,\ldots,i$.
Since we only forbid repetitions of elements, the $j$-th element can be anything from $S\setminus\left\{a_1,a_2,\ldots,a_{j-1}\right\}$.
Again, we generate each product $i!$ times on the right-hand side.
If we pessimistically assume that forbidden elements have the maximum value $\max\left\{p_s^l \mid s\in S\right\}$, we get
\begin{align*}
&\sum_{A\in \mathcal{P}_i(S)} \prod_{a \in A} p_a^l \\
	& \ge \frac{1}{i!}\sum_{a_1\in A}\left( p_{a_1}^l \cdot \sum_{a_2\in A\setminus\left\{a_1\right\}}\left( p_{a_2}^l \cdot \ldots\cdot\left(\left(\sum_{s\in S} p_s^l\right)-(i-1)\cdot \max\left\{p_s^l \mid s\in S\right\}\right)\right)\right)\\
	& \ge \frac{1}{i!} \left(\left(\sum_{s\in S} p_s^l\right)-(i-1)\cdot \max\left\{p_s^l \mid s\in S\right\}\right)^i.
\end{align*}
Now we also see why the requirement for this second statement is necessary.
\end{proof}

We will use the bounds of the former lemma heavily in the remainder of this work.

\section{Bi-Cycles and a Lower Bound on the Satisfiability Threshold}\label{sec:bicycle}

In this section we introduce the concept of bicycles and derive a lower bound on the position of the satisfiability threshold.

\citet{chvatalreed92} define the following sub-structure of 2-SAT formulas and show that every unsatisfiable formula in 2-CNF contains this substructure.
\begin{definition}[bicycle]
Let $X_1,X_2,\ldots,X_t$ be $t$ distinct Boolean variables and let $w_1,w_2,\ldots,w_t$ be literals such that each $w_l$ is either $X_l$ or $\n{X_l}$.
We define a bicycle of length $t$ to be a sequence of $t+1$ clauses of the form 
\[\left(u,w_1\right),\left(\n{w_1},w_2\right),\ldots,\left(\n{w_{t-1}},w_t\right),\left(\n{w_t},v\right),\]
where $u,v\in\left\{w_1,\ldots,w_t,\n{w}_1,\ldots,\n{w}_t\right\}$.
\end{definition}
Although a bicycle itself might not be unsatisfiable, \citet{chvatalreed92} prove that \emph{every unsatisfiable Boolean formula in 2-CNF must contain a bicycle}.
We can use this knowledge in the following way:
We show that up to a certain number of clauses $m^\star$ the random formulas our model generates \aas do not contain \emph{any} bicycles.
Thus, they must be satisfiable.
In order to bound the probability for bicycles to appear, we use the first moment method.
This means, we bound the expected number of bicycles that appear.
If this number is $o(1)$, we can use Markov's inequality (c.f. \cite{MitzenmacherU05}) to bound the probability of them appearing as desired.
The same approach was used in the proof of Theorem~3 from \cite{chvatalreed92}.

First, we consider the case $p_1^2\in o(\sum_{i=1}^n p_i^2)$.
We define our threshold function to be $m^\star=(\sum_{i=1}^n p_i^2)^{-1}$.
We want to show that this function defines a sharp satisfiability threshold for non-uniform random $2$-SAT.
Remember our definition of a sharp satisfiability threshold.
We need to show that for any constant $\eps_m\in(0,1)$ and all functions $m\le\eps_m\cdot m^\star$ the probability to generate a satisfiable instance is a function tending to one as $n$ increases.
However, as we wrote in the last section, we are going to show something a bit more general.
We will show that, given $\eps_m\in(0,1)$ and $\eps_P\in(0,1)$, we can choose a constant $\eps_1\in(0,1)$ with $p_1^2\le\eps_{1}\cdot(\sum_{i=1}^n p_i^2)$ small enough so that the probability to generate a satisfiable instance is at least $\eps_P$.
If $p_1^2\in o\left(\sum_{i=1}^n p_i^2\right)$, then there is an $n_0\in\N$ such that this condition holds for all $n\ge n_0$.

\begin{lemma} \label{lem:sharp-lb}
Given an ensemble of probability distributions $\left(\vec{p}^{(n)}\right)_{n\in\N}$.
Let $m^\star=1/\sum_{i=1}^n p_i^2$.
Then, for any constant $\eps_m\in(0,1)$ and any constant $\eps_P\in(0,1)$ we can choose $\eps_1\in(0,1)$ such that $p_1^2\le\eps_{1}\cdot(\sum_{i=1}^n p_i^2)$ implies
\[\Pro{\Phi\sim\mathcal{D}^N\left(n,2,\left(\vec{p}^{(n)}\right)_{n\in\N},m\right)}{\Phi\text{ satisfiable}}\ge \eps_P\]
for any $m\le\eps_m\cdot m^\star$.
\end{lemma}

\begin{proof}
To show this result, we show that the expected number of bicycles is at most $1-\eps_P$ for the setting we consider. 
The result then follows by Markov's inequality.

First, choose $n$ arbitrary, but fixed.
We want to evaluate the value of the probability function for this value of $n$ and the number of clauses prescribed by the clause function $m(n)$.
We fix a set $S\subseteq\left[n\right]$ of variables to appear in a bicycle with $\left|S\right|=t\ge 2$ .
The probability that a \emph{specific} bicycle $B$ with these variables appears in $\Phi$ is
\[\Prob{B\text{ in }\Phi} \le \underbrace{\binom{m}{t+1}\cdot(t+1)!}_{\text{positions of $B$ in $\Phi$}}\cdot\Prob{(u\vee w_1)}\cdot\Prob{(\n{w_t}\vee v)}\cdot\prod_{h=1}^{t-1}{\Prob{(\n{w_h}\vee w_{h+1})}}.\]
$\Prob{(w_j\vee w_i)}$ denotes the probability to draw clause $(w_j\vee w_i)$ in non-uniform random $2$-SAT.
There are at most $t!$ possibilities to arrange the $t$ variables in a bicycle and $2^t$ possibilities to choose literals from the $t$ variables. 
For the probability that \emph{any} bicycle with the variables from $S$ appears in $\Phi$ it now holds that
\[\Prob{S\text{-bicycle in }\Phi} \le m^{t+1}\cdot t!\cdot 2^t\cdot\left(\frac{C}{2}\right)^{t+1}\cdot\prod_{i\in S}{p_i^2}\cdot\left(2\cdot\sum_{i\in S}{p_i}\right)^2,\]
where the last factor accounts for the possibilities to choose $u$ and $v$.
It now holds that

\begin{align*}
\Prob{\text{bicycle in }\Phi} 
&\le \sum_{t=2}^{n}\left({\sum_{S\in \mathcal{P}_t([n])}\left(m^{t+1}\cdot t!\cdot 2^t\cdot\left(\frac{C}{2}\right)^{t+1}2^2\cdot\prod_{i\in S}{p_i^2}\cdot\left(\sum_{i\in S}{p_i}\right)^2\right)}\right).\\
\intertext{If we estimate $\sum_{i\in S}{p_i}\le t\cdot p_1$, we get}
&\le 2\cdot\sum_{t=2}^{n}\left(\left(C\cdot m\right)^{t+1}\cdot t!\cdot t^2\cdot p_1^2\cdot\sum_{S\in \mathcal{P}_t([n])}\left(\prod_{i\in S}{p_i^2}\right)\right)\\
\intertext{and with $\sum_{S\in \mathcal{P}_t([n])}\left(\prod_{i\in S}{p_i^2}\right)\le \frac{1}{t!}\cdot\left(\sum_{i=1}^n{p_i^2}\right)^t$ due to \lemref{aux1} this yields}
&\le 2\cdot\sum_{t=2}^{n}\left(\left(C\cdot m \right)^{t+1}\cdot t^2\cdot p_1^2\cdot \left(\sum_{i=1}^n{p_i^2}\right)^t\right).\\
\intertext{Since $m\le\eps_m\cdot m^\star=\frac{\eps_m}{\sum_{i=1}^n{p_i^2}}$, this is}
&\le 2\cdot\frac{p_1^2}{\sum_{i=1}^n{p_i^2}}\cdot\sum_{t=2}^{n}\left(C\cdot \eps_m\right)^{t+1}\cdot t^2.
\end{align*}

Now, it holds that $C=\frac{1}{1-\sum_{i=1}^n p_i^2}\le 1+\frac{p_1}{1-p_1}$, since $\sum_{i=1}^n p_i^2\le p_1$. 
Thus,

\begin{align*}
\Prob{\text{bicycle in }\Phi} 
&\le 2\cdot\frac{p_1^2}{\sum_{i=1}^n{p_i^2}}\cdot\sum_{t=2}^{n}\left(\left(\left(1+\frac{p_1}{1-p_1}\right)\cdot \eps_m\right)^{t+1}\cdot t^2\right)\\
&\le 2\cdot\frac{p_1^2}{\sum_{i=1}^n p_i^2}\cdot\sum_{t=2}^{\infty}\left(\left(\left(1+\frac{p_1}{1-p_1}\right)\cdot \eps_m\right)^{t+1}\cdot t^2\right).
\end{align*}

We know that $p_1\le \sqrt{\eps_1\cdot \sum_{i=1}^n p_i^2}\le \sqrt{\eps_1}$.
Thus, if we choose $\eps_1$ small enough such that 
\[\left(1+\frac{p_1}{1-p_1}\right)\cdot \eps_m\le \left(1+\frac{\sqrt{\eps_1}}{1-\sqrt{\eps_1}}\right)\cdot \eps_m<1,\]
then 
\[\left(\sqrt{\left(1+\frac{\sqrt{\eps_1}}{1-\sqrt{\eps_1}}\right)\cdot \eps_m}\right)^{t+1}\cdot t^2\in o(1).\]
Thus, there is some $t_0$ such that for all $t\ge t_0$ this function is at most $1$.
Therefore,
\begin{align*}
\Prob{\Phi\text{ contains a bicycle}} 
&\le 2\cdot\frac{p_1^2}{\sum_{i=1}^n p_i^2}\cdot\sum_{t=2}^{\infty}\left(\left(1+\frac{p_1}{1-p_1}\right)\cdot \eps_m\right)^{t+1}\cdot t^2\\
&\le 2\cdot\frac{p_1^2}{\sum_{i=1}^n p_i^2}\cdot\left(t_0^3+\sum_{t=t_0}^{\infty}\left(\sqrt{\left(1+\frac{\sqrt{\eps_1}}{1-\sqrt{\eps_1}}\right)\cdot \eps_m}\right)^{t+1}\right)\\
&\le 2\cdot\eps_1\cdot\left(t_0^3+\frac{\sqrt{(1+\frac{\sqrt{\eps_1}}{1-\sqrt{\eps_1}})\cdot \eps_m}}{1-\sqrt{(1+\frac{\sqrt{\eps_1}}{1-\sqrt{\eps_1}})\cdot \eps_m}}\right),
\end{align*}
where the second term was bounded by a geometric series.
If we choose $\eps_1$ sufficiently small, this expression is at most $1-\eps_P$.
\end{proof}

We now turn to the case that $p_1^2\notin o(\sum_{i=1}^n p_i^2)$.
We are going to show that there is an asymptotic threshold at 
\[m^\star=\left(C\cdot \left(\sum_{i=2}^n p_i^2\right)+C\cdot p_1\cdot \left(\sum_{i=2}^n p_i^2\right)^{1/2}\right)^{-1}.\]
However, we are going to show something a bit more general.
We only assume that $p_1^2\ge\eps_1\cdot(\sum_{i=1}^n p_i^2)$ for some constant $\eps_1>0$.
If $p_1^2\in\Theta(\sum_{i=1}^n p_i^2)$, then there is some $n_0\in\N$ such that this holds for all $n\ge n_0$.
Under this condition, we will show that for any $\eps_P\in(0,1)$ we can choose an $\eps_m\in(0,1)$ with $m\le\eps_m\cdot(C\cdot p_1\cdot (\sum_{i=2}^n p_i^2)^{1/2})^{-1}$ so that the probability to generate a satisfiable instance is at least $\eps_P$.
If $m\in o\left((C\cdot p_1\cdot (\sum_{i=2}^n p_i^2)^{1/2})^{-1}\right)$, this condition is met for all sufficiently large $n$.
However, it is also met if $m\in o(m^\star)$.
We will show this in more detail in \secref{main-theorem}.

\begin{lemma}\label{lem:bicycle-coarse}
Given an ensemble of probability distributions $\left(\vec{p}^{(n)}\right)_{n\in\N}$ with $p_1^2\ge\eps_1\cdot(\sum_{i=1}^n p_i^2)$ for some constant $\eps_1\in(0,1)$.
Let $m^\star=(C\cdot p_1\cdot (\sum_{i=2}^n p_i^2)^{1/2})^{-1}$.
Then, for any $\eps_P\in(0,1)$ we can choose an $\eps_m\in(0,1)$ such that the probability to generate a satisfiable formula $\Phi\sim\mathcal{D}^N(n,2,(\vec{p}^{(n)})_{n\in\N},m)$ is at least $\eps_P$ if $m\le\eps_m\cdot m^\star$.
\end{lemma}
\begin{proof}
As in the proof of \lemref{sharp-lb} it holds that
\begin{align}
\Prob{\Phi\text{ unsat}}
&\le \Prob{\text{bicycle in }\Phi}\notag\\
&\le \sum_{t=2}^{n}\left(\sum_{S\in \mathcal{P}_t([n])}\left(m^{t+1}\cdot t!\cdot 2^t\cdot\left(\frac{C}{2}\right)^{t+1}2^2\cdot\prod_{i\in S}{p_i^2}\cdot\left(\sum_{i\in S}{p_i}\right)^2\right)\right)\notag\\
&\le 2\cdot\sum_{t=2}^{n}\left(\left(C\cdot m\right)^{t+1}\cdot t!\cdot\sum_{S\in \mathcal{P}_t([n])}\left(\prod_{i\in S}{p_i^2}\right)\cdot\left(\sum_{i\in S}p_i\right)^2\right) \label{eq:bicycle-bound}.
\end{align}
We can analyze the term $\sum_{S\in \mathcal{P}_t([n])}\left(\left(\prod_{i\in S}{p_i^2}\right)\cdot\left(\sum_{i\in S}p_i\right)^2\right)$ in more detail.
By doing a case distinction between the terms with $p_1\in S$ and $p_1\notin S$ we get
\begin{align}
\sum_{S\in \mathcal{P}_t([n])}\left(\left(\prod_{i\in S}{p_i^2}\right)\cdot\left(\sum_{i\in S}p_i\right)^2\right)
& \le p_1^2\cdot t^2\cdot p_1^2 \cdot \frac{1}{(t-1)!}\cdot\left(\sum_{i=2}^n p_i^2\right)^{t-1} + t^2 \cdot p_{2}^2 \cdot \frac{1}{t!}\cdot\left(\sum_{i=2}^n p_i^2\right)^{t}\notag\\
\intertext{and since $p_1^2\ge\eps_1\cdot(\sum_{i=1}^n p_i^2)\ge\eps_1\cdot(\sum_{i=2}^n p_i^2)$ and $p_{2}\le p_1$ this yields}
& \le \left(1+1/\eps_1\right)\cdot t^3\cdot p_1^4 \cdot \frac{1}{t!}\cdot\left(\sum_{i=2}^n p_i^2\right)^{t-1}.\notag
\end{align}
It holds that $p_1^4\cdot\left(\sum_{i=2}^n p_i^2\right)^{t-1}\le\left(\frac{1}{\sqrt{\eps_1}}\cdot p_1\cdot\left(\sum_{i=2}^n p_i^2\right)^{1/2}\right)^{t+1}$ for $t\ge3$. 
This yields
\begin{equation}
\sum_{S\in \mathcal{P}_t([n])}\left(\left(\prod_{i\in S}{p_i^2}\right)\cdot\left(\sum_{i\in S}p_i\right)^2\right) \le  \left(1+1/\eps_1\right)\cdot\frac{t^3}{t!}\cdot\left(\frac{1}{\sqrt{\eps_1}} \cdot p_1\cdot\left(\sum_{i=2}^n p_i^2\right)^{1/2}\right)^{t+1}\label{eq:detailed-bound}
\end{equation}
for $t\ge3$.

For $t=2$ we know that each of the three 2-clauses in the bicycle must contain both variables.
Thus,
\begin{align}
\sum_{S\in{\mathcal{P}_2([n])}}\Prob{S\text{-bicycle in }F}
&\le m^3\cdot \left(C/2\right)^3\cdot 2!\cdot 2^2 \cdot 2^2\sum_{\substack{{i,j\in V}\colon\\i\neq j}}{p_i^3\cdot p_j^3}\notag\\
& \le (C\cdot m)^{3}\cdot t^2\cdot p_1^3\cdot \left(\sum_{i=2}^n p_i^3\right)+ (C\cdot m)^{3}\cdot t^2\left(\sum_{i=2}^n p_i^3\right)^2\notag\\
\intertext{and since $\sum_{i=2}^n p_i^3\le\left(\sum_{i=2}^n p_i^2\right)^{3/2}$ due to the monotonicity of vector norms, this is at most}
& \le (C\cdot m)^{3}\cdot t^2\cdot p_1^3\cdot \left(\sum_{i=2}^n p_i^2\right)^{3/2}+(C\cdot m)^{3}\cdot t^2\cdot\left(\sum_{i=2}^n p_i^2\right)^{3}\notag\\
\intertext{and due to our condition $p_1^2\ge\eps_1\cdot(\sum_{i=1}^n p_i^2)\ge\eps_1\cdot(\sum_{i=2}^n p_i^2)$, we get}
& \le \left(1+\frac{1}{\eps_1^{3/2}}\right)\cdot (C\cdot m)^{3}\cdot t^2\cdot p_1^3\cdot \left(\sum_{i=2}^n p_i^2\right)^{3/2}\notag\\
& \le (C\cdot m)^{t+1}\cdot \left(1+1/\eps_1\right)\cdot t^3 \cdot\left(\frac{1}{\sqrt{\eps_1}} \cdot p_1\cdot\left(\sum_{i=2}^n p_i^2\right)^{1/2}\right)^{t+1}.\label{eq:detailed-bound2}
\end{align}

We can now plug \eq{detailed-bound} and \eq{detailed-bound2} into \eq{bicycle-bound} to get
\begin{align*}
\Prob{\Phi\text{ unsat}}
&\le 2\cdot \left(1+1/\eps_1\right)\sum_{t=2}^{n}\left(\left(C\cdot m\cdot \frac{1}{\sqrt{\eps_1}} \cdot p_1\cdot \left(\sum_{i=2}^n p_i^2\right)^{1/2}\right)^{t+1}\cdot t^3\right)\\
&\le 2\cdot \left(1+1/\eps_1\right)\sum_{t=2}^{\infty}\left(\left(C\cdot m\cdot \frac{1}{\sqrt{\eps_1}} \cdot p_1\cdot \left(\sum_{i=2}^n p_i^2\right)^{1/2}\right)^{t+1}\cdot t^3\right).
\end{align*}
We can now choose $m\le\eps_m\cdot m^\star$ for a constant $\eps_m\in(0,1)$ to be determined later.
Then,
\begin{align*}
\Prob{\Phi\text{ unsat}}
&\le 2\cdot \left(1+1/\eps_1\right)\sum_{t=2}^{\infty}\left(\frac{\eps_m}{\sqrt{\eps_1}}\right)^{t+1}\cdot t^3\\
&\le 2\cdot \left(1+1/\eps_1\right)\cdot\frac{\eps_m}{\sqrt{\eps_1}}\sum_{t=2}^{\infty}\left(\frac{\eps_m}{\sqrt{\eps_1}}\right)^{t}\cdot t^3.
\end{align*}
If we choose $\eps_m$ small enough so that $\frac{\eps_m}{\sqrt{\eps_1}}<1$, it holds that $\left(\sqrt{\frac{\eps_m}{\sqrt{\eps_1}}}\right)^{t}\cdot t^3=o(1)$.
Thus, there is a $t_0\in\N$ so that this function is at most $1$ for all $t\ge t_0$.
As in the proof of \lemref{sharp-lb}, we have
\begin{align*}
\Prob{\Phi\text{ unsat}}
&\le 2\cdot \left(1+1/\eps_1\right)\cdot\frac{\eps_m}{\sqrt{\eps_1}}\cdot\sum_{t=2}^{\infty}\left(\frac{\eps_m}{\sqrt{\eps_1}}\right)^{t}\cdot t^3\\
&\le 2\cdot \left(1+1/\eps_1\right)\cdot\frac{\eps_m}{\sqrt{\eps_1}}\cdot\left(t_0^4+\sum_{t=t_0}^{\infty}\left(\sqrt{\frac{\eps_m}{\sqrt{\eps_1}}}\right)^{t}\right)\\
&\le 2\cdot \left(1+1/\eps_1\right)\cdot\frac{\eps_m}{\sqrt{\eps_1}}\cdot\left(t_0^4+\frac{\sqrt{\frac{\eps_m}{\sqrt{\eps_1}}}}{1-\sqrt{\frac{\eps_m}{\sqrt{\eps_1}}}}\right).
\end{align*}
We can now choose $\eps_m$ small enough so that this probability is at most $1-\eps_P$.
\end{proof}

\lemref{sharp-lb} and \lemref{bicycle-coarse} imply the statements we want for $m\le \eps_m\cdot m^\star$ and $m\in o(m^\star)$ respectively.
We will show this formally in \secref{main-theorem}.

\section{Snakes and the Second Moment Method}\label{sec:snake}

The two lemmas from the previous section provide a lower bound on the satisfiability threshold for non-uniform random 2-SAT.
By using the second moment method, we can also derive an upper bound. 
This proof is inspired by Chvatal and Reed~\cite[Theorem~4]{chvatalreed92}, who provide us with the following definition.

\begin{definition}[snake]
A \emph{snake} of size $t\ge2$ is a sequence of literals $(w_1, w_2, \ldots, w_{2t-1})$ over distinct variables.
Each snake $A$ is associated with a set $F_A$ of $2t$ clauses $\left(\n{w_i}, w_{i+1}\right)$, $0\le i\le 2t-1$, such that $w_0=w_{2t}=\n{w_t}$.
\end{definition}
We will also call the variable $|w_t|$ of a snake its \emph{central} variable.
Note that the set of clauses $F_A$ defined by a snake $A$ is unsatisfiable.
Also, the snakes 
\begin{align*}
& (w_1,\ldots, w_{t-1}, w_{t}, w_{t+1}, \ldots, w_s), \\
&(\n{w_{t-1}}, \n{w_{t-2}}, \ldots, \n{w_1}, w_{t}, w_{t+1}, \ldots, w_s), \\
& (w_1, \ldots, w_{t-1}, w_{t}, \n{w_s}, \n{w_{s-1}}, \ldots, \n{w_{t+1}}),\text{ and}\\
& (\n{w_{t-1}}, \n{w_{t-2}}, \ldots, \n{w_1}, w_{t}, \n{w_s}, \n{w_{s-1}}, \ldots, \n{w_{t+1}})
\end{align*}
create the same set of clauses.

The \emph{variable-variable incidence graph (VIG)} for a formula $\Phi$ is a simple graph $G_\Phi=(V_\Phi,E_\Phi)$ with $V_\Phi$ consisting of all variables appearing in $\Phi$ and two variables being connected by an edge if they appear together in at least one clause of $\Phi$. 
An example for a snake's VIG can be seen in \figref{snake}.
We will use this representation later in the proof of \lemref{threshold-sharp}.

\begin{figure}
	\centering \includegraphics{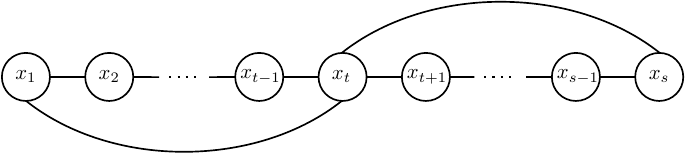}
	\caption{Variable-variable-incidence graph of a snake $w_1,w_2,\ldots,w_s$ where $|w_i|=x_i$ (the variable of the literal $w_i$) for $1\le i\le s=2t-1$.}\label{fig:snake}
\end{figure}

In order to show our upper bounds, we will prove that snakes of a certain length $t$ appear with sufficiently high probability in a random formula $\Phi\sim\mathcal{D}(n,2,(\vec{p}^{(n)})_{n\in\N},m)$.
To this end we utilize the second moment method:
If $X\ge0$ is a random variable with finite variance, then
\[\Prob{X>0}\ge\frac{\Ex{X}^2}{\Ex{X^2}}.\]
We define the following indicator variables for each snake $A$ of size $t$
\[X_A=
\begin{cases}
1&\text{if $F_A$ appears exactly once in $\Phi$}\\
0&\text{otherwise}
\end{cases}\]
and their sum $X_t=\sum_{A}{X_A}$.
Throughout the rest of this chapter we let $X_t$ denote the number of snakes of size $t\ge2$ whose associated clauses appear exactly once in a non-uniform random 2-SAT formula $\Phi\sim\mathcal{D}^N(n,2,(\vec{p}^{(n)})_{n\in\N},m)$.

As before, if we define $t\colon \N\to\R^+$ as a function in $n$, it holds that $\Ex{X_t^2}$ and $\Ex{X_t}$ are functions in $n$ as well.
For carefully chosen functions $t$ we will show that $\Ex{X_t^2}\le(1+\eps_E)\cdot \Ex{X_t}^2$ for a sufficiently small constant $\eps_E>0$.
Note that for the probability to generate an unsatisfiable instance, it is sufficient to show a large enough lower bound for \emph{any} value of $t$.
Thus, we will consider several values of $t$, one of which is guaranteed to give us a bound as desired for sufficiently large values of $n$.
More precisely, there are two values of $t$ that are relevant for us, $t=2$ and $t=f^{1/78}$, where we define 
\[f=\frac{\sum_{i=1}^n p_i^2}{p_1^2}.\]
Note that $t$ and $f$ are both functions in $n$ as are $\left(\sum_{i=1}^n p_i^2\right)$ and $p_1$.

$t=2$ will provide the desired result if there is a constant $\eps_1\in(0,1)$ such that $p_1^2\ge \eps_1\cdot \sum_{i=1}^n p_i^2$ and if we can choose a sufficiently small $\eps_2\in(0,1)$ such that $p_2^2\le\eps_2\cdot\sum_{i=2}^n p_i^2$.
This especially includes the case $p_1^2\in\Theta(\sum_{i=1}^n p_i^2)$ and $p_2^2\in o(\sum_{i=2}^n p_i^2)$.

$t=f^{1/78}$ will provide the desired result if we can choose a sufficiently small $\eps_1\in(0,1)$ with $p_1^2\le \eps_1\cdot \sum_{i=1}^n p_i^2$.
This includes the case $p_1^2\in o(\sum_{i=1}^n p_i^2)$.

However, if there are constants $\eps_1,\eps_2\in(0,1)$ so that $p_1^2\ge \eps_1\cdot \sum_{i=1}^n p_i^2$ and $p_2^2\ge \eps_1\cdot \sum_{i=2}^n p_i^2$, we can show a lower bound directly without having to use the second moment method.
We will handle this case in \secref{coarse}.

Now, if we want to use the second moment method, we first have to ensure that the expected number of snakes of a certain size is large enough.
The following lemma provides a lower bound on this expected number.

\begin{lemma}\label{lem:exp-snakes}
Let $2\le t \le (2\cdot q_{\max})^{-1}$.
Then it holds that
\begin{eqnarray*}
\Ex{X_t}	&\ge	& \frac12\cdot(m-2t)^{2t}\cdot C^{2t}\cdot \left(1-2t\cdot q_{\max}\cdot m\right)\cdot\left(\sum_{i=1}^n p_i^4\right)\cdot\left(\sum_{i=2}^n\left( p_i^2-(2t-3)\cdot p_{2}^2\right)\right)^{2t-2}.
\end{eqnarray*}
\end{lemma}
\begin{proof}
It holds that
\[\Ex{X_t} = \sum_{\substack{\text{snake}\\A=(w_1,\ldots,w_{2t-1})}}\left(\binom{m}{2t}\cdot(2t)!\cdot\prod_{i=0}^{2t-1}\Pr{\n{w_i},w_{i+1}}\cdot\left(1-\sum_{c\in F_A}\Pr{c}\right)^{m-2t}\right)\]
and according to \eq{2-sat-clause} it holds that $\Prob{(\n{w_i},w_{i+1})}=\frac{C}{2}\cdot p(|w_i|)\cdot p(|w_{i+1}|)$. 
Together with the fact that $\sum_{c\in F_A}\Prob{c}\le 2t\cdot q_{\max}$, we get
\begin{equation}
\Ex{X_t}	\ge  (m-2t)^{2t}\left(1-2t\cdot q_{\max}\right)^{m-2t}\left(\frac{C}{2}\right)^{2t}\cdot\hspace{-0.5cm}\sum_{\substack{\text{snake}\\A=(w_1,\ldots,w_{2t-1})}}\left(p(|w_t|)^4\cdot\prod_{\substack{i=1\\i\neq t}}^{2t-1}p(|w_i|)^2\right).\label{eq:exp-snake1}
\end{equation}
Now we count how many snakes of size $t$ there are. 
First, we choose a central variable $X_j$.
Then, we choose a set $S$ of $2t-2$ \emph{different variables}.
From those variables we can create $2^{2t-1}\cdot(2t-2)!$ different snakes by choosing signs for the $2t-1$ variables and by permuting the order of the $2t-2$ non-central variables.
\[\sum_{\substack{\text{snake}\\A=(w_1,\ldots,w_{2t-1})}}\left(p(|w_t|)^4\cdot\prod_{\substack{i=1\\i\neq t}}^{2t-1}p(|w_i|)^2\right) \ge 2^{2t-1}(2t-2)!\cdot\sum_{j=1}^{n} \left(p_j^4\cdot\hspace{-1ex}\sum_{\substack{S\subseteq{[n]\setminus\left\{j\right\}}\colon\\|S|=2t-2}}\prod_{s\in S}p_s^2\right).\]
Due to \lemref{aux1} we have 
\[\sum_{\substack{S\subseteq{[n]\setminus\left\{j\right\}}\colon\\|S|=2t-2}}\prod_{s\in S}p_s^2 \ge \frac{1}{(2t-2)!}\left(\sum_{i=2}^{n}\left(p_i^2-(2t-3)\cdot p_{2}^2\right)\right)^{2t-2}\] and thus
\begin{multline}
\sum_{\substack{\text{snake}\\A=(w_1,\ldots,w_{2t-1})}}\left(p(|w_t|)^4\cdot\prod_{\substack{i=1\\i\neq t}}^{2t-1}p(|w_i|)^2\right)
\ge 2^{2t-1}\cdot\left(\sum_{j=1}^{n} p_j^4\right)\cdot\left(\sum_{i=2}^{n}\left(p_i^2-(2t-3)\cdot p_{2}^2\right)\right)^{2t-2}.\label{eq:exp-snake2}
\end{multline}
Due to Bernoulli's inequality, it also holds that
\begin{equation}
\left(1-2t \cdot q_{\max}\right)^{m-2t} \ge \left(1-2t \cdot q_{\max}\cdot(m-2t)\right), \label{eq:exp-snake3}
\end{equation}
if $2t\cdot q_{\max}\le 1$.
Plugging \eq{exp-snake2} and \eq{exp-snake3} into \eq{exp-snake1} we get the result as desired.
\end{proof}

\subsection{The coarse threshold case}\label{sec:coarse-snake}

We want to prove an upper bound on the non-uniform random 2-SAT threshold. 
To get to know the proof technique, we start with the much simpler case that there is a constant $\eps_1\in(0,1)$ with $p_1^2\ge\eps_1\cdot\sum_{i=1}^n p_i^2$ and that we can choose a sufficiently small $\eps_2\in(0,1)$ such that $p_2^2\le\eps_2\cdot\sum_{i=2}^n p_i^2$.
For this case, we set 
\[m^\star=\left(C\cdot p_1\cdot\left(\sum_{i=2}^n p_i^2\right)^{1/2}\right)^{-1}.\]
In order to show the desired result, we need the following lower bound on $m^\star$.
\begin{lemma}\label{lem:m-lower-bound}
Given an ensemble of probability distributions $(\vec{p}^{(n)})_{n\in\N}$ so that there is a constant $\eps_1\in(0,1)$ with $p_1^2\ge\eps_1\cdot\sum_{i=1}^n p_i^2$ and a constant $\eps_2\in(0,1)$ so that $p_2^2\le \eps_2\cdot \sum_{i=2}^n p_i^2$.
Let $m^\star=(C\cdot p_1\cdot(\sum_{i=2}^n p_i^2)^{1/2})^{-1}$.
Then, 
\[m^\star\ge\frac{1-\eps_2^{1/2}}{\eps_2^{1/4}}.\]
\end{lemma}
\begin{proof}
First, we notice
\begin{align*}
\sum_{i=2}^n p_i^2 
&\le p_2\cdot \sum_{i=2}^n p_i. \\
\intertext{Due to the requirement $p_2^2\le\eps_2\cdot \sum_{i=2}^n p_i^2$, we get}
&\le \eps_2^{1/2} \cdot \left(\sum_{i=2}^n p_i^2\right)^{1/2}\cdot \sum_{i=2}^n p_i. \\
\intertext{The monotonicity of vector norms yields $\left(\sum_{i=2}^n p_i^2\right)^{1/2}\le\sum_{i=2}^n p_i$ and thus}
&\le \eps_2^{1/2} \cdot \left(\sum_{i=2}^n p_i\right)^{2} \\
&=\eps_2^{1/2}\cdot (1-p_1)^2.
\end{align*}
It holds that
\begin{align*}
m^\star 
& = \frac{1-\sum_{i=1}^n p_i^2}{p_1\cdot \left(\sum_{i=2}^n p_i^2\right)^{1/2}}.\\
\intertext{We can now use the inequality $\sum_{i=2}^n p_i^2\le\eps_2^{1/2}\cdot (1-p_1)^2$ to get}
& \ge \frac{1-p_1^2-\sum_{i=2}^n p_i^2}{p_1\cdot \eps_2^{1/4}\cdot (1-p_1)}\\
& \ge \frac{1-p_1^2-\eps_2^{1/2}\cdot (1-p_1)^2}{p_1\cdot \eps_2^{1/4}\cdot (1-p_1)}\\
& = \frac{(1-p_1)\cdot(1+p_1-\eps_2^{1/2}\cdot (1-p_1))}{p_1\cdot \eps_2^{1/4}\cdot (1-p_1)}\\
& = \frac{1+p_1-\eps_2^{1/2}\cdot (1-p_1)}{p_1\cdot \eps_2^{1/4}} > \frac{1-\eps_2^{1/2}}{\eps_2^{1/4}}.\qedhere
\end{align*}
\end{proof}
The former lemma especially says that $m^\star$ can be arbitrarily large if $\eps_2\in(0,1)$ is sufficiently small.

We want to show that for any constant $\eps_m>0$ at $m\ge\eps_m\cdot m^\star$ there is a constant $\eps_P\in(0,1)$ such that the probability that a randomly generated instance contains a snake of size $t=2$ is at least $\eps_P>0$.
In that case, the only degree of freedom we have is choosing a constant $\eps_2$ arbitrarily small.
Together with our previous results this implies that the probability to generate an unsatisfiable instance is a constant bounded away from zero and one at $m\in\Theta(m^\star)$.
However, if we can also choose $\eps_m>0$ arbitrarily large, we can show that this result holds for any constant $\eps_P\in(0,1)$.
This implies that the probability to generate an unsatisfiable instance approaches one if $m\in\omega(m^\star)$.

In order to derive those results, we first show a lower bound on the expected number of snakes of size $t=2$.
Let us discuss what our lemma is going to state.
We assume that there is an $\eps_1\in(0,1)$ with $p_1^2\ge\eps_1\cdot\sum_{i=1}^n p_i^2$ and that we can choose $\eps_2\in(0,1)$ arbitrarily small.
This setting captures the case $p_1^2\in\Theta(\sum_{i=1}^n p_i^2)$ and $p_2^2\in o(\sum_{i=2}^n p_i^2)$.
We want to show a result for all functions $m\in\Omega(m^\star)$.
Thus, we assume $m=\eps_m \cdot m^\star$ for some $\eps_m>0$.
We can show that, given $\eps_1$ and $\eps_m$, we can choose $\eps_2$ small enough such that $\Ex{X_2}>\eps_E$ for any constant $\eps_E<\eps_m^4$. 
This will imply $\Ex{X_2}\in\Omega(1)$ later.
If $\eps_1$ and some $\eps_E>0$ are given and we can choose $\eps_m$ and $\eps_2$, then we can show that we can choose those values such that $\Ex{X_2}>\eps_E$ for any $\eps_E$ given.
This will imply $\Ex{X_2}\in\omega(1)$ later.
However, we will show that these results only hold for $m=\eps_m\cdot m^\star$.
These are also the values of $m$ for which we will show bounds on the probability to generate unsatisfiable instances.
For higher values of $m$, for example for $m\in\omega(m^\star)$, these bounds still hold due to the monotonicity of unsatisfiability in non-uniform random $k$-SAT (\cf \lemref{monotone-drawing}).

\begin{lemma} \label{lem:exp-snakes-coarse}
Given an ensemble of probability distributions $(\vec{p}^{(n)})_{n\in\N}$ so that there is a constant $\eps_1\in(0,1)$ with $p_1^2\ge\eps_1\cdot\sum_{i=1}^n p_i^2$ and let $m^\star=(C\cdot p_1\cdot(\sum_{i=2}^n p_i^2)^{1/2})^{-1}$.
The following statements hold:
\begin{enumerate}
\item Given a constant $\eps_m>0$ with $m=\eps_m\cdot m^\star$ and a constant $\eps_E\in(0,1)$, then we can choose a constant $\eps_2\in(0,1)$ such that $p_2^2\le \eps_2\cdot \sum_{i=2}^n p_i^2$ implies
\begin{equation*}
\Ex{X_2} \ge \left(1-\eps_E\right)\cdot\frac12\cdot m^{4}\cdot C^4\cdot p_1^4\cdot\left(\sum_{i=2}^n p_i^2\right)^{2}.
\end{equation*}
\item Given a constant $\eps_m>0$ with $m=\eps_m\cdot m^\star$ and a constant $\eps_E\in(0,\frac12\cdot\eps_m^4)$, then we can choose a constant $\eps_2\in(0,1)$ such that $p_2^2\le \eps_2\cdot \sum_{i=2}^n p_i^2$ implies $\Ex{X_2}\ge\eps_E$.
\item Given a constant $\eps_E>0$, then we can choose a constant $\eps_m>0$ with $m=\eps_m\cdot m^\star$ sufficiently large and a constant $\eps_2\in(0,1)$ sufficiently small such that $p_2^2\le \eps_2\cdot \sum_{i=2}^n p_i^2$ implies $\Ex{X_2}\ge\eps_E$.
\end{enumerate}
\end{lemma}
\begin{proof}
For the first statement, note that
\begin{equation}\label{eq:qmax}
(4\cdot q_{\max})^{-1}=\frac14 \cdot \frac{1}{C\cdot p_1\cdot p_2}=\frac14\cdot \frac{\left(\sum_{i=2}^n p_i^2\right)^{1/2}}{p_2} m^\star\ge\frac{1}{4\cdot \eps_2^{1/2}}\cdot m^\star>\frac{1-\eps_2^{1/2}}{4\cdot \eps_2^{3/4}}
\end{equation}
due to \lemref{m-lower-bound}
This means, we can choose $\eps_2$ small enough, such that $t=2\le\left(2\cdot q_{\max}\right)^{-1}$.
This allows us to use \lemref{exp-snakes} with $t=2$, which yields
\[\Ex{X_2} \ge  \frac12\cdot(m-4)^{4}\cdot C^{4}\cdot \left(1-4\cdot q_{\max}\cdot m\right)\cdot\left(\sum_{i=1}^n p_i^4\right)\cdot\left(\sum_{i=2}^n p_i^2-p_{2}^2\right)^{2}.\]
We now get
\begin{equation*}
\left(\sum_{i=2}^{n}p_i^2-p_2^2\right)^{2} \ge \left(\sum_{i=2}^{n}p_i^2\right)^{2}\cdot\left(1-\frac{p_{2}^2}{\sum_{i=2}^n p_i^2}\right)^{2} \ge (1-\eps_2)^2\cdot\left(\sum_{i=2}^{n}p_i^2\right)^{2} ,
\end{equation*}
where we used $p_2^2\le\eps_2\cdot\sum_{i=2}^n p_i^2$.
Equivalently,
\begin{equation*}
\left(m-4\right)^{4} \ge m^{4}\cdot\left(1-\frac{4}{m}\right)^4\ge m^{4}\cdot \left(1-\frac{4\cdot\eps_2^{1/4}}{\eps_m\cdot (1-\eps_2^{1/2})}\right)^4,
\end{equation*}
which holds since $m=\eps_m \cdot m^\star \ge \eps_m\cdot \frac{1-\sqrt{\eps_2}}{\eps_2^{1/4}}$ due to \lemref{m-lower-bound}.
Since $m=\eps_m\cdot m^\star$ and due to \eq{qmax} we get
\[1-4\cdot q_{\max}\cdot m\ge1-\frac{4\cdot\eps_2^{1/2}\cdot m}{m^\star}=1-4\cdot\eps_2^{1/2}\cdot\eps_m.\]

Since $\left(\sum_{i=1}^n p_i^4\right)\ge p_1^4$, the expected value now simplifies to
\begin{align*}
\Ex{X_2}
&\ge \left(1-\frac{4\cdot\eps_2^{1/4}}{\eps_m\cdot (1-\eps_2^{1/2})}\right)^4\cdot(1-\eps_2)^2\cdot\left(1-4\cdot\eps_2^{1/2}\cdot\eps_m\right)\cdot\frac{m^4}{2}\cdot C^4\cdot p_1^4\cdot\left(\sum_{i=2}^n p_i^2\right)^{2}.
\end{align*}
We can see that for any choice of $\eps_m$ and $\eps_1$, the leading factor gets closer to one as $\eps_2$ gets closer to zero.
Thus, for any $\eps_m>0$, $\eps_1\in(0,1)$, and $\eps_E\in(0,1)$ we can choose a sufficiently small $\eps_2$ to guarantee
\[\Ex{X_2}\ge(1-\eps_E)\cdot\frac12\cdot m^{4}\cdot C^4\cdot p_1^4\cdot\left(\sum_{i=2}^n p_i^2\right)^{2}.\]
This establishes the first statement.

For the second statement, suppose we are given an $\eps_m>0$ with $m=\eps_m\cdot m^\star$.
Then, 
\begin{align*}
\Ex{X_2} 
&\ge (1-\eps)\cdot\frac12\cdot m^{4}\cdot C^4\cdot p_1^4\cdot\left(\sum_{i=2}^n p_i^2\right)^{2}\\
&=(1-\eps)\cdot\frac12\cdot (m/m^\star)^4=(1-\eps)\cdot\frac12\cdot \eps_m^4
\end{align*}
for some constant $\eps$ that decreases with decreasing $\eps_2$.
The smaller we choose $\eps_2$, the closer this function gets to $\frac12\cdot\eps_m^4$.
Thus, for any $\eps_E\in(0,\frac12\cdot\eps_m^4)$ we can achieve $\Ex{X_2}\ge \eps_E$.
This establishes the second statement.

For the third statement suppose we are given an $\eps_E>0$ and we can choose $\eps_m>0$ with $m=\eps_m\cdot m^\star$.
Again, we get
\begin{align*}
\Ex{X_2} &\ge (1-\eps)\cdot \frac12\cdot\eps_m^4
\end{align*}
for some constant $\eps$ that decreases for fixed $\eps_m$ and decreasing $\eps_2$.
First, we choose $\eps_m$ such that $\frac12\cdot\eps_m^4>\eps_E$.
Now we know that we can make $\eps_2$ small enough so that the expected value is at least $\eps_E$.
\end{proof}

We are now ready to prove that random formulas are unsatisfiable with some positive constant probability at $m\in\Theta((C\cdot p_1\cdot(\sum_{i=2}^n p_i^2)^{1/2})^{-1})$.
More precisely, we will show that, given $\eps_1$ and $\eps_m$, we can choose an $\eps_2$ sufficiently small such that there is a constant $\eps_P\in(0,1)$ which bounds the probability to generate unsatisfiable instances from below.
Moreover, this value $\eps_P$ depends only on $\eps_1$ and $\eps_m$ and not on $n$.
This means, this lower bound does not approach zero or one as $n$ increases.

In the proof we consider $\Prob{X_A=1 \wedge X_B=1}$, the probability that both snake $A$ and snake $B$ appear exactly once in a random formula.
We distinguish several cases depending on how many clauses $F_A$ and $F_B$ have in common.
Then, we analyze the probability of $F_A \cup F_B$ appearing exactly once.
In order to do so, we assume that some snake $A$ and the shared clauses of $A$ and $B$ have already been chosen.
Then, we construct $B$, incorporating the shared clauses from $A$.

\begin{lemma}\label{lem:coarse-threshold1}
Given an ensemble of probability distributions $(\vec{p}^{(n)})_{n\in\N}$ and a constant $\eps_1\in(0,1)$ with $p_1^2\ge\eps_1\cdot(\sum_{i=1}^n p_i^2)$.
Let $m^\star=(C\cdot p_1\cdot (\sum_{i=2}^n p_i^2)^{1/2})^{-1}$.
Then, for any constant $\eps_m>0$ with $m=\eps_m\cdot m^\star$ and any \[\eps_P<\frac{\eps_m^4}{\eps_m^4+3\cdot\eps_m^2\left(1+\frac{1}{\eps_1}+\frac{1}{\eps_1^2}\right)+8}\] 
we can choose a constant $\eps_2\in(0,1)$ with $p_2^2\le\eps_2\cdot\sum_{i=2}^n p_i^2$ such that the probability to generate an unsatisfiable formula $\Phi\sim\mathcal{D}^N(n,2,\left(\vec{p}^{(n)}\right)_{n\in\N},m)$ is at least $\eps_P$.
\end{lemma}
\begin{proof}
First, we want to show that given $\eps_m>0$ with $m=\eps_m\cdot m^\star$ and $\eps_1\in(0,1)$ with $p_1^2\ge\eps_1\cdot(\sum_{i=1}^n p_i^2)$, there is an $\eps_P\in(0,1)$ such that 
\[\Prob{X_2>0}\ge\frac{\Ex{X_2}^2}{\Ex{X_2^2}}\ge\eps_P.\]
Since \lemref{exp-snakes-coarse} gives us a lower bound on $\Ex{X_2}$, we only need to consider $\Ex{X_2^2}$ now.
We use the same approach as \citet{chvatalreed92} and split the expected value into two parts as follows
\begin{align*}
\Ex{X_2^2} &=\sum_{A}{\sum_{B}\Prob{X_A=1 \wedge X_B=1}}\\
&=\sum_{A}{\left(\sum_{B\colon B\not{\sim}A}\Prob{X_A=1 \wedge X_B=1}+\sum_{B\colon B\sim A}\Prob{X_A=1 \wedge X_B=1}\right)},
\end{align*}
where $B\sim A$ denotes $F_A\cap F_B\neq\emptyset$.
We will show that the part for $B\nsim A$ is at most $(1+\eps_E)\cdot\Ex{X_2}^2$ for some arbitrarily small constant $\eps_E>0$ and that there is a constant $\eps_F>0$ such that the other part is at most $\eps_F\cdot\Ex{X_2}^2$.

First let us consider the part for $B\nsim A$.
It holds that
\begin{equation*}
\Prob{X_A=1 \wedge X_B=1}=\binom{m}{8} \cdot 8! \cdot \left(\prod_{c\in F_A}{\Prob{c}}\right)\cdot\left(\prod_{c\in F_B}{\Prob{c}}\right)\cdot\left(1-\sum_{c\in F_A\cup F_B}{\Prob{c}}\right)^{m-8},
\end{equation*}
while
\begin{equation}
\Prob{X_A=1}=\binom{m}{4}\cdot4!\cdot\left(\prod_{c\in F_A}{\Prob{c}}\right)\cdot\left(1-\sum_{c\in F_A}{\Prob{c}}\right)^{m-4}.
\end{equation}
Since $\binom{m}{8}\cdot8!\le \left(\binom{m}{4}\cdot4!\right)^2$ this readily implies
\begin{align*}
&\Prob{X_A=1 \wedge X_B=1} \\
& \le \Prob{X_A=1}\cdot\Prob{X_B=1}\frac{\left(1-\sum_{c\in F_A\cup F_B}{\Prob{c}}\right)^{m-8}}{\left(1-\sum_{c\in F_A}{\Prob{c}}\right)^{m-4}\left(1-\sum_{c\in F_B}{\Prob{c}}\right)^{m-4}}\\
\intertext{and, due to $\left(1-\sum_{c\in F_A}{\Prob{c}}\right)\cdot\left(1-\sum_{c\in F_B}{\Prob{c}}\right)\ge 1-\sum_{c\in F_A\cup F_B}{\Prob{c}}$, we have}
&\Prob{X_A=1 \wedge X_B=1} \\
& \le \Prob{X_A=1}\cdot\Prob{X_B=1}\cdot\left(1-\sum_{c\in F_A}{\Prob{c}}\right)^{-4}\left(1-\sum_{c\in F_B}{\Prob{c}}\right)^{-4}.
\end{align*}
Again, we can use Bernoulli's inequality to show
\begin{align*}
\left(1-\sum_{c\in F_A}{\Prob{c}}\right)^{4}\left(1-\sum_{c\in F_B}{\Prob{c}}\right)^{4}
&\ge(1-4\cdot q_{\max})^8\\
&\ge1-32\cdot q_{\max}\\
&\ge1-\frac{32\cdot\eps_2^{3/4}}{1-\sqrt{\eps_2}},
\end{align*}
where the last inequality follows with $q_{\max}<\frac{\eps_2^{3/4}}{1-\sqrt{\eps_2}}$ due to \eq{qmax}.
For any fixed $\eps_m$ this expression can be made arbitrarily close to one if we choose a sufficiently small $\eps_2$.
This establishes 
\begin{align}
\sum_{A}{\sum_{B\colon B\nsim A}\Prob{X_A=1 \wedge X_B=1}}
&\le\frac{1}{1-32\cdot q_{\max}}\sum_{A}{\sum_{B\colon B\nsim A}\Prob{X_A=1}\cdot \Prob{X_B=1}}\notag\\
&\le\frac{1-\sqrt{\eps_2}}{1-\sqrt{\eps_2}-32\cdot \eps_2^{3/4}}\cdot\Ex{X_2}^2\notag\\
& = (1+\eps_E)\cdot \Ex{X_2}^2\label{eq:snake-count-coarse-0}
\end{align}
for a constant $\eps_E$ that we can make arbitrarily small by making $\eps_2$ small enough.

Now we turn to the case that $B\sim A$.
We want to show that there is a constant $\eps_F>0$ such that this second sum is at most $\eps_F\cdot\Ex{X_2}^2$.
Let $l=|F_{A}\cap F_B|$.
The first and simplest case is $F_{A}=F_{B}$. 
This obviously happens if $A=B$, but also for three other snakes.
So it holds that
\begin{align}
\sum_{A}\sum_{\substack{B\colon\\ |F_{A}\cap F_B|=4}}\Prob{X_A=1 \wedge X_B=1}
&=4\cdot\Ex{X_2}=\frac{4}{\Ex{X_2}}\cdot\Ex{X_2}^2\notag\\
\intertext{and since we can achieve $\Ex{X_2}\ge \eps$ for any constant $\eps\in(0,\tfrac12\cdot\eps_m^4)$ due to \corref{threshold-coarse2} by making $\eps_2$ sufficiently small, we get }
\sum_{A}\sum_{\substack{B\colon\\ |F_{A}\cap F_B|=4}}\Prob{X_A=1 \wedge X_B=1}&\le \frac{4}{\eps}\cdot\Ex{X_2}^2=\eps_F\cdot \Ex{X_2}^2\label{eq:snake-count-coarse-1}
\end{align}
for any constant $\eps_F>8/\eps_m^4$.
This captures the case $l=4$.

For $1\le l\le3$ it holds that
\begin{align}
&\sum_{A}\sum_{\substack{B\colon\\ |F_{A}\cap F_B|=l}}\Prob{X_A=1 \wedge X_B=1}\notag\\
&\le \binom{m}{8-l}\cdot (8-l)!\cdot\left(1-\sum_{c\in{F_A \cup F_B}}\Prob{c}\right)^{m-8+l}\cdot 2^3\cdot 2! \cdot\left(\frac{C}{2}\right)^4\cdot \notag\\
&\cdot \left(\sum_{i=1}^n\left( p_i^4 \cdot \sum_{\substack{S\subseteq ([n]\setminus\left\{i\right\}):\\|S|=2}}\prod_{s\in S} p_s^2\right)\right)\cdot\sum_{\substack{B\colon\\ |F_{A}\cap F_B|=l}} \prod_{c\in F_B\setminus F_A} \Prob{c} \label{eq:square-snakes-coarse1}
\end{align}
where we accounted for the $8-l$ possible positions of clauses from $F_A\cup F_B$ in $\Phi$, for the $2^3\cdot 2!$ possibilities to create a snake $A$ from chosen variables if the central variable is determined already, and for the ways to choose those variables.
Now we want to bound the term 
\[\sum_{i=1}^n\left( p_i^4 \cdot \sum_{\substack{S\subseteq ([n]\setminus\left\{i\right\}):\\|S|=2}}\prod_{s\in S} p_s^2\right).\]
In order to do so we distinguish between the cases that $p_1$ appears in the snake as the central variable, a non-central variable or not at all to show the following 
\begin{align*}
& \sum_{i=1}^n \left(p_i^4 \cdot \sum_{\substack{S\subseteq ([n]\setminus\left\{i\right\}):\\|S|=2}}\prod_{s\in S} p_s^2\right)\\
& \le p_1^4\cdot \left(\sum_{i=2}^n p_i^2\right)^2 + \left(\sum_{i=2}^n p_i^4\right)\cdot p_1^2 \cdot \left(\sum_{i=2}^n p_i^2\right) + \left(\sum_{i=2}^n p_i^4\right)\cdot \left(\sum_{i=2}^n p_i^2\right)^2.\\
\intertext{Again, the monotonicity of vector norms implies $\sum_{i=2}^n p_i^4 \le \left(\sum_{i=2}^n p_i^2\right)^2$ and thus}
& \le p_1^4\cdot \left(\sum_{i=2}^n p_i^2\right)^2 + p_1^2 \cdot \left(\sum_{i=2}^n p_i^2\right)^3 + \left(\sum_{i=2}^n p_i^2\right)^4\\
& \le \left(1+\frac{1}{\eps_1}+\frac{1}{\eps_1^2}\right)\cdot p_1^4\cdot \left(\sum_{i=2}^n p_i^2\right)^2,
\end{align*}
where we used the prerequisite $\sum_{i=2}^n p_i^2\le\sum_{i=1}^n p_i^2\le p_1^2/\eps_1$.
If we plug this into \eq{square-snakes-coarse1}, we get
\begin{multline}
\sum_{A}\sum_{\substack{B\colon\\ |F_{A}\cap F_B|=l}}\Prob{X_A=1 \wedge X_B=1}\\
\le \left(1+\frac{1}{\eps_1}+\frac{1}{\eps_1^2}\right)\cdot m^{8-l}\cdot C^4\cdot p_1^4\cdot \left(\sum_{i=2}^n p_i^2\right)^2\cdot\sum_{\substack{B\colon\\ |F_{A}\cap F_B|=l}} \prod_{c\in F_B\setminus F_A} \Prob{c}. \label{eq:square-snakes-coarse2}
\end{multline}
Now we consider the cases $l\in \left\{1,2,3\right\}$.
We assume that $A$ is chosen already and that we want to construct all snakes $B$ that contain exactly $l$ clauses from $A$.
However, the bounds we derive will be independent of the actual choice of $A$.
Thus, we can simply plug them into \eq{square-snakes-coarse2}.
Remember that a snake of size $2$ contains the four clauses
\[\left(w_2,w_1\right),\left(\n{w_1},w_2\right),\left(\n{w_2},w_3\right),\left(\n{w_3},\n{w_2}\right)\]
for literals $w_1$, $w_2$, and $w_3$ of distinct Boolean variables.

For $l=1$ we know one shared clause which has to contain $B$'s central variable $x$ and one of $B$'s non-central variables $y$.
Here, we overestimate that for $B$ we choose any two variables from $A$, one as $B$'s central variable and one as a non-central variable.
The other non-central variable $z$ only has to be different from $x$ and $y$.
If we have a look at the clauses a snake of size $2$ contains, we see that the central variable appears 4 times, while the other two variables appear two times each.
However, the central and one of the non-central variables already appear in a shared clause from $A$ and no clause of a snake is supposed to appear more than once in the formula.
Thus, the central variable $x$ appears an additional 3 times, $y$ appears an additional one time, and $z$ appears an additional two times.
Formally, it holds that
\begin{equation*}
\sum_{\substack{B\colon\\ |F_{A}\cap F_B|=1}} \prod_{c\in F_B\setminus F_A} \Prob{c}\le \left(\frac{C}{2}\right)^{3}\cdot \sum_{x\in (S\cup\left\{i\right\})} \left(p_x^3 \cdot \sum_{y\in{(S\cup\left\{i\right\})\setminus\left\{x\right\}}}\left( p_y \cdot \sum_{z\in [n]\setminus \left\{x,y\right\}} p_z^2\right)\right),
\end{equation*}
where $i$ is the central variable and $S$ are the other variables of $A$.
Again, we can do a case distinction depending on the appearances of $p_1$.
We can see that $p_1$ can appear as one of the three variables only.
Also, the variables $S\cup\left\{i\right\}$ of $A$ are predetermined and $|S\cup\left\{i\right\}|=3$.
That means, if $1$ is not part of $S\cup\left\{i\right\}$ or not chosen from it, the set contains at most 3 other indices, whose associated variables have probabilities of at most $p_2$ each.
We now distinguish 4 cases: $x=1$, $y=1$, $z=1$, and $\left\{x,y,z\right\}\cap\left\{1\right\}=\emptyset$.
The terms of the following expression represent those cases.
It holds that
\begin{align*}
&\sum_{x\in (S\cup\left\{i\right\})} p_x^3 \cdot \left(\sum_{y\in{(S\cup\left\{i\right\})\setminus\left\{x\right\}}} p_y \cdot \left(\sum_{z\in [n]\setminus \left\{x,y\right\}} p_z^2\right)\right)\\
&\le p_1^3\cdot 2p_2\cdot\sum_{i=2}^n p_i^2 + 3p_2^3\cdot p_1 \cdot \sum_{i=2}^n p_i^2 + 3p_2^3\cdot 2p_2\cdot p_1^2+ 3p_2^3\cdot 2p_2\cdot\sum_{i=2}^n p_i^2\\
&\le 17\cdot p_1^3\cdot p_2 \cdot \sum_{i=2}^n p_i^2,
\end{align*}
where we used $p_2\le p_1$ and $p_2^2\le\sum_{i=2}^n p_i^2$.
Together with \eq{square-snakes-coarse2}, it now holds that
\begin{align*}
&\sum_{A}\sum_{\substack{B\colon\\ |F_{A}\cap F_B|=1}}\Prob{X_A=1 \wedge X_B=1}\\
&\le \frac{17}{8}\cdot\left(1+\frac{1}{\eps_1}+\frac{1}{\eps_1^2}\right)\cdot m^{7}\cdot C^7\cdot p_1^7\cdot p_2\cdot\left(\sum_{i=2}^n p_i^2\right)^3\\
&=\sqrt{\eps_2}\cdot\frac{17}{8}\cdot\left(1+\frac{1}{\eps_1}+\frac{1}{\eps_1^2}\right)\cdot m^{7}\cdot C^7\cdot p_1^7\cdot \left(\sum_{i=2}^n p_i^2\right)^{7/2},
\end{align*}
since $p_2^2\le \eps_2\cdot\sum_{i=2}^n p_i^2$.
In the first statement of \lemref{exp-snakes-coarse} we show that for any given $\eps_1$, $\eps_m$, and $\eps\in(0,1)$, we can choose $\eps_2$ small enough such that 
\[\Ex{X_2} \ge (1-\eps)\cdot\frac12\cdot m^{4}\cdot C^4\cdot p_1^4\cdot \left(\sum_{i=2}^n p_i^2\right)^{2}.\]
This implies
\begin{align}
&\sum_{A}\sum_{\substack{B\colon\\ |F_{A}\cap F_B|=1}}\Prob{X_A=1 \wedge X_B=1}\notag\\
&\le \frac{\sqrt{\eps_2}}{(1-\eps)^2}\cdot\frac{17}{2}\cdot\left(1+\frac{1}{\eps_1}+\frac{1}{\eps_1^2}\right)\cdot\frac{\Ex{X_2}^2}{m\cdot C\cdot p_1\cdot \left(\sum_{i=2}^n p_i^2\right)^{1/2}}.\notag\\
\intertext{Since $m\cdot C\cdot p_1\cdot \left(\sum_{i=2}^n p_i^2\right)^{1/2}=m/m^\star=\eps_m$, we get}
&\sum_{A}\sum_{\substack{B\colon\\ |F_{A}\cap F_B|=1}}\Prob{X_A=1 \wedge X_B=1}\notag\\
&\le \frac{\sqrt{\eps_2}}{(1-\eps)^2}\cdot\frac{17}{2}\cdot\left(1+\frac{1}{\eps_1}+\frac{1}{\eps_1^2}\right)\cdot\frac{\Ex{X_2}^2}{\eps_m}\notag\\
&=\eps_F\cdot \Ex{X_2}^2\label{eq:snake-count-coarse-2}
\end{align}
for any $\eps_F>0$ if we choose $\eps_2$ small enough ($\eps$ decreases as $\eps_2$ does).

Now we consider $l=2$.
Again, it is helpful to visualize the clauses a snake of size 2 consists of:
\[\left(w_2,w_1\right),\left(\n{w_1},w_2\right),\left(\n{w_2},w_3\right),\left(\n{w_3},\n{w_2}\right).\]
With two shared clauses, two cases can happen.
Either all three variables of $A$ appear in the two shared clauses or only two do.
In the first case, one variable of $A$ appears in $B$ twice, while the other two appear only once.
However, this information is already enough to completely determine how all other clauses of $B$ have to look.
It implies that the variable that appears twice is the central variable both in $A$ and in $B$, since only the central variable appears in clauses with both other variables.
Moreover, the two shared clauses already imply $A=B$ and thus $l=4$.
This means, this case cannot happen!
Thus, we only have to consider the second case, in which two variables from $A$ each appear twice in the shared clauses.
Again, the shared clauses already determine that the central variable from $A$ also is the central variable in $B$, since only the literals of the central variable appear with the same sign in both clauses.
In $B$ this central variable has to appear an additional two times and a new variable $x\in\left([n]\setminus \left(S\cup\left\{i\right\}\right)\right)$ has to appear two times as well.
The other variable of $B$ does not appear again, since it already appeared two times in shared clauses.
More formally,
\begin{align*}
\sum_{\substack{B\colon\\ |F_{A}\cap F_B|=2}} \prod_{c\in F_B\setminus F_A} \Prob{c} 
& = \left(\frac{C}{2}\right)^2\cdot p_i^2\cdot\sum_{x\in [n]\setminus \left(S\cup\left\{i\right\}\right)}^n p_x^2.\\
\intertext{By considering the possible appearances of $p_1$ again, we get}
\sum_{\substack{B\colon\\ |F_{A}\cap F_B|=2}} \prod_{c\in F_B\setminus F_A} \Prob{c} 
&\le \left(\frac{C}{2}\right)^2\cdot\left( p_1^2\cdot\sum_{i=2}^n p_i^2+p_2^2\cdot p_1^2+p_2^2\cdot\sum_{i=2}^n p_i^2\right)\\
&\le 3\cdot\left(\frac{C}{2}\right)^2\cdot p_1^2\cdot\sum_{i=2}^n p_i^2.
\end{align*}
Again with \eq{square-snakes-coarse2}, it holds that
\begin{align*}
\sum_{A}\sum_{\substack{B\colon\\ |F_{A}\cap F_B|=2}}\Prob{X_A=1 \wedge X_B=1}
&\le \frac{3}{4}\cdot \left(1+\frac{1}{\eps_1}+\frac{1}{\eps_1^2}\right)\cdot m^{6}\cdot C^6\cdot p_1^6\cdot \left(\sum_{i=2}^n p_i^2\right)^3.
\end{align*}
Since we can choose $\eps_2$ small enough such that $\Ex{X_2} \ge (1-\eps)\cdot\frac12\cdot m^{4}\cdot C^4\cdot p_1^4\cdot \left(\sum_{i=2}^n p_i^2\right)^{2}$ for any $\eps\in(0,1)$, we get
\begin{align}
\sum_{A}\sum_{\substack{B\colon\\ |F_{A}\cap F_B|=2}}\Prob{X_A=1 \wedge X_B=1}
&\le \frac{3}{(1-\eps)^2}\cdot \left(1+\frac{1}{\eps_1}+\frac{1}{\eps_1^2}\right)\cdot \frac{\Ex{X_2}^2}{m^{2}\cdot C^2\cdot p_1^2\cdot \left(\sum_{i=2}^n p_i^2\right)}\notag\\
&=\frac{3}{(1-\eps)^2}\cdot \left(1+\frac{1}{\eps_1}+\frac{1}{\eps_1^2}\right)\cdot \frac{\Ex{X_2}^2}{\eps_m^2}\\
&=\eps_F\cdot \Ex{X_2}^2 \label{eq:snake-count-coarse-3}
\end{align}
for some constant $\eps_F>\tfrac{3}{\eps_m^2}\cdot \left(1+\tfrac{1}{\eps_1}+\tfrac{1}{\eps_1^2}\right)$.

The last case is $l=3$.
This case can not happen, since $3$ shared clauses already fully determine the last clause, which also has to align with one of $A$, \ie we do not have any degree of freedom to make $F_A\neq F_B$.

Putting \eq{snake-count-coarse-1}, \eq{snake-count-coarse-2}, and \eq{snake-count-coarse-3} together, establishes that we can choose $\eps_2$ sufficiently small to make
\[\sum_{A}{\sum_{B\colon B\sim A}\Prob{X_A=1 \wedge X_B=1}}\le\eps_F\cdot\Ex{X_2}^2\]
for any constant $\eps_F>\frac{3}{\eps_m^2}\cdot \left(1+\frac{1}{\eps_1}+\frac{1}{\eps_1^2}\right)+\frac{8}{\eps_m^4}$.
Together with \eq{snake-count-coarse-0}, this gives us
\begin{align*}
\Ex{X_2^2}
&=\sum_{A}{\left(\sum_{B\colon B\not{\sim}A}\Prob{X_A=1 \wedge X_B=1}+\sum_{B\colon B\sim A}\Prob{X_A=1 \wedge X_B=1}\right)}\\
&\le(1+\eps_E+\eps_F)\cdot\Ex{X_2}^2
\end{align*}
for any constant $\eps_E>0$ and any $\eps_F>\frac{3}{\eps_m^2}\cdot \left(1+\frac{1}{\eps_1}+\frac{1}{\eps_1^2}\right)+\frac{8}{\eps_m^4}$ and implies 
\[\Prob{X_2>0}\ge\frac{\Ex{X_2}^2}{\Ex{X_2^2}}\ge\frac{1}{1+\eps_E+\eps_F}=\eps_P\]
for any
\[\eps_P<\frac{\eps_m^4}{\eps_m^4+3\cdot\eps_m^2\cdot\left(1+\frac{1}{\eps_1}+\frac{1}{\eps_1^2}\right)+8}.\]
\end{proof}

The following is a corollary of the former lemma and complements it.
It shows that for any $\eps_1\in(0,1)$ and any $\eps_P\in(0,1)$ non-uniform random 2-SAT formulas are unsatisfiable with probability at least $\eps_P$ if we can choose $\eps_m$ with $m=\eps_m\cdot m^\star$ sufficiently large and $\eps_2$ with $p_2^2\le\eps_2\cdot\sum_{i=2}^n p_i^2$ sufficiently small.
This captures the case $m\in\omega(m^\star)$.

\begin{corollary}\label{cor:threshold-coarse2}
Given an ensemble of probability distributions $(\vec{p}^{(n)})_{n\in\N}$ and a constant $\eps_1\in(0,1)$ with $p_1^2\ge\eps_1\cdot(\sum_{i=1}^n p_i^2)$.
Let $m^\star=(C\cdot p_1\cdot (\sum_{i=2}^n p_i^2)^{1/2})^{-1}$.
For any constant $\eps_P\in(0,1)$ we can choose a constant $\eps_m>0$ with $m=\eps_m\cdot m^\star$ and a constant $\eps_2\in(0,1)$ with $p_2^2\le\eps_2\cdot\sum_{i=2}^n p_i^2$ such that the probability to generate an unsatisfiable formula $\Phi\sim\mathcal{D}^N(n,2,(\vec{p}^{(n)})_{n\in\N},m)$ is at least $\eps_P$.
\end{corollary}
\begin{proof}
The corollary is a simple application of the former lemma.
Suppose we are given $\eps_1$ and $\eps_P$.
We can now choose an $\eps_m$ large enough such that
\[\frac{\eps_m^4}{\eps_m^4+3\cdot\eps_m^2\cdot\left(1+\frac{1}{\eps_1}+\frac{1}{\eps_1^2}\right)+8}>\eps_P\]
for the given $\eps_P$.
Due to \lemref{coarse-threshold1} we can then choose $\eps_2$ small enough to generate unsatisfiable instances with probability at least $\eps_P$.
\end{proof}

The former lemma and corollary together with \lemref{bicycle-coarse} establish that in the case of $p_1^2\in \Theta(\sum_{i=1}^n p_i^2)$ and $p_2^2\in o(\sum_{i=2}^n p_i^2)$ the asymptotic threshold is at $m\in\Theta((C\cdot p_1\left(\sum_{i=2}^n p_i^2)^{1/2})^{-1}\right)$ and that it is coarse. 
We will show this formally in \secref{main-theorem}.

\subsection{The sharp threshold case}\label{sec:sharp-snake}

In the last section we analyzed the case $p_1^2\in\Theta(\sum_{i=1}^n p_i^2)$ and $p_2^2\in o(\sum_{i=2}^n p_i^2)$.
Now we tend to the case $p_1^2\in o(\sum_{i=1}^n p_i^2)$.
In this section we are going to show that there is a sharp threshold at $m^\star=(\sum_{i=1}^n p_i^2)^{-1}$.
From \lemref{sharp-lb} we already know that formulas are \emph{satisfiable} with probability $1-o(1)$ if $m\le\eps_m\cdot m^\star$ for any constant $\eps_m\in(0,1)$.
It remains to show that they are \emph{unsatisfiable} with probability $1-o(1)$ if $m\ge\eps_m\cdot m^\star$ for any constant $\eps_m>1$.
This will establish a sharp threshold at $m^\star=(\sum_{i=1}^n p_i^2)^{-1}$.
More generally, we will show that, given $\eps_P\in(0,1)$ and $\eps_m>1$ so that $m=\eps_m\cdot m^\star$, we can choose $\eps_1$ with $p_1^2\le\eps_1\cdot\sum_{i=1}^n p_i^2$ sufficiently small so that the probability to generate an unsatisfiable instance is at least $\eps_P$.

We use the same technique as in the last section to prove this result, the second moment method.
As before, in order to show a sufficiently large probability for unsatisfiability, we first have to show that the expected number of snakes of a certain size $t$ can be made arbitrarily large.
The following lemma establishes this property for the suitably chosen value $t=f^{1/78}$, where $f=\left(\sum_{i=1}^n p_i^2\right)/p_1^2\ge 1/\eps_1$. 

\begin{lemma} \label{lem:exp-snakes-sharp}
Given an ensemble of probability distributions $(\vec{p}^{(n)})_{n\in\N}$ and let $t=f^{1/78}$, where $f=(\sum_{i=1}^n p_i^2)/p_1^2$.
Let $m^\star=(\sum_{i=1}^n p_i^2)^{-1}$ and let $m=\eps_m\cdot m^\star$ for some given constant $\eps_m>1$. 
Given a constant $\eps_E\in(0,1)$, we can choose a constant $\eps_1\in(0,1)$ such that $p_1^2\le \eps_1\cdot \sum_{i=1}^n p_i^2$ implies
\begin{equation*}
\Ex{X_t} \ge \left(1-\eps_E\right)\cdot\frac12\cdot m^{2t}\cdot \left(\sum_{i=1}^n p_i^4\right)\cdot\left(\sum_{i=2}^n p_i^2\right)^{2t-2}.
\end{equation*}
Furthermore, for any given $\eps_E>0$, we can choose $\eps_1\in(0,1)$ small enough to guarantee $\Ex{X_t}\ge \eps_E$.
\end{lemma}
\begin{proof}
It holds that $p_1\le1$, $\sum_{i=1}^n p_i^2\le1$, and $C=1-\sum_{i=1}^n p_i^2\le1$.
Thus, 
\[t=f^{1/78} = \frac{\left(\sum_{i=1}^n p_i^2\right)^{1/78}}{p_1^{1/39}}\le p_1^{-1/39}\le p_1^{-1}\le 1/(C\cdot p_1 \cdot p_2)=(2\cdot q_{\max})^{-1}.\]
Also,
\[t=f^{1/78}\ge\frac{1}{\eps_1^{1/78}}\ge2\]
if $\eps_1$ is sufficiently small.
Therefore, we can apply \lemref{exp-snakes} to get
\begin{multline*}
\Ex{X_t} \\\ge \frac12\cdot(m-2t)^{2t}\cdot C^{2t}\cdot (1-2t\cdot q_{\max}\cdot m)\cdot\left(\sum_{i=1}^n p_i^4\right)\cdot\left(\sum_{i=2}^n \left(p_i^2-(2t-3)\cdot p_{2}^2\right)\right)^{2t-2}.
\end{multline*}
We are going to show that this is at least 
\[\left(1-\eps_E\right)\cdot\frac12\cdot m^{2t}\cdot \left(\sum_{i=1}^n p_i^4\right)\cdot\left(\sum_{i=2}^n p_i^2\right)^{2t-2}\]
for some $\eps_E\in(0,1)$ that we can make arbitrarily small by making $\eps_1$ sufficiently small.

First, we see that
\begin{align*}
\left(\sum_{i=2}^{n}(p_i^2-(2t-3)\cdot p_2^2)\right)^{2t-2} & \ge \left(\sum_{i=1}^{n}(p_i^2-(2t-2)\cdot p_1^2)\right)^{2t-2}\\
&= \left(\sum_{i=1}^{n}p_i^2\right)^{2t-2}\cdot\left(1-\frac{(2t-2)\cdot p_1^2}{\sum_{i=1}^n p_i^2}\right)^{2t-2}.
\end{align*}
It holds that
\begin{align*}
\left(1-\frac{(2t-2)\cdot p_1^2}{\sum_{i=1}^n p_i^2}\right)^{2t-2} 
& \ge \left(1-\frac{2\cdot f^{1/78}\cdot p_1^2}{\sum_{i=1}^n p_i^2}\right)^{2t-2}\\
& = \left(1-2\cdot\left(\frac{p_1^2}{\sum_{i=1}^n p_i^2}\right)^{77/78}\right)^{2t-2}\\
\intertext{where we used our definitions of $t$ and $f$. We can see that $\left(\frac{p_1^2}{\sum_{i=1}^n p_i^2}\right)^{77/78}\le\eps_1^{77/78}$. By choosing $\eps_1$ sufficiently small, this allows us to use Bernoulli's inequality and get}
\left(1-\frac{(2t-2)\cdot p_1^2}{\sum_{i=1}^n p_i^2}\right)^{2t-2} 
& \ge \left(1-2\cdot\left(\frac{p_1^2}{\sum_{i=1}^n p_i^2}\right)^{77/78}\cdot2t\right)\\
& = \left(1-4\cdot\left(\frac{p_1^2}{\sum_{i=1}^n p_i^2}\right)^{38/39}\right)\\
& \ge \left(1-4\cdot\eps_1^{38/39}\right).
\end{align*}
We can make this factor arbitrarily close to one if we choose $\eps_1$ sufficiently small.

Equivalently,
\begin{align*}
\left(m-2t\right)^{2t} & = m^{2t}\cdot\left(1-\frac{2t}{m}\right)^{2t}\\
& = m^{2t}\cdot\left(1-2\cdot\frac{f^{1/78}\cdot \left(\sum_{i=1}^n p_i^2\right)}{\eps_m}\right)^{2t},
\end{align*}
where we used $m=\eps_m/\sum_{i=1}^n p_i^2$ and $t=f^{1/78}$.
It holds that $f\cdot p_1^2=\sum_{i=1}^n p_i^2\le p_1$, which implies $f\le p_1^{-1}$ and $t\le p_1^{-1/78}$.
Furthermore, we still know that $p_1\le\eps_1^{1/2}$, since $p_1^2\le\eps_1\cdot \sum_{i=1}^n p_i^2$ and $\sum_{i=1}^n p_i^2\le1$.
This implies
\begin{align*}
\left(1-2\cdot\frac{f^{1/78}\cdot \left(\sum_{i=1}^n p_i^2\right)}{\eps_m}\right)^{2t}
& \ge\left(1-2\cdot\frac{p_1^{77/78}}{\eps_m}\right)^{2t}\ge\left(1-2\cdot\frac{\eps_1^{77/156}}{\eps_m}\right)^{2t}.
\end{align*}
Again, we can make $\eps_1$ small enough to use Bernoulli's inequality and get
\[\left(1-2\cdot\frac{p_1^{77/78}}{\eps_m}\right)^{2t}\ge\left(1-2\cdot\frac{2t\cdot p_1^{77/78}}{\eps_m}\right)\ge\left(1-4\cdot\frac{p_1^{38/39}}{\eps_m}\right)\ge\left(1-4\cdot\frac{\eps_1^{19/39}}{\eps_m}\right).\]
Thus, for any fixed $\eps_m$ we can make this factor arbitrarily close to one by making $\eps_1$ sufficiently small.

Since we know that $C=(1-\sum_{i=1}^n p_i^2)^{-1}$ and $\sum_{i=1}^n p_i^2\le p_1\le\sqrt{\eps_1}$, this implies $1\le C\le (1-\sqrt{\eps_1})^{-1}$.
We also know that $t=f^{1/78}\ge\eps_1^{-1/78}$, which implies
\begin{align*}
1-2t\cdot q_{\max}\cdot m 
& = 1-f^{1/78}\cdot C\cdot p_1\cdot p_2 \cdot \frac{\eps_m}{\sum_{i=1}^n p_i^2}\\
& \ge 1-f^{1/78}\cdot\frac{\eps_m}{1-\sqrt{\eps_1}}\cdot \frac{p_1^2}{\sum_{i=1}^n p_i^2}\\
& = 1-f^{-77/78}\cdot\frac{\eps_m}{1-\sqrt{\eps_1}}\\
& \ge 1-\frac{\eps_m\cdot\eps_1^{77/78}}{1-\eps_1^{1/2}}.
\end{align*}
Here, we also use $p_2\le p_1$ and the definition $f=(\sum_{i=1}^n p_i^2)/p_1^2$.
Thus, for any given $\eps_m$ we can make this expression arbitrarily close to one by choosing $\eps_1$ sufficiently small.
 
For any given $\eps_m$ we can now choose an $\eps_E\in(0,1)$ and get 
\[\Ex{X_t}\ge\left(1-\eps_E\right)\cdot\frac12\cdot m^{2t}\cdot \left(\sum_{i=1}^n p_i^4\right)\cdot\left(\sum_{i=2}^n p_i^2\right)^{2t-2}\]
by making $\eps_1$ sufficiently small.
We want to show that for every $\eps_m$ we can make the expected value arbitrarily large by choosing $\eps_1$ small enough.
It holds that
\[m^2\cdot \left(\sum_{i=1}^n p_i^4\right)\ge m^2\cdot p_1^4=\frac{\eps_m^2\cdot p_1^4}{\left(\sum_{i=1}^n p_i^2\right)^2}=\frac{\eps_m^2}{f^2},\]
where we used $m=\eps_m\cdot m^\star=\eps_m\cdot\left(\sum_{i=1}^n p_i^2\right)^{-1}$.
With the same fact it holds that
\[\left(m\cdot\sum_{i=1}^n p_i^2\right)^{2t-2}=\eps_m^{2t-2}.\]
Since we know that $t=f^{1/78}$, it holds that
\[\Ex{X_t} \ge (1-\eps_E)\cdot\frac12\cdot \frac{\eps_m^{2\cdot f^{1/78}}}{f^2}.\]
We can now make this expression as large as we need it if $f$ is sufficiently large, because we assumed $\eps_m>1$.
Since $f\ge1/\eps_1$ this is the case if $\eps_1$ is sufficiently small.
Thus, for any given $\eps_E>0$ we can choose $\eps_1$ sufficiently small to guarantee $\Ex{X_t}\ge \eps_E$.
\end{proof}

We now turn to the application of the second moment method.
Again, we want to show that $\Prob{X_A=1 \wedge X_B=1}$ for snakes $A$ and $B$ with shared clauses ($F_A\cap F_B\neq\emptyset$) is relatively small compared to $\Ex{X_t}^2$. 
To this end, we have to consider different possibilities for the shared clauses to influence $\Prob{X_A=1 \wedge X_B=1}$.
In the proofs of the former case this was rather easy, since we only considered the smallest possible snakes of size $t=2$.
Now the distinction becomes a bit more difficult.
We will distinguish several cases: If the number of shared clauses is at least $t$ then $\Prob{X_A=1 \wedge X_B=1}$ is by roughly a factor of $\eps_m^t$ smaller than $\Ex{X_t}^2$.
If the shared clauses form a variable-variable-incidence graph with at least two connected components, then there are enough variable appearances pre-defined for $B$ to make $\Prob{X_A=1 \wedge X_B=1}$ sufficiently small.
The last case is that the shared clauses form only one connected component, which is a lot smaller than $t-1$.
In that case we have to carefully consider what happens to the central variable of $B$, since this variable appears most times in $B$ and the many appearances take degrees of freedom away from other variables, therefore making $\Prob{X_A=1 \wedge X_B=1}$ small.
Here, we only show the result for some $\eps_m>1$ so that $m=\eps_m\cdot\left(\sum_{i=1}^n p_i^2\right)^{-1}$.
For $m\in \omega((\sum_{i=1}^n p_i^2)^{-1})$ it follows by the monotonicity of unsatisfiability as we will see later.

\begin{lemma} \label{lem:threshold-sharp}
Given an ensemble of probability distributions $(\vec{p}^{(n)})_{n\in\N}$.
Let $m^\star=(\sum_{i=1}^n p_i^2)^{-1}$ and let $m=\eps_m\cdot m^\star$ for some given constant $\eps_m>1$. 
Given an $\eps_P\in(0,1)$, we can choose a constant $\eps_1\in(0,1)$ sucht that $p_1^2\le \eps_1\cdot \sum_{i=1}^n p_i^2$ implies
\[\Pro{\Phi\sim \mathcal{D}^N(n,2,(\vec{p}^{(n)})_{n\in\N},m)}{\Phi\text{ unsatisfiable}}\ge \eps_P.\]
\end{lemma}
\begin{proof}
Again, we utilize the second moment method.
We want to show that for any given $\eps_P\in(0,1)$ we can choose $\eps_1\in(0,1)$ sufficiently small so that some $F_A$ for a snake $A$ of size $t$ appears in $\Phi$ with probability at least $\eps_P$. 
This especially implies that we can make this probability arbitrarily close to one.
This will hold for $t=f^{1/78}$, where $f=(\sum_{i=1}^n p_i^2)/p_1^2$.
We will later see why we chose $t$ this way.
Again, we define $X_A$ as an indicator variable for the event that the formula $F_A$ associated with snake $A$ appears exactly once in $\Phi$
and $X_t=\sum_{\text{snake $A$ of size $t$}}{X_A}$.
As in the proof of \corref{threshold-coarse2} we want to show that for any $\eps_E>0$ we can choose $\eps_1$ small enough so that $\Ex{X_t^2}\le(1+\eps_E)\cdot\Ex{X_t}^2$.
Then, the second moment method gives us 
\[\Prob{X_t>0}\ge\frac{\Ex{X_t}^2}{\Ex{X_t^2}}\ge\frac{1}{1+\eps_E}.\]
Thus, for any given $\eps_P\in(0,1)$ we can simply choose $\eps_E=\frac{1}{\eps_P}-1$ to get the result as desired.
We again split the expected value into two sums
\begin{align*}
\Ex{X_t^2} & =\sum_{A}{\sum_{B}\Prob{X_A=1 \wedge X_B=1}}\\
&=\sum_{B\colon B\nsim A}\Prob{X_A=1 \wedge X_B=1}+\sum_{B\colon B\sim A}\Prob{X_A=1 \wedge X_B=1},
\end{align*}
where $B\sim A$ denotes $F_A\cap F_B\neq\emptyset$. 
We will now consider the parts over $B\nsim A$ and $B\sim A$ separately, starting with $B\nsim A$.

As in the proof of \corref{threshold-coarse2}, we want to show that for any $\eps_E>0$, we can choose $\eps_1$ such that
\begin{equation}\label{eq:snakesIndep}
\sum_{A}\sum_{B\colon B\nsim A}\Prob{X_A=1 \wedge X_B=1}\le(1+\eps_E)\cdot \Ex{X_t}^2.
\end{equation}
It holds that
\begin{multline*}
\Prob{X_A=1 \wedge X_B=1}\\
=\binom{m}{4t} \cdot(4t)!\cdot \left(\prod_{c\in F_A}{\Prob{c}}\right)\cdot\left(\prod_{c\in F_B}{\Prob{c}}\right)\cdot\left(1-\sum_{c\in F_A\cup F_B}{\Prob{c}}\right)^{m-4t},
\end{multline*}
while
\begin{equation*}
\Prob{X_A=1}=\binom{m}{2t}\cdot(2t)!\cdot\left(\prod_{c\in F_A}{\Prob{c}}\right)\cdot\left(1-\sum_{c\in F_A}{\Prob{c}}\right)^{m-2t}.
\end{equation*}
This already gives us
\begin{multline*}
\Prob{X_A=1 \wedge X_B=1}\\
\le \Prob{X_A=1}\cdot\Prob{X_B=1}\cdot\frac{\left(1-\sum_{c\in F_A\cup F_B}{\Prob{c}}\right)^{m-4t}}{\left(1-\sum_{c\in F_A}{\Prob{c}}\right)^{m-2t}\left(1-\sum_{c\in F_B}{\Prob{c}}\right)^{m-2t}},
\end{multline*}
since $\binom{m}{4t}\cdot(4t)!\le \left(\binom{m}{2t}\cdot(2t)!\right)^2$.
Due to 
\[\left(1-\sum_{c\in F_A}{\Prob{c}}\right)\cdot\left(1-\sum_{c\in F_B}{\Prob{c}}\right)\ge 1-\sum_{c\in F_A\cup F_B}{\Prob{c}},\]
and $\sum_{c\in F_A}{\Prob{c}}, \sum_{c\in F_B}{\Prob{c}}\le 2t\cdot q_{\max}$ we have
\begin{multline*}
\frac{\left(1-\sum_{c\in F_A\cup F_B}{\Prob{c}}\right)^{m-4t}}{\left(1-\sum_{c\in F_A}{\Prob{c}}\right)^{m-2t}\left(1-\sum_{c\in F_B}{\Prob{c}}\right)^{m-2t}}\\
\le \left(1-\sum_{c\in F_A}{\Prob{c}}\right)^{-2t}\cdot\left(1-\sum_{c\in F_B}{\Prob{c}}\right)^{-2t}\le (1-2t\cdot q_{\max})^{-4t}.
\end{multline*}
Since we know $t\le (2\cdot q_{\max})^{-1}$ from the previous lemma, we can use Bernoulli's inequality to get
\[(1-2t\cdot q_{\max})^{-4t}\le (1-8t^2\cdot q_{\max})^{-1}.\]
We know that $t^2=f^{1/39}=(\sum_{i=1}^n p_i^2)^{1/39}/p_1^{2/39}$ and that $q_{\max}=\frac{1}{2}\cdot C\cdot p_1\cdot p_2=\tfrac{p_1\cdot p_2}{2\cdot(1-\sum_{i=1}^n p_i^2)}$.
Together with $p_2\le p_1\le \eps_1^{1/2}$ and $\sum_{i=1}^n p_i^2\le p_1\le \eps_1^{1/2}$ this yields
\begin{align*}
(1-8t^2\cdot q_{\max})^{-1}
&=\left(1-4\cdot\frac{(\sum_{i=1}^n p_i^2)^{1/39}\cdot p_1\cdot p_2}{p_1^{2/39}\cdot (1-\sum_{i=1}^n p_i^2)}\right)^{-1}\\
&\le\left(1-4\cdot\frac{p_1^{77/39}}{1-p_1}\right)^{-1}\\
&\le\left(1-4\cdot\frac{\eps_1^{77/78}}{1-\eps_1^{1/2}}\right)^{-1}\le 1+\eps_E
\end{align*}
for any $\eps_E>0$ if we make $\eps_1$ sufficiently small.
We now get
\[\Prob{X_A=1 \wedge X_B=1} \le \left(1+\eps_E\right)\cdot\Prob{X_A=1}\cdot\Prob{X_B=1}\]
for $A\nsim B$ and thus
\begin{align*}
\sum_{A}{\sum_{B\colon B\nsim A}\Prob{X_A=1 \wedge X_B=1}}
&\le \left(1+\eps_E\right)\cdot\sum_{A}{\sum_{B\colon B\nsim A}\Prob{X_A=1} \cdot\Prob{X_B=1}}\\
&\le (1+\eps_E)\cdot\Ex{X_2}^2.
\end{align*}

Second, we look at snakes $B\sim A$.
For those we want to show
\begin{equation}
\sum_{A}{\sum_{B\colon B\sim A}\Prob{X_A=1 \wedge X_B=1}}\le \eps_E\cdot\Ex{X_t}^2\label{eq:snakesDep}
\end{equation}
for any $\eps_E>0$ if we make $\eps_1$ sufficiently small.
First, let us consider $F_A=F_B$.
As in the case of $t=2$ it holds that there are exactly $4$ snakes with the same set of clauses.
Thus,
\[\sum_{A}{\sum_{B\colon F_B=F_A}\Prob{X_A=1 \wedge X_B=1}}=4\cdot\Ex{X_t}=\frac{4}{\Ex{X_t}}\cdot\Ex{X_t}^2.\]
\lemref{exp-snakes-sharp} tells us that for any $\eps>0$ we can choose $\eps_1$ such that $\Ex{X_t}\ge \eps$.
Therefore, for any $\eps_E>0$ we can choose $\eps_1$ sufficiently small to get
\[\sum_{A}{\sum_{B\colon F_B=F_A}\Prob{X_A=1 \wedge X_B=1}}=\frac{4}{\Ex{X_t}}\cdot\Ex{X_t}^2\le\frac{4}{\eps}\cdot\Ex{X_t}^2\le \eps_E \cdot \Ex{X_t}^2.\]

The remaining analysis is a bit more complicated than in the case of $t=2$, since we can not always surely say how many variables of snake $B$ are predefined by shared clauses.
As before, we are classifying snakes $B\sim A$ according to the number $l=\left|F_A\cap F_B\right|$ of shared clauses, but also according to the number $j$ of nodes in the variable-variable incidence graph $G_{F_A\cap F_B}$.
Note that the number of variables that $F_A$ and $F_B$ have in common (regardless of signs) could be greater!
In fact, they could share all their variables without having a single clause in common.
However, right now we are only interested in ways to incorporate clauses from $F_A$ as shared clauses into $F_B$.
To that end, we only need to consider the variables from these clauses as shared variables.
For a representation, see \figref{snake-intersection}.

\begin{figure}[t]
\centering
 \begin{subfigure}[t]{0.45\textwidth}
         \centering
         \includegraphics[width=\textwidth]{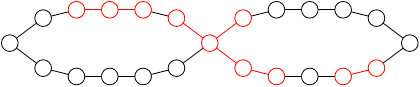}				
         \caption{If the shared clauses form a forest, the number of connected components $c$ is the difference between the number of different variables in shared clauses $j$ and the number of shared clauses $l$, $c=j-l$.}
     \end{subfigure}
     \hfill
     \begin{subfigure}[t]{0.45\textwidth}
         \centering
         \includegraphics[width=\textwidth]{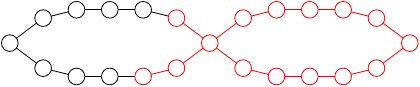}
         \caption{In order for the shared clauses to form a cycle, at least $t$ of the $2t$ clauses have to be shared. In that case the number of connected components is $c=j-l+1$.}
     \end{subfigure}
\caption{Visual representation of the variable-variable incidence graph $G_{F_A}$ for a snake $A$. Each node represents a variable of the snake, while each edge represents a clause of $A$ containing those variables. The node of degree $4$ represents the central variable of $A$. Shared clauses with snake $B$ are highlighted in red, \ie red edges represent clauses that appear both in $F_A$ and $F_B$. However, those edges do not necessarily appear at the same position in $G_{F_B}$.}
\label{fig:snake-intersection}
\end{figure}

Suppose now that snake $A$ and the shared clauses are fixed. 
We let $j$ denote the number of variables in shared clauses.
We know that there are $2t-1-j$ \emph{free} variables in $B$, \ie variables which are not predetermined to appear in $B$ by shared clauses.
Furthermore we can give an upper bound on the number $c$ of connected components of $G_{F_A\cap F_B}$.
It is easy to see that $c\le j-l$ for $l<t$ ($G_{F_A\cap F_B}$ is a forest), $c\le j-l+1$ for $t\le l <2t$ (we could create one cycle), and $c=j-l+2$ for $l=2t$ ($F_A=F_B$).
These cases are also visualized in \figref{snake-intersection}.
Fixing $l$ and $j$ it holds that
\begin{align}
	& \sum_{\substack{\text{snakes $A$, $B$}\colon\\ |E(G_{F_A\cap F_B})|=l,\ |V(G_{{F_A\cap F_B}})|=j}}{\Prob{X_A=1 \wedge X_B=1}} \notag\\
	& \le \binom{m}{4t-l}\cdot (4t-l)!\cdot\left(\frac{C}{2}\right)^{4t-l}\cdot 2^{2t-1}\cdot(2t-2)!\cdot\left(\sum_{\substack{S_A\subseteq [n]\colon\\|S_A|=2t-2}}\prod_{x\in S_A}p(x)^2\right)\cdot \notag\\
	& \cdot\left(\sum_{y\in[n]}{p(y)^4}\right)\cdot4\cdot\left(\binom{2t+2}{2(j-l)+2}\right)^2\cdot c!\cdot2^{c}\cdot 2t\cdot(2t-1-j)!\cdot2^{2t-1-j}\cdot\notag\\
	& \cdot\left(\sum_{\substack{S_B\subseteq [n]\colon\\|S_B|=2t-1-j}}\prod_{x\in S_B}p(x)^2\right)\cdot p_1^{2(j-l+1)}\cdot\left(1-\sum_{c\in F_A\cup F_B}\Prob{c}\right)^{m-(4t-l)}.\label{eq:snake-estimate}
\end{align}
Before we upper bound this expression even further, let us explain where it comes from. 
There are $\binom{m}{4t-l}\cdot (4t-l)!$ positions for the $4t-l$ clauses of $F_A\cup F_B$ in the $m$-clause formula $\Phi$.
There are at most $2^{2t-2}\cdot(2t-2)!$ possibilities of forming different snakes (signs and positions) from the $2t-2$ variables of $A$, excluding $y=|w_t|$, and two possible signs for $y=|w_t|$.
In snake $A$ each variable appears exactly twice, except for $y=|w_t|$, which appears four times.
Now we want to count the ways of mapping $G_{F_A\cap F_B}$ to $G_{F_A}$ and $G_{F_B}$.
Following the argumentation from \cite{chvatalreed92} we can see that there are $2\binom{2t+2}{2j-2l+2}$ possible mappings for $G_{F_A}$ and $G_{F_B}$, respectively.
These mappings fix the shared clauses we choose from $A$ as well as the positions where shared clauses can appear in $B$, but not where exactly which clause will appear.
This is what we consider next.
We know that $G_{F_A\cap F_B}$ contains $c$ connected components.
If they are of same length, they can be interchanged in $c!$ ways.
Furthermore, each component might be flipped, \ie the sign of every literal in the component and their order in $B$ can be inverted. 
For components which are paths, this does not change the set of shared clauses they originate from.
Nevertheless, there is still the possibility of having one component which is not a path.
For this component there are at most $2t$ ways of mapping it onto its counterpart (if it is a cycle) due to \cite{chvatalreed92}.
Now we know the shared clauses from $F_A$ and the exact position of these clauses in $F_B$ as well as positions reserved for non-determined variables in snake $B$.
The remaining $2t-1-j$ non-determined variables from $B$ can be chosen arbitrarily.
Also, there are $2^{2t-1-j}\cdot(2t-1-j)!$ possibilities for them to fill out the blanks of snake $B$.
Each of these variables appears at least twice in $B$ only.
The remaining at most $2(j-l+1)$ appearances of variables in $F_B$ are determined by the previous choices and give an additional factor of at most $p_1^{2(j-l+1)}$.
Note that the case that one of our free variables in $B$ is a central variable is also captured by this upper bound, since $\sum_{i=1}^n p_i^4\le p_1^2\cdot \sum_{i=1}^n p_i^2$.
The other $m-(4t-l)$ clauses of $\Phi$ are supposed to be different from those in $F_A\cup F_B$, so that both $F_A$ and $F_B$ appear exactly once.

Now we want to simplify that expression.
It holds that 
\[\left(1-\sum_{c\in F_A\cup F_B}\Prob{c}\right)^{m-(4t-l)} \le 1\] and that 
\[C^{4t-l}\le\left(1+\frac{\sum_{i=1}^n p_i^2}{1-\sum_{i=1}^n p_i^2}\right)^{4t}\le\exp\left(4t\cdot\frac{\sum_{i=1}^n p_i^2}{1-\sum_{i=1}^n p_i^2}\right).\]
We know that $t=f^{1/78}\le p_1^{-1/78}$ and that $\sum_{i=1}^n p_i^2\le p_1\le \eps_1^{1/2}$. This implies
\[C^{4t-l}\le\exp\left(4t\cdot\frac{\sum_{i=1}^n p_i^2}{1-\sum_{i=1}^n p_i^2}\right)\le\exp\left(4\cdot\frac{p_1^{77/78}}{1-p_1}\right)\le\exp\left(4\cdot\frac{\eps_1^{77/156}}{1-\eps_1^{1/2}}\right).\]
For any $\eps>0$ we can choose $\eps_1$ small enough such that this expression is at most $1+\eps$.
Again
\[\sum_{\substack{S\subseteq [n]\colon\\|S|=x}}\prod_{s\in S}p(s)^2 \le \frac{1}{x!}\left(\sum_{i=1}^n p_i^2\right)^x\]
according to \lemref{aux1}.
This step also cancels out the factors $(2t-2)!$ and $(2t-1-j)!$.
Also, all factors of $2$ that appear cancel out with $c\le j-l+2$.
We will also use the following estimation
\[\left(\binom{2t+2}{2(j-l)+2}\right)^2\cdot c!\le \frac{(2t+2)^{4(j-l+1)}}{(2(j-l+1)!)^2}\cdot(j-l+2)! \le (2t+2)^{4(j-l+1)}\le (3t)^{4(j-l+1)}.\]
This holds since $j\ge l-1$ and $t\ge2$.
However, $j=l-1$ only happens if $F_A=F_B$.
Since we already considered this case, we will further assume $j\ge l$.
Plugging everything back into \eq{snake-estimate} we get
\begin{align}
	& \sum_{\substack{\text{snakes $A$, $B$}\colon\\ |E(G_{F_A\cap F_B})|=l,\ |V(G_{{F_A\cap F_B}})|=j}}{\Prob{X_A=1 \wedge X_B=1}} \notag\\
	& \le 4\cdot (1+\eps)\cdot m^{4t-l}\cdot (3t)^{5(j-l+1)}\cdot\left(\sum_{i=1}^n p_i^4\right)\cdot\left(\sum_{i=1}^n p_i^2\right)^{4t-j-3}\cdot p_1^{2(j-l+1)} \label{eq:snake-estimate2}
\end{align}
for some $\eps>0$ that decreases as $\eps_1$ does.

We will distinguish three cases now, depending on the value of $j-l$.
First $j-l=0$, then $j-l\ge2$ and finally $j-l=1$.
For each of these cases we want to show that for any $\eps_E>0$ we can choose $\eps_1$ small enough so that
\[\sum_{\substack{\text{snakes $A$, $B$}\colon\\ |E(G_{F_A\cap F_B})|=l,\ |V(G_{{F_A\cap F_B}})|=j}}{\Prob{X_A=1 \wedge X_B=1}}\le\eps_E\cdot\frac{\Ex{X_t}^2}{t^2}.\]
Since $1\le l \le 2t$ and $2\le j\le 2t-1$, we will get an additional factor of $4t^2$ when summing over all snakes $A\sim B$.
If we consider all cases, including $F_A=F_B$, this adds up to
\[\sum_{A}\sum_{B\colon B\sim A}{\Prob{X_A=1 \wedge X_B=1}}\le16\cdot\eps_E\cdot\Ex{X_t}^2.\]
Still, for any chosen $\eps_E>0$ we can choose $\eps_1\in(0,1)$ small enough to make this expression at most $\eps_E\cdot\Ex{X_t}^2$ as desired.

Now let us consider the first case, $j=l$.
This can only happen if $G_{F_A\cap F_B}$ contains a cycle, as we can see in \figref{snake-intersection}.
However, $G_{F_A\cap F_B}$ can only contain a cycle if $l\ge t$. 
Due to \eq{snake-estimate2} it holds that
\begin{align*}
	& \sum_{\substack{\text{snakes $A$, $B$}\colon\\ |E(G_{F_A\cap F_B})|=l,\ |V(G_{{F_A\cap F_B}})|=l}}{\Prob{X_A=1 \wedge X_B=1}} \\
	& \le 4\cdot (1+\eps)\cdot m^{4t-l}\cdot (3t)^{5}\cdot\left(\sum_{i=1}^n p_i^4\right)\cdot\left(\sum_{i=1}^n p_i^2\right)^{4t-l-3}\cdot p_1^{2}
\end{align*}
Remember that due to \lemref{exp-snakes-sharp} for any $\eps\in(0,1)$ we can choose $\eps_1\in(0,1)$ small enough so that
\[\Ex{X_t}^2 \ge (1-\eps)\cdot\frac14\cdot m^{4t}\cdot\left(\sum_{i=1}^n p_i^4\right)^2\cdot\left(\sum_{i=1}^n p_i^2\right)^{4t-4}.\]
Thus,
\begin{align*}
	&\sum_{\substack{\text{snakes $A$, $B$}\colon\\ |E(G_{F_A\cap F_B})|=l,\ |V(G_{{F_A\cap F_B}})|=l}}{\Prob{X_A=1 \wedge X_B=1}} \\
	&\le 16\cdot \frac{1+\eps}{1-\eps}\cdot 3^5 \cdot t^5\cdot\left(m\cdot\sum_{i=1}^n p_i^2\right)^{-l}\cdot\frac{p_1^2\cdot\sum_{i=1}^n p_i^2}{\sum_{i=1}^n p_i^4}\cdot \Ex{X_t}^2,
\end{align*}
Here, we can choose $\eps$ arbitrarily small by making $\eps_1$ sufficiently small.
Due to $m=\eps_m/\sum_{i=1}^n p_i^2$, $l\ge t$, and $\sum_{i=1}^n p_i^4\ge p_1^4$ this yields
\begin{align*}
	\sum_{\substack{\text{snakes $A$, $B$}\colon\\ |E(G_{F_A\cap F_B})|=l,\ |V(G_{{F_A\cap F_B}})|=l}}{\Prob{X_A=1 \wedge X_B=1}}
	&\le 16\cdot \frac{1+\eps}{1-\eps}\cdot 3^5 \cdot t^5\cdot\eps_m^{-t}\cdot f\cdot \Ex{X_t}^2.
\end{align*}
Since $t=f^{1/78}$ and $\eps_m>1$, we can make this expression at most $\eps_E\cdot \Ex{X_t}^2$ for any $\eps_E>0$ by making $f$ sufficiently large.
Due to $f\ge 1/\eps_1$, we can also make $\eps_1$ sufficiently small.
This gives us the result for the first case as desired.

The second case we consider is $j-l\ge2$.
It holds that
\begin{align*}
	& \sum_{\substack{\text{snakes $A$, $B$}\colon\\ |E(G_{F_A\cap F_B})|=l,\ |V(G_{{F_A\cap F_B}})|\ge l+2}}{\Prob{X_A=1 \wedge X_B=1}} \\
	& \le 4\cdot (1+\eps)\cdot m^{4t-l}\cdot (3t)^{5(j-l+1)}\cdot\left(\sum_{i=1}^n p_i^4\right)\cdot\left(\sum_{i=1}^n p_i^2\right)^{4t-j-3}\cdot p_1^{2(j-l+1)}.
\end{align*}
As before, \lemref{exp-snakes-sharp} tells us that for any $\eps\in(0,1)$ we can choose $\eps_1\in(0,1)$ small enough so that
\[\Ex{X_t}^2 \ge (1-\eps)\cdot\frac14\cdot m^{4t}\cdot\left(\sum_{i=1}^n p_i^4\right)^2\cdot\left(\sum_{i=1}^n p_i^2\right)^{4t-4}.\]
Thus,
\begin{align*}
	& \sum_{\substack{\text{snakes $A$, $B$}\colon\\ |E(G_{F_A\cap F_B})|=l,\ |V(G_{{F_A\cap F_B}})|\ge l+2}}{\Prob{X_A=1 \wedge X_B=1}} \\
	& \le 16\cdot \frac{1+\eps}{1-\eps}\cdot (3t)^{5(j-l+1)}\cdot m^{-l}\cdot\left(\sum_{i=1}^n p_i^2\right)^{-j+1}\cdot\left(\sum_{i=1}^n p_i^4\right)^{-1}\cdot p_1^{2(j-l+1)} \cdot\Ex{X_t}^2\\
	\intertext{and since $\sum_{i=1}^n p_i^4\ge p_1^4$, we get}
	& \le 16\cdot \frac{1+\eps}{1-\eps}\cdot (3t)^{5(j-l+1)}\cdot \left(m\cdot\left(\sum_{i=1}^n p_i^2\right)\right)^{-l}\cdot\frac{p_1^{2(j-l+1)}}{p_1^4\cdot\left(\sum_{i=1}^n p_i^2\right)^{j-l-1}}\cdot\Ex{X_t}^2.\\
	\intertext{Again, we can use $m=\eps_m\cdot\left(\sum_{i=1}^n p_i^2\right)^{-1}$ to get}
	& = 16\cdot \frac{1+\eps}{1-\eps}\cdot (3t)^{5(j-l+1)}\cdot \eps_m^{-l}\cdot\frac{p_1^{2(j-l-1)}}{\left(\sum_{i=1}^n p_i^2\right)^{j-l-1}}\cdot\Ex{X_t}^2\\
	\intertext{and $f=p_1^2/\left(\sum_{i=1}^n p_i^2\right)$, which yields}
	& = 16\cdot \frac{1+\eps}{1-\eps}\cdot (3t)^{5(j-l+1)}\cdot \eps_m^{-l}\cdot f^{-(j-l-1)}\cdot\Ex{X_t}^2.\\
	\intertext{Since we know that $t=f^{1/78}$ we get}
	& = 16\cdot \frac{1+\eps}{1-\eps}\cdot \eps_m^{-l}\cdot (3t)^{10}\cdot\left(\frac{(3t)^5}{t^{78}}\right)^{j-l-1}\cdot\Ex{X_t}^2.\\
\end{align*}
Since we know that $j-l\ge 2$ and $\eps_m>1$, we can make this expression at most $\eps_E\cdot\Ex{X_t}^2/t^2$ for any $\eps_E>0$ by making $t$ sufficiently large.
The same holds if we make $\eps_1$ sufficiently small, because $t=f^{1/78}\ge {\eps_1}^{-1/78}$. 
As we do so, $\eps$ decreases as well.

The last case we consider is $j-l=1$.
This happens if we either only have one connected component in $G_{F_A\cap F_B}$ that does not form a cycle or if $G_{F_A\cap F_B}$ contains a cycle and one other connected component.
In the latter case, \eq{snake-estimate2} gives us
\begin{align*}
	& \sum_{\substack{\text{snakes $A$, $B$}\colon\\ |E(G_{F_A\cap F_B})|=l,\ |V(G_{{F_A\cap F_B}})|=l+1\\
	\text{cycle in $G_{F_A\cap F_B}$}}}{\Prob{X_A=1 \wedge X_B=1}} \\
	& \le 4\cdot (1+\eps)\cdot m^{4t-l}\cdot (3t)^{5(j-l+1)}\cdot\left(\sum_{i=1}^n p_i^4\right)\cdot\left(\sum_{i=1}^n p_i^2\right)^{4t-j-3}\cdot p_1^{2(j-l+1)}\\
	& = 4\cdot (1+\eps)\cdot m^{4t-l}\cdot (3t)^{10}\cdot\left(\sum_{i=1}^n p_i^4\right)\cdot\left(\sum_{i=1}^n p_i^2\right)^{4t-l-4}\cdot p_1^{4},\\
\intertext{where we can choose the value of $\eps\in(0,1)$ by making $\eps_1$ sufficiently small.
As before, we can use the estimate
\[\Ex{X_t}^2 \ge (1-\eps)\cdot\frac14\cdot m^{4t}\cdot\left(\sum_{i=1}^n p_i^4\right)^2\cdot\left(\sum_{i=1}^n p_i^2\right)^{4t-4}.\]
from \lemref{exp-snakes-sharp} to achieve an upper bound of}
	& \le 16\cdot3^{10}\cdot \frac{1+\eps}{1-\eps}\cdot t^{10}\cdot\left(m\cdot\sum_{i=1}^n p_i^2\right)^{-l}\cdot\frac{p_1^4}{\sum_{i=1}^n p_i^4}\cdot \Ex{X_t}^2.\\
	\intertext{Since a cycle can only exist for $l\ge t$, due to the requirement $m=\eps_m/\sum_{i=1}^n p_i^2$ for some $\eps_m>1$, and with $p_1^4\le \sum_{i=1}^n p_i^4$ it holds that this is}
	& \le 16\cdot3^{10}\cdot \frac{1+\eps}{1-\eps}\cdot t^{10}\cdot\eps_m^{-t}\cdot\Ex{X_t}^2.
\end{align*}
As in the case of $j-l=0$, where $G_{F_A\cap F_B}$ also contained a cycle, we see that for any $\eps_E>0$ we can bound this expression by $\eps_E\cdot\Ex{X_t}^2/t^2$ as desired by making $\eps_1$ sufficiently small, which makes $t$ sufficiently large.

If $j-l=1$ and $G_{F_A\cap F_B}$ does not contain a cycle, we have to look a bit more closely, since we cannot guarantee a large enough $l$ to make the expression sufficiently small.
Instead, we will consider different cases for mapping the central variable of $B$.
These cases will result in slightly better bounds than the one in \eq{snake-estimate2}.

First, we assume that $B$'s central variable is a free variable, \ie the central variable of $B$ does not appear in \emph{any} shared clauses of $A$ and $B$.
This means, we can actually choose $B$'s central variable freely and it will appear at least $4$ times in $F_A\cup F_B$.
In \eq{snake-estimate2} we assumed that each of our free variables only contributed $\sum_{i=1}^n p_i^2$.
However, in the current case, one of them (the central one) contributes $\sum_{i=1}^n p_i^4$.
Thus, we can substitute a factor of $(\sum_{i=1}^n p_i^2)\cdot p_1^2$ in \eq{snake-estimate} with $\sum_{i=1}^n p_i^4$ to get
\begin{align}
	&\sum_{\substack{\text{snakes $A$, $B$}\colon\\ |E(G_{F_A\cap F_B})|=l,\ |V(G_{{F_A\cap F_B}})|=l+1,\\\text{central of $B$ is free}}}{\Prob{X_A=1 \wedge X_B=1}}\notag\\
	& \le 4\cdot (1+\eps)\cdot m^{4t-l}\cdot (3t)^{10}\cdot\left(\sum_{i=1}^n p_i^4\right)^2\cdot\left(\sum_{i=1}^n p_i^2\right)^{4t-l-5}\cdot p_1^{2}.\label{eq:free-central}
\end{align}
As in the cases before, we use the lower bound on $\Ex{X_t}$ from \lemref{exp-snakes-sharp} and our definition $(\sum_{i=1}^n p_i^2)/p_1^2=f=t^{78}$ to get
\begin{align*}
	&\sum_{\substack{\text{snakes $A$, $B$}\colon\\ |E(G_{F_A\cap F_B})|=l,\ |V(G_{{F_A\cap F_B}})|=l+1,\\\text{central of $B$ not in $F_A\cap F_B$}}}{\Prob{X_A=1 \wedge X_B=1}}\\
	& \le 16\cdot 3^{10}\cdot \frac{1+\eps}{1-\eps}\cdot t^{10}\cdot\left(m\cdot\sum_{i=1}^n p_i^2\right)^{-l}\cdot\frac{p_1^2}{\sum_{i=1}^n p_i^2}\cdot\Ex{X_t}^2\\
	&= 4\cdot 3^{10}\cdot \frac{1+\eps}{1-\eps}\cdot\eps_m^{-l}\cdot t^{10}\cdot f^{-1}\cdot \Ex{X_t}^2\\
	&= 4\cdot 3^{10}\cdot \frac{1+\eps}{1-\eps}\cdot\eps_m^{-l}\cdot t^{-68}\cdot \Ex{X_t}^2.
\end{align*}
It is obvious that for any $\eps_E>0$ this is at most $\eps_E\cdot\Ex{X_t}^2/t^2$ as desired if we choose $t\ge {\eps_1}^{-1}$ sufficiently large or, conversely, $\eps_1$ sufficiently small. 

Now we assume that the central variable in $B$ is not free.
What could happen? It could coincide with a non-central variable from $A$ or with the central variable from $A$.
Thus, the central variable of $B$ could already appear once or twice in shared clauses in the first and one to four times in the second case.

Let us start with the case that it coincides with a non-central variable in $A$.
Then, one of the variables that appears twice in $A$ appears an additional (not in shared clauses) $2$ or $3$ times as the central node in $B$, depending on the number of shared clauses it already appears in.
In total it either appears $4$ times or $5$ times in $F_A\cup F_B$. 
For a representation of those two cases, see \figref{central-1} and \figref{central-2}.

\begin{figure}[t]
\centering
\begin{subfigure}[t]{0.2\textwidth}
	\centering
	\includegraphics[width=\textwidth]{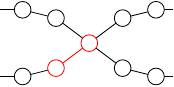}
	\caption{The central variable of $B$ appears in one shared clause.}
	\label{fig:central-1}
\end{subfigure}
\hfill
\begin{subfigure}[t]{0.2\textwidth}
	\centering
	\includegraphics[width=\textwidth]{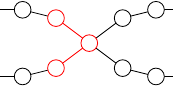}
	\caption{The central variable of $B$ appears in two shared clauses.}
	\label{fig:central-2}
\end{subfigure}
\hfill
\begin{subfigure}[t]{0.2\textwidth}
	\centering
	\includegraphics[width=\textwidth]{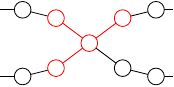}
	\caption{The central variable of $B$ appears in three shared clauses.}
\end{subfigure}
\hfill
\begin{subfigure}[t]{0.2\textwidth}
	\centering
	\includegraphics[width=\textwidth]{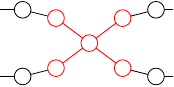}
	\caption{The central variable of $B$ appears in four shared clauses.}
\end{subfigure}
\caption{Snapshot of $B$'s central variable in $G_{F_B}$. Shared clauses of $F_A$ and $F_B$ are highlighted in red. If the central variable appears in $x$ shared clauses, then there are $x$ variables that appear exactly once in shared clauses.
Then, $B$'s central variable appears an additional $4-x$ times in $B$ and the variables that appear only once in shared clauses, each appear one additional time in $B$.}
\label{fig:central-unfree}
\end{figure}

Thus, we can replace the two appearances of a variable in $A$ and $2$ resp. $3$ appearances of unfree variables in $B$ with $4$ resp. $5$ appearances of a variable in total ($A$ and $B$).
That is, we multiply the expression from \eq{snake-estimate2} with $(\sum_{i=1}^n p_i^4)/(p_1^2\cdot \sum_{i=1}^n p_i^2)$ resp. $(\sum_{i=1}^n p_i^5)/(p_1^3\cdot \sum_{i=1}^n p_i^2)$.
Since $\sum_{i=1}^n p_i^5\le p_1\sum_{i=1}^n p_i^4$, the former case gives us an upper bound.
We get
\begin{align*}
	&\sum_{\substack{\text{snakes $A$, $B$}\colon\\ |E(G_{F_A\cap F_B})|=l,\ |V(G_{{F_A\cap F_B}})|=l+1,\\\text{central of $B$ not free and not central of $A$}}}{\Prob{X_A=1 \wedge X_B=1}}\\
	& \le 4\cdot (1+\eps)\cdot m^{4t-l}\cdot (3t)^{10}\cdot\left(\sum_{i=1}^n p_i^4\right)^2\cdot\left(\sum_{i=1}^n p_i^2\right)^{4t-l-5}\cdot p_1^{2}.
\end{align*}
This is the same upper bound we had in the previous case, \eq{free-central}, when the central variable of $B$ was free.
Thus, we already know that for any $\eps_E>0$ we can choose $\eps_1$ small enough to get a bound of at most $\eps_E\cdot \Ex{X_t}^2/t^2$.

The last case is that the central variable of $B$ coincides with the central variable of $A$.
Then, the variable that appears $4$ times in $A$ might appear $0$ to $3$ additional times (\ie not in shared clauses) in $B$, depending on the number of shared clauses it already appears in.
It cannot appear an additional $4$ times, since the central variable of $A$ must appear in a shared clause at least once for the variable to not be free.
Remember that we are in the case where $G_{F_A\cap F_B}$ only contains one connected component that is not a cycle.
This means, we have at most $4$ variables in shared clauses that each appear one additional time in $B$.
See \figref{central-unfree} for a visual representation of those cases.
Let $x\in\left\{1,2,3,4\right\}$ be the number of times that the central variable of $A$ appears in shared clauses.
Then, it appears an additional $4-x$ times in $B$. 
In addition to the central variable, there are now $x$ other unfree variables that each appear one additional time in $B$.
Each of these variables actually appears $3$ times in $A$ and $B$ together instead of $2$ times in $A$ and once as a single predetermined variable in $B$.
As before, we can substitute their appearances by multiplying \eq{snake-estimate2} with a factor of $(\sum_{i=1}^n p_i^3)/(p_1\cdot \sum_{i=1}^n p_i^2)$ for each of them.
By handling the shared central variable of $A$ and $B$ in the same way, we get an additional factor of $(\sum_{i=1}^n p_i^{8-x})/(p_1^{4-x}\sum_{i=1}^n p_i^{4})$.
We now get
\begin{align*}
	&\sum_{\substack{\text{snakes $A$, $B$}\colon\\ |E(G_{F_A\cap F_B})|=l,\ |V(G_{{F_A\cap F_B}})|=l+1,\\\text{central of $B$ is central of $A$, appears in $x$ shared clauses}}}{\Prob{X_A=1 \wedge X_B=1}}\\
	& \le 4\cdot (1+\eps)\cdot m^{4t-l}\cdot (3t)^{10}\cdot\left(\sum_{i=1}^n p_i^{8-x}\right)\cdot\left(\sum_{i=1}^n p_i^2\right)^{4t-l-4-x}\cdot\left(\sum_{i=1}^n p_i^3\right)^{x}.
	\intertext{Again, we can use the lower bound on $\Ex{X_t}$ from \lemref{exp-snakes-sharp} to get}
	& \le 16 \cdot3^{10}\cdot\frac{1+\eps}{1-\eps}\cdot t^{10}\cdot\left(m\cdot\sum_{i=1}^n p_i^2\right)^{-l}\cdot\frac{\left(\sum_{i=1}^n p_i^{8-x}\right)\cdot\left(\sum_{i=1}^n p_i^3\right)^{x}}{\left(\sum_{i=1}^n p_i^4\right)^{2}\left(\sum_{i=1}^n p_i^2\right)^{x}}\cdot\Ex{X_t}^2\\
	\intertext{and $m\ge1/\sum_{i=1}^n p_i^2$ implies}
	&\le 16 \cdot3^{10}\cdot\frac{1+\eps}{1-\eps}\cdot t^{10}\cdot\frac{\left(\sum_{i=1}^n p_i^{8-x}\right)\cdot\left(\sum_{i=1}^n p_i^3\right)^{x}}{\left(\sum_{i=1}^n p_i^4\right)^{2}\left(\sum_{i=1}^n p_i^2\right)^{x}}\cdot\Ex{X_t}^2.
\end{align*}

It remains to show that for any $\eps_E>0$ we can choose $\eps_1$ small enough so that
\[t^{10}\cdot\frac{\left(\sum_{i=1}^n p_i^{8-x}\right)\cdot\left(\sum_{i=1}^n p_i^3\right)^{x}}{\left(\sum_{i=1}^n p_i^4\right)^{2}\left(\sum_{i=1}^n p_i^2\right)^{x}}\le \frac{\eps_E}{t^2}.\]
First, note that $\sum_{i=1}^n p_i^{8-x}\le p_1^{4-x}\cdot\sum_{i=1}^n p_i^{4}$ and thus
\[t^{10}\cdot\frac{\left(\sum_{i=1}^n p_i^{8-x}\right)\cdot\left(\sum_{i=1}^n p_i^3\right)^{x}}{\left(\sum_{i=1}^n p_i^4\right)^{2}\left(\sum_{i=1}^n p_i^2\right)^{x}} 
\le t^{10}\cdot\frac{p_1^{4-x}\cdot\left(\sum_{i=1}^n p_i^3\right)^{x}}{\left(\sum_{i=1}^n p_i^4\right)\left(\sum_{i=1}^n p_i^2\right)^{x}}\]
In order to further bound this expression, we consider the probability vector $\vec{p}^{(n)}=p_1,p_2,\ldots,p_n$.
We now split the probabilities into those with $p_i\ge p_1/f^{1/6}$ and those with $p_i< p_1/f^{1/6}$.
Let $N=|\left\{i\in[n]\mid p_i\ge p_1/f^{1/6}\right\}|$ be the number of probabilities in $\vec{p}^{(n)}$ larger than the bound we set.
We now distinguish two cases: $N\ge f^{5/6}$ and $N<f^{5/6}$.

Assume the first case, $N\ge f^{5/6}$.
It holds that 
\[\sum_{i=1}^n p_i^{4} \ge N\cdot\left(\frac{p_1}{f^{1/6}}\right)^4 = p_1^4 \cdot f^{1/6}.\]
Together with $\sum_{i=1}^n p_i^3\le p_1\cdot \sum_{i=1}^n p_i^2$ this implies
\begin{align*}
t^{10}\cdot\frac{p_1^{4-x}\cdot\left(\sum_{i=1}^n p_i^3\right)^{x}}{\left(\sum_{i=1}^n p_i^4\right)\left(\sum_{i=1}^n p_i^2\right)^{x}}
&\le t^{10}\cdot\frac{p_1^{4}\cdot \left(\sum_{i=1}^n p_i^2\right)^{x}}{p_1^4\cdot f^{1/6}\left(\sum_{i=1}^n p_i^2\right)^{x}}\\
& = t^{10}\cdot f^{-1/6} = t^{-3}\le\eps_1^{1/78}\cdot t^{-2}
\end{align*}
as desired, due to our choice $t=f^{1/78}$ and since we can make $\eps_1$ as small as necessary.

Now assume $N<f^{5/6}$.
It holds that
\begin{align*}
\sum_{i=1}^n p_i^3 &
< N\cdot p_1^3 + \frac{p_1}{f^{1/6}}\cdot \sum_{i=1}^n p_i^2\\
& \le p_1^3\cdot f^{5/6} + p_1^3\cdot f^{5/6}=2\cdot p_1^3\cdot f^{5/6},
\end{align*}
where we used $\sum_{i=1}^n p_i^2=f\cdot p_1^2$.
With $\sum_{i=1}^n p_i^4\ge p_1^4$ and $\sum_{i=1}^n p_i^2=f\cdot p_1^2$ this readily implies
\begin{align*}
t^{10}\cdot\frac{p_1^{4-x}\cdot\left(\sum_{i=1}^n p_i^3\right)^{x}}{\left(\sum_{i=1}^n p_i^4\right)\left(\sum_{i=1}^n p_i^2\right)^{x}}
&\le t^{10}\cdot\frac{p_1^{4-x}\cdot\left(2\cdot p_1^3\cdot f^{5/6}\right)^{x}}{p_1^4\cdot\left(p_1^2\cdot f\right)^{x}}\\
& \le t^{10}\cdot 2^4 \cdot f^{-x/6} \le t^{10}\cdot 2^4 \cdot f^{-1/6}\le 16\cdot\eps_1^{1/78}\cdot t^{-2}.
\end{align*}
Again we can make this as small as any $\eps_E/t^2$ if we choose $\eps_1$ sufficiently small.

Finally, we took care of all the cases for $j-l=1$ and showed
\[\sum_{\substack{\text{snakes $A$, $B$}\colon\\ |E(G_{F_A\cap F_B})|=l,\ |V(G_{{F_A\cap F_B}})|=l+1}}{\Prob{X_A=1 \wedge X_B=1}}\le\eps_E\cdot\frac{\Ex{X_t}^2}{t^2}\]
as desired.
This implies 
\[\sum_{A}\sum_{B\colon B\sim A}{\Prob{X_A=1 \wedge X_B=1}}\le \eps_E\cdot \Ex{X_t}^2\]
and concludes the proof.
\end{proof}

\lemref{threshold-sharp} and \lemref{sharp-lb} now establish the existence of a sharp threshold at $m=\left(\sum_{i=1}^n p_i^2\right)^{-1}$ as we will see in \secref{main-theorem}. 
However, we first have to show an upper bound for $p_1^2\in\Theta(\sum_{i=1}^n p_i^2)$ and $p_2^2\in\Theta(\sum_{i=2}^n p_i^2)$, or more generally, for the case that we are given constants $\eps_1,\eps_2\in(0,1)$ with $p_1^2\ge\eps_1\cdot\sum_{i=1}^n p_i^2$ and $p_2^2\ge\eps_2\cdot\sum_{i=2}^n p_i^2$.
This case will be handled in the next section.

\section{A Simple Upper Bound on the Satisfiability Threshold}\label{sec:coarse}

This section handles the case that there are constants $\eps_1,\eps_2\in(0,1)$ with $p_1^2\ge\eps_1\cdot\sum_{i=1}^n p_i^2$ and $p_2^2\ge\eps_2\cdot\sum_{i=2}^n p_i^2$.
This especially includes $p_1^2\in\Theta(\sum_{i=1}^n p_i^2)$ and $p_2^2\in\Theta(\sum_{i=2}^n p_i^2)$.
This case is particularly easy, since it implies $(C\cdot p_1\cdot(\sum_{i=2}^n p_i^2)^{1/2})^{-1}\in\Theta(q_{\max}^{-1})$.
That means, the probability for a formula to be unsatisfiable is dominated by the highest clause probability.

We are going to show that there is a coarse threshold at $m^\star=(C\cdot p_1\cdot(\sum_{i=2}^n p_i^2)^{1/2})^{-1}\in\Theta(q_{\max}^{-1})$.
Note that \lemref{bicycle-coarse} only assumes $p_1^2\ge\eps_1\cdot\sum_{i=1}^n p_i^2$.
Thus, the lemma already handles $m<\eps_m\cdot (C\cdot p_1\cdot(\sum_{i=2}^n p_i^2)^{1/2})^{-1}$ for sufficiently small constants $\eps_m\in(0,1)$.
Now we only have to see what happens for $m\in\Omega(m^\star)$.

In the following lemma we give a lower bound on the probability to generate an unsatisfiable instance by showing the existence of an unsatisfiable sub-formula consisting only of clauses with the highest clause probability.
These are the clauses consisting of the two most-probable Boolean variables.
The lemma generally holds for $k\ge2$, but it especially serves our purpose of considering this easy case.

\begin{lemma}\label{lem:coarseness1}
Let $\Phi\sim\mathcal{D}^N(n,k,(\vec{p}^{(n)})_{n\in\N},m)$ be a non-uniform random k-SAT formula and let $q_{\max}$ denote the maximum clause probability. 
Then, $\Phi$ is unsatisfiable with probability at least
\[\left(1-e^{-q_{\max}\cdot m}\right)^{2^k}-q_{\max}^2\cdot 2^{2k}\cdot m \cdot\left(1+e^{-q_{\max}\cdot m}\right)^{2^k}.\]
\end{lemma}
\begin{proof}

Let $c$ be the clause with maximum probability. 
Since the signs of literals are chosen with probability $1/2$ independently at random, it holds that each clause with the same variables as $c$ has the same probability.
Our lower bound is now just a lower bound on the probability of having each of the $2^k$ clauses with these variables, which constitute an unsatisfiable sub-formula.
Let us enumerate the different clauses $c_1,\ldots,c_{2^k}$ with variables $X_1,\ldots,X_k$ in an arbitrary order.
Now let $\n{A_j}$ denote the event that $c_j$ is \emph{not} appearing in $\Phi$ and let $\n{A}=\bigcup_{j\in[2^k]}\n{A_j}$ denote the event that at least one of these clauses does not appear.
Due to the principle of inclusion and exclusion it holds that
\[\Pr{\n{A}}=\sum_{l=1}^{2^k}(-1)^{l+1}\sum_{J\subseteq[2^k]\colon|J|=l}\Pr{\bigcap_{j\in J} \n{A_j}}=\sum_{l=1}^{2^k} (-1)^{l+1}\left(\binom{2^k}{l} \cdot\left(1-l\cdot q_{\max}\right)^m\right),\]
because the clauses $c_1,\ldots,c_{2^k}$ have the same probability $q_{\max}$ of appearing and all clauses are drawn independently at random.
It now holds that
\begin{align*}
\Pr{\Phi\text{ unsat}}
\ge \Pr{A} &= 1-\left(\sum_{l=1}^{2^k} \binom{2^k}{l} \cdot(-1)^{l+1}\cdot\left(1-l\cdot q_{\max}\right)^m\right)\\
&=\sum_{l=0}^{2^k}\left(\binom{2^k}{l}\cdot(-1)^{l}\cdot\left(1-l\cdot q_{\max}\right)^m\right).
\end{align*}
We can now estimate
\[-\left(1-q_{\max}\cdot l\right)^m\ge-e^{-q_{\max}\cdot l\cdot m}\]
and, due to~\cite[Proposition B.3]{MotwaniR95},
\begin{equation*}
\left(1-q_{\max}\cdot l\right)^m
\ge e^{-q_{\max}\cdot l\cdot m}\cdot\left(1-q_{\max}^2\cdot l^2 \cdot m\right)
\ge e^{-q_{\max}\cdot l\cdot m}\cdot\left(1-q_{\max}^2\cdot 2^{2k}\cdot m\right).
\end{equation*}
In total, we get
\begin{align*}
\Pr{\Phi\text{ unsat}}&\\
\ge&\sum_{l=0}^{2^k}\Bigg(\binom{2^k}{l}\cdot(-1)^{l}\cdot e^{-q_{\max}\cdot l\cdot m}-\binom{2^k}{l}\cdot q_{\max}^2\cdot 2^{2k}\cdot m \cdot e^{-q_{\max}\cdot l\cdot m}\Bigg)\\
=&\left(1-e^{-q_{\max}\cdot m}\right)^{2^k}-q_{\max}^2\cdot 2^{2k}\cdot m \cdot\left(1+e^{-q_{\max}\cdot m}\right)^{2^k}.
\end{align*}
\end{proof}

Note that the former lemma implies the statement we want only if $q_{\max}\in o(1)$. 
Since 
\[q_{\max}=\frac{\prod_{i=1}^k p_i}{2^k\sum_{J\in{\mathcal{P}_k\left(\left\{1,2,\ldots,n\right\}\right)}}{\prod_{j\in{J}}{p_{j}}}},\]
it is also a function in $n$.
For $k=2$ the expression simplifies to $(p_1\cdot p_2)/(2\cdot(1-\sum_{i=1}^n p_i^2))$.
More generally than $q_{\max}\in o(1)$, we will now assume that we can choose an $\eps_q\in(0,1/2^k)$ so that $q_{\max}\le \eps_q$.
We will handle the case $q_{\max}\notin o(1)$ afterward.
The former lemma now yields the following corollary.

\begin{corollary} \label{cor:coarse1}
Let $\left(\vec{p}^{(n)}\right)_{n\in\N}$ be an ensemble of probability distributions.
Let $\Phi\sim\mathcal{D}^N(n,k,\left(\vec{p}^{(n)}\right)_{n\in\N},m)$ be a non-uniform random $k$-SAT formula.
Then,
\begin{enumerate}
\item for any $\eps_P\in(0,\left(1-e^{-\eps_m}\right)^{2^k})$ and for any $\eps_m>0$ so that $m=\eps_m/q_{\max}$ we can choose $\eps_q\in(0,1/2^k)$ with $q_{\max}\le\eps_q$ sufficiently small so that $\Pr{\Phi\text{ unsatisfiable}}\ge \eps_P$.
\item for any $\eps_P\in(0,1)$, we can choose $\eps_m>0$ with $m=\eps_m/q_{\max}$ sufficiently large and $\eps_q\in(0,1/2^k)$ with $q_{\max}\le\eps_q$ sufficiently small so that $\Pr{\Phi\text{ unsatisfiable}}\ge\eps_P$.
\end{enumerate}
\end{corollary}
\begin{proof}
Let $m^\star=q_{\max}^{-1}$ and fix a constant $\eps_m>0$ so that $m=\eps_m\cdot m^\star$ and a constant $\eps_P\in(0,\left(1-e^{-\eps_m}\right)^{2^k})$.
\lemref{coarseness1} tells us that 
\begin{align*}
\Pr{\Phi\text{ unsatisfiable}}
&\ge\left(1-e^{-q_{\max}\cdot m}\right)^{2^k}-q_{\max}^2\cdot 2^{2k}\cdot m \cdot\left(1+e^{-q_{\max}\cdot m}\right)^{2^k}\\
&=\left(1-e^{-\eps_m}\right)^{2^k}-q_{\max}\cdot 2^{2k}\cdot \eps_m \cdot\left(1+e^{-\eps_m}\right)^{2^k}\\
&\ge\left(1-e^{-\eps_m}\right)^{2^k}-\eps_q\cdot 2^{2k}\cdot \eps_m \cdot\left(1+e^{-\eps_m}\right)^{2^k}.
\end{align*}
If we choose $\eps_q$ sufficiently small, we can reach any value $\eps_P<\left(1-e^{-\eps_m}\right)^{2^k}$ as desired.

Now we turn to the case that we are only given $\eps_P\in(0,1)$.
It still holds that 
\begin{align*}
\Pr{\Phi\text{ unsatisfiable}}&\ge\left(1-e^{-\eps_m}\right)^{2^k}-\eps_q\cdot 2^{2k}\cdot \eps_m \cdot\left(1+e^{-\eps_m}\right)^{2^k}.
\end{align*}
We can see that if we choose $\eps_m$ sufficiently large and $\eps_q$ sufficiently small, we can make this expression at least $\eps_P$.
\end{proof}

This lemma already captures the case $q_{\max}\in o(1)$.
Let us now assume that there is some $\eps_q\in(0,1/2^k)$ so that $q_{\max}\ge \eps_q$.
It then holds that $m^\star=q_{\max}^{-1}\le 1/\eps_q$.
Remember that $q_{\max}\le1/2^k$ also still holds.
This means, the threshold function is bounded by a constant.
It is easy to see that for $\Phi\sim\mathcal{D}\left(n, k, (\vec{p}^{(n)})_{n\in\N}, m\right)$ and a constant $m\ge2^k$ it holds that $\Pr{\Phi\text{ unsatisfiable}}\ge q_{\max}^m\ge\eps_q^m$, since this is the probability of an unsatisfiable instance, where the most probable clause appears with all $2^k$ combinations of signs and then one of these clauses appears an additional $m-2^k$ times.
Similarly, $\Pr{\Phi\text{ satisfiable}}\ge q_{\max}^m\ge\eps_q^m$, as this is the probability of a satisfiable instance, where the same most probable clause appears $m$ times with the same sign.
Since $0<q_{\max}\le 1/2^k$ is a constant, the probability is a constant bounded away from zero and one.

It remains to show that $\Phi$ is unsatisfiable with probability $1-o(1)$ for $m\in \omega(1)$. 
More generally, we want to show that for any $\eps_P\in(0,1)$ we can choose an $\eps_m>0$ with $m=\eps_m\cdot m^\star$ large enough so that $\Phi$ is unsatisfiable with probability at least $\eps_P$.
The following lemma implies this.
Again, this lemma also holds for $k\ge2$ in general and without assuming anything for $q_{\max}$.
\begin{lemma}\label{lem:const-clause}
Consider a non-uniform random k-SAT formula $\Phi$.
Then $\Phi$ is unsatisfiable with probability at least
\[2-\left(1+\exp\left(-q_{\max}\cdot m\right)\right)^{2^k}.\]
\end{lemma}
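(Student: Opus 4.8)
The plan is to reuse the unsatisfiable sub-formula from the proof of \lemref{coarseness1}. Let $c$ be a clause of maximum probability $q_{\max}$ and let $X_1,\dots,X_k$ be its variables. All $2^k$ clauses on $\{X_1,\dots,X_k\}$ have probability exactly $q_{\max}$, and their conjunction is unsatisfiable, so $\Pr(\Phi\text{ unsat})$ is at least the probability that every one of these $2^k$ clauses occurs among the $m$ clauses of $\Phi$. Writing $\bar{\E}_j$ for the event that the $j$-th such clause is absent and arguing by inclusion--exclusion exactly as in \lemref{coarseness1} (any $l$ of these pairwise mutually exclusive clauses are jointly avoided by a single draw with probability $1-l\,q_{\max}$, and the $m$ draws are independent), this yields
\[
\Pr(\Phi\text{ unsat})\ \ge\ \sum_{l=0}^{2^k}(-1)^l\binom{2^k}{l}(1-l\,q_{\max})^m .
\]

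Next I would note that $q_{\max}\le 2^{-k}$: by \eqref{eq:clause-prob} the numerator $\prod_{\ell\in c}p(|\ell|)$ is one of the nonnegative summands making up $\sum_{J\in\mathcal{P}_k(\{1,\dots,n\})}\prod_{j\in J}p_j$ in the denominator, hence $q_{\max}=q_c\le 2^{-k}$. Therefore $0\le l\,q_{\max}\le 1$ for every $0\le l\le 2^k$, so $0\le (1-l\,q_{\max})^m\le e^{-l\,q_{\max}m}$ by $1-x\le e^{-x}$. Bounding each term with $l\ge 1$ from below via $(-1)^l\binom{2^k}{l}(1-l\,q_{\max})^m\ge -\binom{2^k}{l}(1-l\,q_{\max})^m\ge -\binom{2^k}{l}e^{-l\,q_{\max}m}$ and keeping the $l=0$ term then gives
\[
\Pr(\Phi\text{ unsat})\ \ge\ 1-\sum_{l=1}^{2^k}\binom{2^k}{l}e^{-l\,q_{\max}m}\ =\ 2-\bigl(1+e^{-q_{\max}m}\bigr)^{2^k},
\]
where the last equality is the binomial theorem.

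There is no real obstacle here; the only point requiring care is that $1-l\,q_{\max}$ must stay nonnegative for the estimate $(1-l\,q_{\max})^m\le e^{-l\,q_{\max}m}$ to be legitimate, which is precisely what $q_{\max}\le 2^{-k}$ buys. In fact one could bypass inclusion--exclusion altogether and use the union bound $\Pr\bigl(\bigcup_j\bar{\E}_j\bigr)\le 2^k(1-q_{\max})^m\le 2^k e^{-q_{\max}m}$, which already gives the stronger estimate $1-2^k e^{-q_{\max}m}\ge 2-\bigl(1+e^{-q_{\max}m}\bigr)^{2^k}$ by Bernoulli's inequality; I would present the inclusion--exclusion version only to parallel \lemref{coarseness1}.
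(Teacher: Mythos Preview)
Your proof is correct and follows essentially the same route as the paper: start from the inclusion--exclusion identity of \lemref{coarseness1}, drop the alternating signs to a plain sum, replace $(1-l\,q_{\max})^m$ by $e^{-l\,q_{\max}m}$, and collapse via the binomial theorem. Your version is in fact a little cleaner: you explicitly verify $q_{\max}\le 2^{-k}$ to justify $0\le 1-l\,q_{\max}$, and you apply $1-x\le e^{-x}$ directly, whereas the paper passes through an unnecessary (and, as written, slightly garbled) intermediate estimate involving $\exp(-m\,l q_{\max}/(1-l q_{\max}))$. Your closing remark that the union bound already yields the stronger $1-2^k e^{-q_{\max}m}$ is a valid shortcut the paper does not mention.
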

\begin{proof}
As in \lemref{coarseness1}, it holds that
\begin{align*}
\Pr{\Phi\text{ unsat}}
&\ge \sum_{l=0}^{2^k}\left(\binom{2^k}{l}(-1)^{l}\left(1-l\cdot q_{\max}\right)^m\right).
\end{align*}
We can now estimate
\begin{align*}
\sum_{l=0}^{2^k}\left(\binom{2^k}{l}(-1)^{l}\left(1-l\cdot q_{\max}\right)^m\right)
&\ge 1-\sum_{l=1}^{2^k}\left(\binom{2^k}{l}\left(1-l\cdot q_{\max}\right)^m\right)\\
&\ge 1-\sum_{l=1}^{2^k}\left(\binom{2^k}{l}\exp\left(-m\cdot l\cdot q_{\max}\right)\right)\\
&= 2 -\left(1+\exp\left(-m\cdot q_{\max}\right)\right)^{2^k}
\end{align*}
\end{proof}

We can now see that our desired statement holds. 
The former implies it if we can choose $\eps_m$ large enough.
\begin{corollary}\label{cor:coarse2}
Let $\left(\vec{p}^{(n)}\right)_{n\in\N}$ be an ensemble of probability distributions.
For any constant $\eps_P\in(0,1)$ we can choose a constant $\eps_m>0$ with $m=\eps_m/q_{\max}$ sufficiently large so that the probability to generate an unsatisfiable formula $\Phi\sim\mathcal{D}^N(n,k,\left(\vec{p}^{(n)}\right)_{n\in\N},m)$ is at least $\eps_P$.
\end{corollary}
\begin{proof}
\lemref{const-clause} tells us
\[\Pr{\Phi\text{ unsatisfiable}}\ge 2 -\left(1+\exp\left(-m\cdot q_{\max}\right)\right)^{2^k}=2 -\left(1+\exp\left(-\eps_m\right)\right)^{2^k},\]
since $m\cdot q_{\max}=\eps_m$.
We can now simply make $\eps_m$ large enough to make this expression at least $\eps_P$.
\end{proof}

\section{Putting it All Together}\label{sec:main-theorem}

In this section we put the upper and lower bounds of the previous sections together.
This will show our main result of this chapter, the existence and sharpness of a satisfiability threshold for non-uniform random $2$-SAT depending on the ensemble of probability distributions $(\vec{p}^{(n)})_{n\in\N}$.
We could have unified the proof to capture all cases with $p_1^2\notin o(\sum_{i=1}^n p_i^2)$.
However, the current setting has the advantage that we can also state reasons for the threshold being coarse in the second and third case.
In the second case it is due to the emergence of a snake of size $2$, \ie an unsatisfiable sub-formula that looks like this
\[\left(w_2,w_1\right),\left(\n{w_1},w_2\right),\left(\n{w_2},w_3\right),\left(\n{w_3},\n{w_2}\right)\]
for literals $w_1$, $w_2$, and $w_3$ of distinct Boolean variables.
In the third case, the coarseness comes from the emergence of an unsatisfiable sub-formula, where the clause with the two most probable variables appears with all four combinations of signs.

\begin{theorem} \label{thm:main1}
Given an ensemble of probability distributions $\left(\vec{p}^{(n)}\right)_{n\in\N}$.
\begin{enumerate}
\item If $p_1^2\in o(\sum_{i=1}^n p_i^2)$, then non-uniform random 2-SAT has a sharp satisfiability threshold at $m^\star=1/\sum_{i=1}^n p_i^2$.
\item If $p_1^2\in \Theta\left(\sum_{i=1}^n p_i^2\right)$ and $p_2^2\in o\left(\sum_{i=2}^n p_i^2\right)$, then non-uniform random 2-SAT has a coarse satisfiability threshold at $m^\star=(C\cdot p_1\cdot(\sum_{i=2}^n p_i^2)^{1/2})^{-1}$.
Furthermore, for any large enough $n$ there is a range of size $\Theta(m^\star)$ around the threshold, where the probability to generate satisfiable instances is bounded away from zero and one.
\item If $p_1^2\in \Theta\left(\sum_{i=1}^n p_i^2\right)$ and $p_2^2\in \Theta\left(\sum_{i=2}^n p_i^2\right)$, then non-uniform random 2-SAT has a coarse satisfiability threshold at $m^\star=(q_{\max}^{-1})\in \Theta((C\cdot p_1\cdot(\sum_{i=2}^n p_i^2)^{1/2})^{-1})$.
Furthermore, for any large enough $n$ there is a range of size $\Theta(m^\star)$ around the threshold, where the probability to generate satisfiable instances is bounded away from zero and one.
\item Otherwise, non-uniform random 2-SAT has a coarse satisfiability threshold at $m^\star=(C\cdot\sum_{i=2}^n p_i^2+ C\cdot p_1\cdot(\sum_{i=2}^n p_i^2)^{1/2})^{-1}$.
\end{enumerate}
\end{theorem}
\begin{proof}
Remember our discussion in \secref{what}.
We want to show that $m^\star$ is an asymptotic threshold function of non-uniform random $2$-SAT with respect to parameter $m$.
This means:
\begin{enumerate}
\item for any function $m\colon \N\to\R^+$ with $m\in o(m^\star)$ and any $\eps_P\in(0,1)$ there is an $n_0\in\N$ so that for all $n\ge n_0$ the probability to generate a satisfiable instance is at least $\eps_P$.
\item and for all $m\colon \N\to\R^+$ with $m\in \omega(m^\star)$ and any $\eps_P\in(0,1)$ there is an $n_0\in\N$ so that for all $n\ge n_0$ the probability to generate an unsatisfiable instance is at least $\eps_P$.
\end{enumerate}

If we want to show a sharp threshold, we have to certify that:
\begin{enumerate}
\item for any given constant $\eps_m\in(0,1)$, any function $m\colon \N\to\R^+$ with $m\le\eps_m\cdot m^\star$, and any $\eps_P\in(0,1)$ there is an $n_0\in\N$ so that for all $n\ge n_0$ the probability to generate a satisfiable instance is at least $\eps_P$.
\item and for any given constant $\eps_m>1$, all $m\colon \N\to\R^+$ with $m\ge \eps_m\cdot m^\star$, and any $\eps_P\in(0,1)$ there is an $n_0\in\N$ so that for all $n\ge n_0$ the probability to generate an unsatisfiable instance is at least $\eps_P$.
\end{enumerate}

\paragraph*{Case~1: $p_1^2\in o(\sum_{i=1}^n p_i^2)$}
The first case we consider is $p_1^2\in o(\sum_{i=1}^n p_i^2)$.
We want to show a sharp threshold at $m^\star=1/\sum_{i=1}^n p_i^2$.
The requirement $p_1^2\in o(\sum_{i=1}^n p_i^2)$ implies that we can choose any $\eps_1\in(0,1)$ and for some $n_0\in\N$ it holds that $p_1^2\le \eps_1\cdot \sum_{i=1}^n p_i^2$ for all $n\ge n_0$.
Thus, \lemref{sharp-lb} directly implies the first requirement for sharpness and \lemref{threshold-sharp} directly implies the second requirement.

\paragraph*{Case~2: $p_1^2\in\Theta\left(\sum_{i=1}^n p_i^2\right)$ and $p_2^2\in o\left(\sum_{i=2}^n p_i^2\right)$}
The second case we consider is $p_1^2\in\Theta\left(\sum_{i=1}^n p_i^2\right)$ and $p_2^2\in o\left(\sum_{i=2}^n p_i^2\right)$.
We want to show that $m^\star=(C\cdot p_1\cdot(\sum_{i=2}^n p_i^2)^{1/2})^{-1}$ is a coarse satisfiability threshold.
The requirements imply that there is some $\eps_1\in(0,1)$ and that we can choose an $\eps_2\in(0,1)$ so that $p_1^2\ge \eps_1\cdot \sum_{i=1}^n p_i^2$ and $p_2^2\le \eps_2\cdot\sum_{i=2}^n p_i^2$ hold simultaneously for all sufficiently large $n$.
If $m\in o(m^\star)$, then for any $\eps_m\in(0,1)$ there is an $n_0\in\N$ so that $m\le\eps_m\cdot m^\star$ for all $n\ge n_0$.
Thus, we can apply \lemref{bicycle-coarse} to certify the first condition on $m^\star$ being an asymptotic threshold function.
Equivalently, if $m\in \omega(m^\star)$, then for any $\eps_m>1$, there is an $n_0\in\N$ so that $m\ge\eps_m\cdot m^\star$ for all $n\ge n_0$.
We can now apply \corref{threshold-coarse2}.
Note that the lemma assumes $m=\eps_m\cdot m^\star$.
However, since the probability to generate satisfiable instances is non-increasing in $m$ (\cf \lemref{monotone-drawing}), it suffices to consider $m'=\eps_m \cdot m$.
The probability to generate satisfiable (unsatisfiable) instances at the actual number of clauses $m\ge m'$ can only be smaller (larger).
Thus, \corref{threshold-coarse2} implies the second condition on $m^\star$ being an asymptotic threshold function.

It remains to show that the threshold is not sharp.
Essentially, we are going to show that there is a non-empty range of $\eps_m\in[\eps_m^{(1)},\eps_m^{(2)}]$ for which the probability to generate satisfiable instances at $m=\eps_m\cdot m^\star$ is bounded away from zero and one by a constant.
If the threshold was sharp, at least one of the probabilities at positions $m^{(1)}$ and $m^{(2)}$ that are a constant factor apart would approach zero or one in the limit.
Since in our case neither the probability at $m^{(1)}=\eps_m^{(1)}\cdot m^\star$ nor the one at $m^{(2)}=\eps_m^{(2)}\cdot m^\star$ does, the threshold must be coarse.
First, \lemref{bicycle-coarse} states that for any $\eps_P\in(0,1)$ we can choose $\eps_m\in(0,1)$ small enough so that the probability to generate a satisfiable instance at $m=\eps_{m}\cdot m^\star$ is at least $\eps_P$.
We can now choose $\eps_P^{(1)}>\eps_P^{(2)}$.
This will result in some $\eps_m^{(1)}<\eps_m^{(2)}$ so that the probability to generate a satisfiable instance is at least $\eps_P^{(1)}$ at $m=\eps_m^{(1)}\cdot m^\star$ and $\eps_P^{(2)}$ at $m=\eps_m^{(2)}\cdot m^\star$.
However, \lemref{coarse-threshold1} states that for the same values of $\eps_m$ we can choose $\eps_2$ with $p^2 \le \eps_2\cdot\sum_{i=2}^n p_i^2$ small enough so that the probability to generate an unsatisfiable instance at $m=\eps_m\cdot m^\star$ is at least $\eps_P$ for any constant 
\[\eps_P<\frac{\eps_m^4}{\eps_m^4+3\cdot\eps_m^2\left(1+\frac{1}{\eps_1}+\frac{1}{\eps_1^2}\right)+8}.\]
This requirement on $\eps_2$ holds for all sufficiently large $n$, since $p_2^2\in o(\sum_{i=2}^n p_i^2)$.
Thus, for $\eps_m^{(1)}$ and $\eps_m^{(2)}$ both the probability to generate a satisfiable and the probability to generate an unsatisfiable instance are at least some constant depending only on $\eps_1$ and $\eps_m$ if $n$ is large enough.
Since both $\eps_m$ and $\eps_1$ are fixed, these probabilities cannot approach zero or one in the limit.
This implies coarseness of the threshold as desired.

\paragraph*{Case~3: $p_1^2\in\Theta\left(\sum_{i=1}^n p_i^2\right)$ and $p_2^2\in\Theta\left(\sum_{i=2}^n p_i^2\right)$}
The third case we consider is $p_1^2\in\Theta\left(\sum_{i=1}^n p_i^2\right)$ and $p_2^2\in \Theta\left(\sum_{i=2}^n p_i^2\right)$.
We want to show a coarse satisfiability threshold at $m^\star=q_{\max}^{-1}$, where $q_{\max}=(C\cdot p_1\cdot p_2)/2$ is the maximum clause probability.
Note that in this case, $m^\star\in\Theta((C\cdot p_1\cdot(\sum_{i=2}^n p_i^2)^{1/2})^{-1})$.
As before, we can apply \lemref{bicycle-coarse} to certify the first condition on $m^\star$ being an asymptotic threshold function. 
The second condition is implied by our results in \secref{coarse}.
The second statement of \corref{coarse1} certifies the second condition if $\eps_q\in(0,1)$ with $q_{\max}\le\eps_q$ is sufficiently small and $\eps_m>0$ with $m=\eps_m/q_{\max}$ is sufficiently large.
If $q_{\max}\in o(1)$ and $m\in\omega(m^\star)=\omega(q_{\max}^{-1})$ both conditions hold for all sufficiently large values of $n$.
If $q_{\max}\notin o(1)$, the second condition holds as follows.
According to the second condition we are given an $m\in\omega(m^\star)$ and an $\eps_P\in(0,1)$.
We choose $\eps_m$ sufficiently large and $\eps_q$ sufficiently small so that we generate an unsatisfiable instance with probability at least $\eps_P$ according to \corref{coarse1}.
Then, we fix that value of $\eps_q$ and choose an $\eps_m$ sufficiently large so that we generate an unsatisfiable instance with probability at least $\eps_P$ according to \corref{coarse2}.
Since $m\in\omega(m^\star)$, we know that $m\ge\eps_m / q_{\max}$ holds for both values of $\eps_m$ we chose as soon as $n$ is sufficiently large. 
For all such values of $n$ we either have $q_{\max}\le \eps_q$ or $q_{\max}>\eps_q$.
Thus, the second condition holds either according to \corref{coarse1} or according to \corref{coarse2}.

As in the previous case we have to rule out that the threshold is sharp.
Again, we will show that there is a range of $m\in\Theta(m^\star)$ where the probability to generate satisfiable instances is bounded away from zero and one by constants.
However, depending on whether or not $q_{\max}\in o(1)$, this range can be at different positions in $\Theta(m^\star)$.
This is due to the fact that, if $q_{\max}\in\Omega(1)$, then $m^\star\in\Oh(1)$.
However, in order to have an unsatisfiable instance we need $m\ge4$.
At the same time \lemref{bicycle-coarse} might require us to choose an $\eps_m$ so small that this is not guaranteed anymore.
Thus, in the case that $q_{\max}\in\Omega(1)$ we choose a different range of $\eps_m$ with $m=\eps_m \cdot m^\star$.

We start with $q_{\max}\in o(1)$.
Now, note that $m^\star=q_{\max}^{-1}\in\Theta((C\cdot p_1\cdot(\sum_{i=2}^n p_i^2)^{1/2})^{-1})$.
Thus, there are constants ${\eps_l}^{(1)},{\eps_l}^{(2)}>0$ such that ${\eps_l}^{(1)}\cdot (C\cdot p_1\cdot(\sum_{i=2}^n p_i^2)^{1/2})^{-1}\le m^\star \le {\eps_l}^{(2)}\cdot (C\cdot p_1\cdot(\sum_{i=2}^n p_i^2)^{1/2})^{-1}$ for all sufficiently large values of $n$.
We now choose $m^{(1)}={\eps_m}^{(1)}\cdot m^\star$ and $m^{(2)}={\eps_m}^{(2)}\cdot m^\star$ with ${\eps_m}^{(1)}<{\eps_m}^{(2)}$.
It holds that $m^{(1)}\le{\eps_m}^{(1)}\cdot{\eps_l}^{(2)}\cdot (C\cdot p_1\cdot(\sum_{i=2}^n p_i^2)^{1/2})^{-1}$ and $m^{(2)}\le{\eps_m}^{(2)}\cdot{\eps_l}^{(2)}\cdot (C\cdot p_1\cdot(\sum_{i=2}^n p_i^2)^{1/2})^{-1}$.
Since \lemref{bicycle-coarse} only requires $p_1^2\in\Theta\left(\sum_{i=1}^n p_i^2\right)$, we can now use it equivalently to the second case.
That means, if we choose the constants ${\eps_m}^{(1)}$ and ${\eps_m}^{(2)}$ small enough, the probability to generate satisfiable instances at both number of clauses is at least a constant depending only on $\eps_1$, ${\eps_l}^{(2)}$ and $\eps_m$, all of which are constant for sufficiently large $n$.
%
%
For the same values of $m$ we want to have a constant lower bound on the probability to generate unsatisfiable instances.
Again, our results from \secref{coarse} provide us with these lower bounds.
According to \corref{coarse1} it holds for both $m^{(1)}$ and $m^{(2)}$ that the probability to generate unsatisfiable instances can be lower bounded by a constant that only depends on $\eps_m$ as soon as $\eps_q\in(0,1/2^k)$ with $q\le\eps_q$ is small enough.
This holds for all sufficiently large $n$, since we assumed $q_{\max}\in o(1)$.
Since $\eps_m^{(1)}$ and $\eps_m^{(2)}$ are fixed constants, the resulting probability is constant as well.
This gives us the desired result if $q_{\max}\in o(1)$.

Now we consider $q_{\max}\in \Omega(1)$.
It holds that $m^\star=1/q_{\max}\in\Oh(1)$.
Then, we can simply choose any two constants $\eps_m^{(1)},\eps_m^{(2)}>1$ that are sufficiently far apart for $m^{(1)}=\eps_m^{(1)}\cdot m^\star$ and $m^{(2)}=\eps_m^{(1)}\cdot m^\star$ to be different integers.
Both the probability to generate a satisfiable and an unsatisfiable instance are at least $q_{\max}^m$.
Since $q_{\max}\in\Omega(1)$, $q_{\max}$ is lower-bounded by a constant for all sufficiently large $n$.
The same holds for $m=\eps_m \cdot m^\star$, since $q_{\max}\le 1/2^k$ and $\eps_m$ is some fixed constant as well.
Thus, the probabilities to generate satisfiable and unsatisfiable instances at $m^{(1)}$ and $m^{(2)}$ are bounded away from zero and one as desired.

Last, we consider $q_{\max}\notin o(1)$ and $q_{\max}\notin \Omega(1)$.
First, we choose ${\eps_m}^{(1)},{\eps_m}^{(2)}>0$ as before and $\eps_q$ small enough so that the same bounds hold as in the case of $q_{\max}\in o(1)$.
Then, we assume $q_{\max}\ge \eps_q$ and choose $\eps_m^{(1)},\eps_m^{(2)}>1$ as in the case of $q_{\max}\in\Omega(1)$.
This implies probabilities of at least $\eps_q^{\eps_m/\eps_q}$ to generate a satisfiable/unsatisfiable instance.
For all sufficiently large $n$ we either have $q_{\max}\le \eps_q$ or $q_{\max}>\eps_q$.
Thus, the threshold is coarse either way.

\paragraph*{Case 4: Otherwise}
The last case we consider is that none of the three other cases hold.
We are going to show that there is a coarse threshold at $m^\star=(C\cdot\sum_{i=2}^n p_i^2+ C\cdot p_1\cdot(\sum_{i=2}^n p_i^2)^{1/2})^{-1}$.
The threshold function is chosen such that, depending on $p_1^2$ and $p_2^2$, either the first or the second term dominates.
That means, if $\eps_1\in(0,1)$ with $p_1^2\le\eps_1\sum_{i=1}^n p_i^2$ is small enough, then $m^\star\in\Theta((C\cdot\sum_{i=1}^n p_i^2)^{-1})$.
In that case, we have an asymptotic threshold as if $p_1^2\in o(\sum_{i=1}^n p_i^2)$.
Otherwise, $m^\star\in\Theta( (C\cdot p_1\cdot(\sum_{i=2}^n p_i^2)^{1/2})^{-1})$.
Then, we have an asymptotic threshold as if $p_1^2\in \Theta(\sum_{i=1}^n p_i^2)$.
To make things easier, let us investigate $m\in o(m^\star)$ and $m\in\omega(m^\star)$ separately.

Let us start with $m\in o(m^\star)$.
We are given an $\eps_P\in(0,1)$ and have to assure that the probability to generate a satisfiable instance at $m$ is at least $\eps_P$.
Thus, we first choose some $\eps_m\in(0,1)$ with $m=\eps_m\cdot m^\star$.
Furthermore, we assume that we can choose an $\eps_1$ with $p_1^2\le\eps_1\cdot\sum_{i=1}^n p_i^2$.
We now want to apply \lemref{sharp-lb}.
However, the lemma is stated with respect to the threshold function $(\sum_{i=1}^n p_i^2)^{-1}$.
Thus, we first have to relate our $m^\star$ to this function, assuming that we can choose $\eps_m$ and $\eps_1$ arbitrarily small.
It holds that 
\[\sum_{i=1}^n p_i^2=p_1^2+\sum_{i=2}^n p_i^2\le\eps_1\cdot\sum_{i=1}^n p_i^2+\sum_{i=2}^n p_i^2.\]
Thus, $\sum_{i=2}^n p_i^2\ge(1-\eps_1)\cdot \sum_{i=1}^n p_i^2$ and therefore 
\[m^\star=\left(C\cdot\sum_{i=2}^n p_i^2+ C\cdot p_1\cdot\left(\sum_{i=2}^n p_i^2\right)^{1/2}\right)^{-1}\le \frac{1}{1-\eps_1}\cdot \left(C\cdot \sum_{i=1}^n p_i^2\right)^{-1}.\]
Note that $C=1/(1-\sum_{i=1}^n p_i^2)\ge1$.
For any fixed $\eps_1$ this especially means $m\in o(m^\star)$ implies $m\in o(\left(\sum_{i=1}^n p_i^2\right)^{-1})$.
This allows us to apply \lemref{sharp-lb} with $\eps_m\in(0,1)$ and an $\eps_1\in(0,1)$ small enough to give us a probability of at least $\eps_P$.
The requirement $m\in o((\sum_{i=1}^n p_i^2)^{-1})$ guarantees that the condition on $\eps_m$ is fulfilled.
However, we can not guarantee that $p_1^2\le\eps_1\cdot\sum_{i=1}^n p_i^2$ holds.
Thus, we now assume $p_1^2\ge\eps_1\cdot\sum_{i=1}^n p_i^2$.
This is what we also assumed in the case $p_1^2\in\Theta(\sum_{i=1}^n p_i^2)$ and in fact, we can use the same results now.
That is, we can use \lemref{bicycle-coarse}.
Again, we have to relate $m^\star$ to the threshold function $(C\cdot p_1\cdot(\sum_{i=2}^n p_i^2)^{1/2})^{-1}$ the lemma uses.
However, we can easily see that $m^\star\le (C\cdot p_1\cdot(\sum_{i=2}^n p_i^2)^{1/2})^{-1}$.
Thus, any function $m\in o(m^\star)$ is also in $o((C\cdot p_1\cdot(\sum_{i=2}^n p_i^2)^{1/2})^{-1})$.
\lemref{bicycle-coarse} states that for the value of $\eps_1\in(0,1)$ we have chosen before and the given value $\eps_P$, we can now choose $\eps_m\in(0,1)$ with $m\le\eps_m\cdot m^\star\le\eps_m\cdot (C\cdot p_1\cdot(\sum_{i=2}^n p_i^2)^{1/2})^{-1}$ sufficiently small so that the probability to generate a satisfiable instance is at least $\eps_P$.
Thus, for all large enough values of $n$, both $m\le \eps_m\cdot \left(\sum_{i=1}^n p_i^2\right)^{-1}$ and $m\le \eps_m\cdot (C\cdot p_1\cdot(\sum_{i=2}^n p_i^2)^{1/2})^{-1}$ hold.
Then, the probability of at least $\eps_P$ at $m$ is guaranteed either by \lemref{sharp-lb} if $p_1^2\le\eps_1\cdot\sum_{i=1}^n p_i^2$ or by \lemref{bicycle-coarse} if $p_1^2>\eps_1\cdot\sum_{i=1}^n p_i^2$.

Let us now turn to $m\in\omega(m^\star)$.
Given an $\eps_P\in(0,1)$ we want to show that the probability to generate an unsatisfiable instance is at least $\eps_P$ at $m$.
Again, we assume that we can choose an $\eps_1$ with $p_1^2\le\eps_1\cdot\sum_{i=1}^n p_i^2$.
Then, we can apply \lemref{threshold-sharp}.
However, we first have to compare $m^\star$ to $(\sum_{i=1}^n p_i^2)^{-1}$ again.
First, it holds that $p_1^2\le \eps_1\cdot \sum_{i=1}^n p_i^2\le \eps_1\cdot p_1$ and thus $p_1\le\eps_1$.
This implies $C=1/(1-\sum_{i=1}^n p_i^2)\le 1/(1-\eps_1)$.
It also implies
\[m^\star=\left(C\cdot\sum_{i=2}^n p_i^2+ C\cdot p_1\cdot\left(\sum_{i=2}^n p_i^2\right)^{1/2}\right)^{-1}\ge \left(2\cdot C\cdot \sum_{i=1}^n p_i^2\right)^{-1}\ge\frac{1-\eps_1}{2\cdot\sum_{i=1}^n p_i^2},\]
since $p_1\le(\sum_{i=1}^n p_i^2)^{1/2}$ and $\sum_{i=2}^n p_i^2\le\sum_{i=1}^n p_i^2$. 
This means, $m\in\omega(m^\star)$ implies $m\in\omega(1/\sum_{i=1}^n p_i^2)$.
We can now choose some $\eps_m>1$ with $m\ge \eps_m/\sum_{i=1}^n p_i^2$ and apply \lemref{threshold-sharp} to show that the probability to generate an unsatisfiable instance at $\eps_m/\sum_{i=1}^n p_i^2$ is at least $\eps_P$ if $\eps_1\in(0,1)$ with $p_1^2\le\eps_1\cdot\sum_{i=1}^n p_i^2$ is small enough.
Note that this probability only holds at $\eps_m/\sum_{i=1}^n p_i^2$.
However, due to the monotonicity of the probability function in our model (\cf \lemref{monotone-drawing}), it also holds for all $m\ge\eps_m/\sum_{i=1}^n p_i^2$.
Since $m\in\omega(1/\sum_{i=1}^n p_i^2)$, $m\ge\eps_m/\sum_{i=1}^n p_i^2$ holds for all sufficiently large values of $n$.
Up to this point we assumed $p_1^2\le\eps_1\cdot\sum_{i=1}^n p_i^2$ for the value $\eps_1$ we needed in \lemref{threshold-sharp}.
Now we assume $p_1^2>\eps_1\cdot\sum_{i=1}^n p_i^2$ for that same value $\eps_1$.
However, we have to make another distinction depending on $p_2^2$.
First, we assume $p_2^2\le \eps_2\cdot\sum_{i=2}^n p_i^2$ for some $\eps_2\in(0,1)$ of our choice.
We want to use \corref{threshold-coarse2} to show the bound we need.
Again, we have to show that $m^\star=(C\cdot\sum_{i=2}^n p_i^2+ C\cdot p_1\cdot(\sum_{i=2}^n p_i^2)^{1/2})^{-1}$ is large enough compared to $(C\cdot p_1\cdot(\sum_{i=2}^n p_i^2)^{1/2})^{-1}$.
It holds that $\sum_{i=2}^n p_i^2\le\sum_{i=1}^n p_i^2\le p_1^2/\eps_1$.
Thus, $C\cdot\sum_{i=2}^n p_i^2\le C\cdot p_1\cdot(\sum_{i=2}^n p_i^2)^{1/2}/\sqrt{\eps_1}$ and 
\[m^\star\ge \frac{1}{1+1/\sqrt{\eps_1}}\cdot \left(C\cdot p_1\cdot\left(\sum_{i=2}^n p_i^2\right)^{1/2}\right)^{-1}.\]
Thus, for our fixed value $\eps_1$ it holds that $m\in\omega(m^\star)$ implies $m\in\omega((C\cdot p_1\cdot(\sum_{i=2}^n p_i^2)^{1/2})^{-1})$.
We can now apply \corref{threshold-coarse2} for some sufficiently large $\eps_m>0$ and some sufficiently small $\eps_2\in(0,1)$ to have a probability of at least $\eps_P$ for generating an unsatisfiable instance.
As with $p_1^2$, we now assume the contrary for $p_2^2$, \ie $p_2^2>\eps_2\cdot\sum_{i=2}^n p_i^2$ for the value $\eps_2$ we just chose.
We want to use \corref{coarse2} to show a probability of at least $\eps_P$ for generating an unsatisfiable instance.
The lemma holds if we have $m\ge\eps_m/q_{\max}$ for an $\eps_m>0$ large enough.
Under our assumptions $p_1^2>\eps_1\cdot\sum_{i=1}^n p_i^2$ and $p_2^2>\eps_2\cdot\sum_{i=2}^n p_i^2$ it holds that
\[m^\star=\left(C\cdot\sum_{i=2}^n p_i^2+ C\cdot p_1\cdot\left(\sum_{i=2}^n p_i^2\right)^{1/2}\right)^{-1}\ge \left(\left(\frac{1}{\sqrt{\eps_1\cdot \eps_2}}+\frac{1}{\sqrt{\eps_2}}\right)\cdot 2\cdot q_{\max}\right)^{-1},\]
because $\sum_{i=2}^n p_i^2\le(\sum_{i=1}^n p_i^2)^{1/2}\cdot(\sum_{i=2}^n p_i^2)^{1/2}$ and $q_{\max}=C\cdot p_1\cdot p_2/2$.
Therefore, $m\in\omega(m^\star)$ implies $m\in\omega(q_{\max}^{-1})$ in this case.
Thus, for any $\eps_m>0$ it holds that $m\ge\eps_m/q_{\max}$ for all sufficiently large $n$ and \corref{coarse2} gives us a probability of at least $\eps_P$ as desired.
Note that, depending on the lemma or corollary we used, we made different choices for $\eps_m$.
However, all these choices are satisfied for all sufficiently large $n$, since $m$ always grows asymptotically faster than the respective threshold function in all three cases.
From this point, either \lemref{threshold-sharp}, or \corref{threshold-coarse2}, or \corref{coarse2} guarantees that the probability to generate an unsatisfiable instance is at least $\eps_P$.

It remains to show that the threshold is not sharp in the last case.
If none of the first three cases hold, then either 
\begin{enumerate}
\item $p_1^2\notin\Theta(\sum_{i=1}^n p_i^2)$ and $p_1^2\notin o(\sum_{i=1}^n p_i^2)$ or 
\item $p_1^2\in\Theta(\sum_{i=1}^n p_i^2)$, but $p_2^2\notin\Theta(\sum_{i=2}^n p_i^2)$ and $p_2^2\notin o(\sum_{i=2}^n p_i^2)$.
\end{enumerate}
If $p_1^2\notin o(\sum_{i=1}^n p_i^2)$, then there is a constant $\eps_1$ so that for any $n_0\in\N$ there must be an $n\ge n_0$ such that $p_1^2\ge \eps_1\cdot \sum_{i=1}^n p_i^2$.
We now consider only the values of $n$, where $p_1^2\ge \eps_1\cdot \sum_{i=1}^n p_i^2$ holds.
Essentially, we treat this as an ensemble with $p_1^2\in\Theta(\sum_{i=1}^n p_i^2)$.
Now we either have $p_2^2\in o(\sum_{i=2}^n p_i^2)$, or $p_2^2\in\Theta(\sum_{i=2}^n p_i^2)$, or neither of the two.
For $p_2^2\in o(\sum_{i=2}^n p_i^2)$ we know from case~2 that there is a range of $m$ of size $\Theta((C\cdot p_1\cdot(\sum_{i=2}^n p_i^2)^{1/2})^{-1})$, where the probability function approaches neither zero nor one.
For $p_2^2\in\Theta(\sum_{i=2}^n p_i^2)$ the same holds for $\Theta(q_{\max}^{-1})$ due to case~3.
From proving that $m^\star$ is an asymptotic threshold function we know that, depending on $p_1^2$ and $p_2^2$, $m^\star\in\Theta((C\cdot p_1\cdot(\sum_{i=2}^n p_i^2)^{1/2})^{-1})$ or $m^\star\in\Theta(q_{\max}^{-1})$, respectively.
Thus for all sufficiently large values of $n$ we consider, there are ranges of $m\in\Theta(m^\star)$, where the probability function approaches neither zero nor one.
If we considered all values of $n$ now, the ones we selected before prevent our probability function from approaching zero or one in the chosen ranges.
Thus, the threshold cannot be sharp.
If $p_1^2\in\Theta(\sum_{i=1}^n p_i^2)$, or if we have infinitely many values of $n$ where this holds as for $p_1^2\notin o(\sum_{i=1}^n p_i^2)$, and $p_2^2\notin o(\sum_{i=2}^n p_i^2)$ a similar argumentation holds for the chosen values of $n$ with $p_2^2\ge \eps_2\cdot \sum_{i=2}^n p_i^2$, \ie we can assume $p_1^2\in\Theta(\sum_{i=1}^n p_i^2)$ and $p_2^2\in\Theta(\sum_{i=2}^n p_i^2)$ for those values.
\end{proof}

\section{Example Applications of our Theorem} \label{sec:2-SAT-examples}

We now apply \thmref{main1} to determine the satisfiability threshold behavior of non-uniform random 2-SAT with different ensembles of probability distributions.

\subsection{Random 2-SAT}

For random $2$-SAT the probability distribution for $n\in\N$ is $\vec{p}^{(n)}=\left(\frac1n,\frac1n,\ldots,\frac1n\right)$.
This means $p_1^2 = \frac{1}{n^2}$ and $\sum_{i=1}^n p_i^2=\frac1n$.
We see that $p_1^2\in o(\sum_{i=1}^n p_i^2)$.
The first case of our theorem now tells us that there is a sharp threshold at $m^\star=(\sum_{i=1}^n p_i^2)^{-1}=n$.
This is exactly what \citet{chvatalreed92} found out as well.

\begin{figure}[t!]
  \centering
  \includegraphics[width=0.9\textwidth]{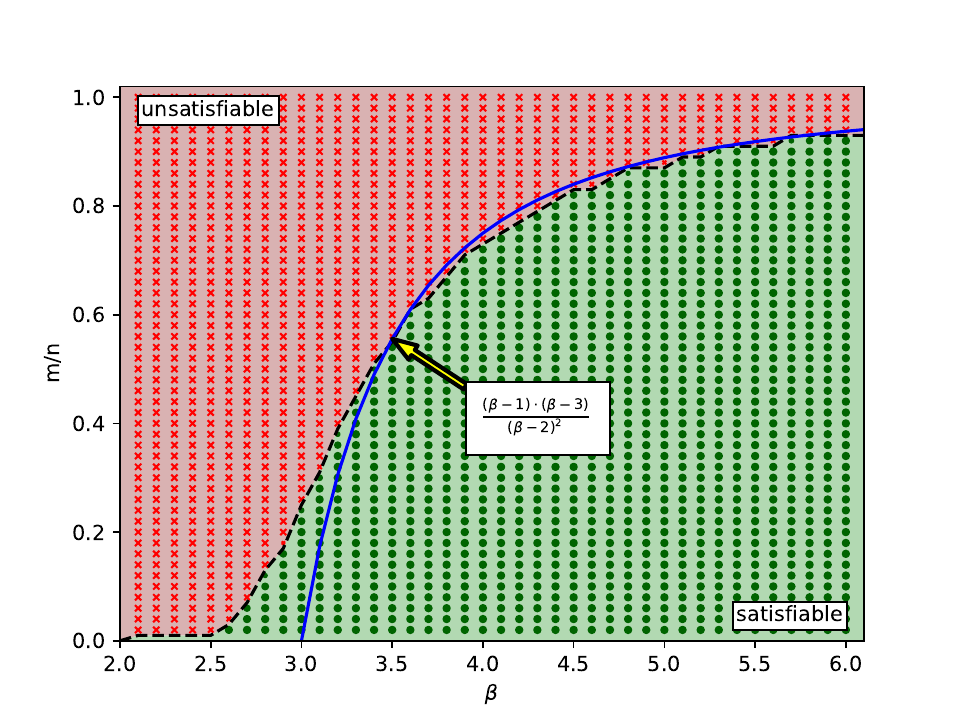}
  \caption{\label{fig:phase1} Phase diagram for power-law random 2-SAT formulas with $n=10^7$ variables.
	Each point is a sample of 100 random instances at the given parameter combination.
We drew a red cross if all instances were unsatisfiable and a green dot if at least one instance was satisfiable with the size of the dot scaling with the fraction of satisfiable instances.
We empirically observe a sharp phase transition
  (\protect\tikz[overlay]{\protect\path[draw=black,line width= 1.6pt,dash pattern=on 4pt off 4pt,line cap=round]
(0.04,0.1) -- (0.5,0.1);}\hspace*{5mm}), which closely matches
the theoretical bound of \thmref{main1}
(\protect\tikz[overlay]{\protect\path[draw=blue,line width= 1.6pt]
(0.04,0.1) -- (0.46,0.1);}\hspace*{5mm}).}\label{fig:pl-2-sat}
\end{figure}

\subsection{Power-law Random 2-SAT} \label{sec:pl2sat}

\thmref{main1} implies the following corollary.
\begin{corollary}\label{cor:pl2sat}
For power-law random 2-SAT, if
\begin{itemize}
\item $\beta<3$, then the threshold is coarse at $m^\star\in\Theta\left(q_{\max}^{-1}\right)\in\Theta\left(n^{2(\beta-2)/(\beta-1)})\right)$.
\item $\beta=3$, then the threshold is sharp at $m^\star=4\cdot\frac{n}{\ln n}$.
\item $\beta>3$, then the threshold is sharp at $m^\star=\frac{(\beta-1)\cdot(\beta-3)}{(\beta-2)^2}\cdot n$.
\end{itemize}
\end{corollary}
\begin{proof}
For power-law random $2$-SAT we assume some fixed $\beta>2$.
Then for $n\in\N$ the distribution is $\vec{p}^{(n)}=\left(p_1^{(n)},p_2^{(n)}\ldots,p_n^{(n)}\right)$ with
\[p_i^{(n)}=\frac{(n/i)^{\frac{1}{\beta-1}}}{\sum_{j=1}^n (n/j)^{\frac{1}{\beta-1}}}.\]
It holds that $p_1\ge p_2 \ge \ldots\ge p_n$.
\lemref{pl-aux1} tells us that
\begin{eqnarray*}
p_1^2 &=& (1\pm o(1))\cdot \left(\frac{\beta-2}{\beta-1}\right)^2\cdot n^{-2\frac{\beta-2}{\beta-1}},\\
p_2^2 &=& (1\pm o(1))\cdot \left(\frac{\beta-2}{\beta-1}\right)^2\cdot2^{-\frac{1}{\beta-1}}\cdot n^{-2\frac{\beta-2}{\beta-1}},\text{ and}\\
\sum_{i=1}^n p_i^2 &=& \begin{cases}\Theta\left(n^{-2\frac{\beta-2}{\beta-1}}\right)&\text{for }\beta<3\\
\left(1\pm o(1)\right)\cdot\frac14\cdot\frac{\ln n}{n}&\text{for }\beta=3\\
\left(1\pm o(1)\right)\cdot\frac{(\beta-2)^2}{(\beta-3)\cdot(\beta-1)}\cdot n^{-1} & \text{for }\beta>3.\end{cases}
\end{eqnarray*}

For $\beta<3$ it holds that $p_1^2\in\Theta\left(\sum_{i=1}^n p_i^2\right)$ and $p_2^2\in\Theta(\sum_{i=2}^n p_i^2)$.
Thus, there is a coarse threshold at $m^\star=q_{\max}^{-1}\in\Theta(n^{2(\beta-2)/(\beta-1)})$, since $q_{\max}=C\cdot p_1\cdot p_2/2$, $p_1,p_2\in\Theta\left(n^{-(\beta-2)/(\beta-1)}\right)$, and $C=1/(1-\sum_{i=1}^n p_i^2)=1+o(1)$.

For $\beta=3$ it holds that $p_1^2\in o(\sum_{i=1}^n p_i^2)$.
Thus, there is a sharp satisfiability threshold at $m^\star=4\cdot\frac{n}{\ln n}$.

For $\beta>3$ it also holds that $p_1^2\in o(\sum_{i=1}^n p_i^2)$.
Thus, there is a sharp satisfiability threshold at $m^\star=\frac{(\beta-1)\cdot(\beta-3)}{(\beta-2)^2}\cdot n$.
\end{proof}

\figref{pl-2-sat} visualizes the empirical threshold position compared to the theoretical position according to \corref{pl2sat}.

\subsection{Geometric Random 2-SAT}

\begin{figure}[t!]
  \centering
  \includegraphics[width=0.95\textwidth]{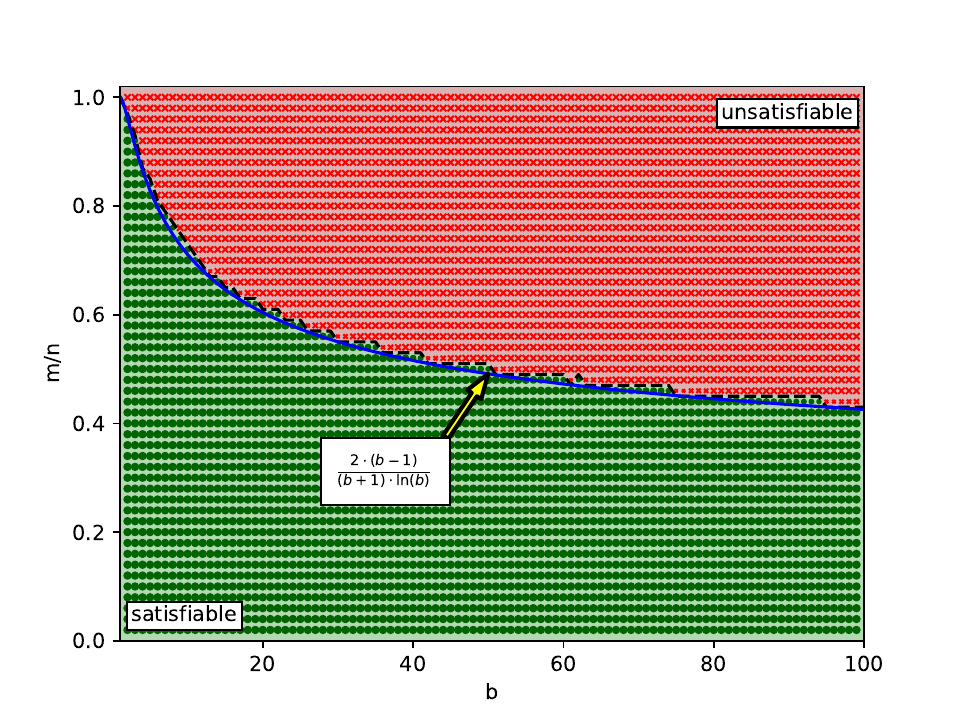}
  \caption{\label{fig:phase2} Phase diagram for geometric random 2-SAT formulas with $n=10^6$ variables.
	Each point is a sample of 100 random instances at the given parameter combination.
We drew a red cross if all instances were unsatisfiable and a green dot if at least one instance was satisfiable with the size of the dot scaling with the fraction of satisfiable instances.
We empirically observe a sharp phase transition
  (\protect\tikz[overlay]{\protect\path[draw=black,line width= 1.6pt,dash pattern=on 4pt off 4pt,line cap=round]
(0.04,0.1) -- (0.5,0.1);}\hspace*{5mm}), which closely matches
the theoretical bound of \thmref{main1}
(\protect\tikz[overlay]{\protect\path[draw=blue,line width= 1.6pt]
(0.04,0.1) -- (0.46,0.1);}\hspace*{5mm}).}\label{fig:geo-2-sat}
\end{figure}

\thmref{main1} implies the following corollary.
\begin{corollary}\label{cor:g2sat}
For geometric random 2-SAT with base $b>1$ there is a sharp threshold at $m^\star=\frac{2\cdot(b-1)}{(b+1)\cdot\ln b}\cdot n$.
\end{corollary} 
\begin{proof}
We assume some fixed $b>1$.
Then for $n\in\N$ the distribution is $\vec{p}^{(n)}=\left(p_1^{(n)},p_2^{(n)}\ldots,p_n^{(n)}\right)$ with
\[p_i^{(n)}=\frac{b\cdot(1-b^{-1/n})}{b-1}\cdot b^{-(i-1)/n}.\]
Again, it holds that $p_1\ge p_2 \ge \ldots\ge p_n$.
%
%

\lemref{g-aux1} tells us
\[p_1^2=(1-o(1))\cdot\left(\frac{b\cdot\ln b}{b-1}\right)^2\cdot n^{-2}.\]
and
\[\sum_{i=1}^n p_i^2 = (1\pm o(1))\cdot\frac{b+1}{b-1}\cdot\frac{\ln b}{2}\cdot n^{-1}.\]
Since $p_1^2\in o(\sum_{i=1}^n p_i^2)$, the threshold is sharp at 
$m^\star=\frac{2\cdot(b-1)}{(b+1)\cdot\ln b}\cdot n$.
\end{proof}
Again, \figref{geo-2-sat} visualizes the empirical threshold position compared to the theoretical one from \corref{g2sat}.

\section{Discussion and Future Work}

We showed a dichotomy of coarse and sharp thresholds for the non-uniform random $2$-SAT model depending on the variable probability distribution.
In the case of a coarse threshold, the coarseness either stems from two variables being present in too many clauses and forming an unsatisfiable sub-formula of size $4$ with constant probability or from a snake with three variables which emerges with constant probability.
Furthermore we determined the exact position of the satisfiability threshold in the case of a sharp threshold.
Hence, our result generalizes the seminal works by \citet{chvatalreed92} and by \citet{Goerdt1996threshold} to arbitrary variable probability distributions.
It allows us to prove or disprove an equivalent of the satisfiability threshold conjecture for non-uniform random $2$-SAT.
For example for power-law random 2-SAT, an equivalent of the conjecture holds for power law exponents $\beta\ge 3$ and the satisfiability threshold is at exactly $\frac{(\beta-3)\cdot(\beta-1)}{(\beta-2)^2}\cdot n$ for $\beta>3$ and exactly at $4\cdot\frac{n}{\ln n}$ for $\beta=3$.

The grand goal of our works is to show similar results for higher values of $k$, where we already made a first step by showing sharpness for certain variable probability distributions~\cite{SAT18}.
Another direction we are interested in for $k\ge3$ is proving bounds on the average computational hardness of formulas around the threshold, for example by showing resolution lower bounds like Mull et al.~\cite{communityHardness}.
We achieved first results in that direction for Power-Law Random $k$-SAT in \cite{BlaesiusFGLR21}.

\bibliography{2SAT}

\appendix
\section{Omitted Proofs}

\stateplbounds*
\begin{proof}
It holds that
\[1+\int_{i=1}^{n}\left(\frac{n}{i}\right)^{1/(\beta-1)}\,di\le\sum_{i=1}^n \left(\frac{n}{i}\right)^{1/(\beta-1)}\le n^{1/(\beta-1)}+\int_{i=1}^{n}\left(\frac{n}{i}\right)^{1/(\beta-1)}\,di.\]
Since
\[\int_{i=1}^{n}\left(\frac{n}{i}\right)^{1/(\beta-1)}\,di=\frac{\beta-1}{\beta-2}\cdot \left(n-n^{1/(\beta-1)}\right),\]
for $\beta>2$, we have
\[\sum_{i=1}^n \left(\frac{n}{i}\right)^{1/(\beta-1)}= (1-o(1))\cdot \frac{\beta-1}{\beta-2}\cdot n\]
and thus
\[p_i=(1+o(1))\cdot\frac{\beta-2}{\beta-1}\cdot n^{-1}\cdot\left(\frac{n}{i}\right)^{1/(\beta-1)}.\]
For $\beta=3$ it holds that
\[\sum_{i=1}^n p_i^2 = (1+o(1))\cdot\left(\frac{\beta-2}{\beta-1}\right)^2\cdot n^{-1} \cdot \sum_{i=1}^n \frac1i=(1+ o(1))\cdot\frac14\cdot \frac{\ln}{n}.\]
Otherwise, we consider the function
\[\left(\frac{\beta-2}{\beta-1}\right)^2\cdot n^{-2\frac{\beta-2}{\beta-1}}\cdot i^{-\frac{2}{\beta-1}}\]
and the integral
\[\int_{i=1}^{n} \left(\frac{\beta-2}{\beta-1}\right)^2\cdot n^{-2\frac{\beta-2}{\beta-1}}\cdot i^{-\frac{2}{\beta-1}}\,di=\frac{(\beta-2)^2}{(\beta-3)\cdot(\beta-1)}\cdot\left(n^{-1}-n^{-2\frac{\beta-2}{\beta-1}}\right).\]
Again, we can use the relation of sum and integral to derive
\[\sum_{i=1}^n p_i^2 =\left(1+ o(1)\right)\cdot\frac{(\beta-2)^2}{(\beta-3)\cdot(\beta-1)}\cdot n^{-1}\]
for $\beta>3$ and
\[\sum_{i=1}^n p_i^2 \in \Theta\left(n^{-2\frac{\beta-2}{\beta-1}}\right)\]
for $\beta<3$ as desired.
Although we do not derive the exact leading factor in the last case, the asymptotic expression is sufficient for our results.
\end{proof}

\stategeombounds*
\begin{proof}
It holds that
\[\sum_{i=1}^n p_i^2=\frac{b^2 \cdot(1-b^{-1/n})^2}{(b-1)^2}\cdot\sum_{i=0}^{n-1} \left(\frac{1}{b^{2/n}}\right)^i=\frac{b+1}{b-1}\cdot\frac{1-b^{-1/n}}{1+b^{-1/n}},\]
since this is a simple geometric series.
We get
\begin{align*}
\sum_{i=1}^n p_i^2=\frac{b+1}{b-1}\cdot\frac{1-b^{-1/n}}{1+b^{-1/n}}
&=\frac{b+1}{b-1}\cdot\frac{1-e^{-\ln (b)/n}}{1+e^{-\ln(b)/n}}\\
&\le \frac{b+1}{b-1}\cdot\frac{1-(1-\ln (b)/n)}{1+(1-\ln(b)/n)}\\
&= \left(1+\frac{\ln b}{2n-\ln b}\right)\cdot\frac{b+1}{b-1}\cdot\frac{\ln (b)}{2n}
\end{align*}
and
\begin{align*}
\sum_{i=1}^n p_i^2=\frac{b+1}{b-1}\cdot\frac{1-b^{-1/n}}{1+b^{-1/n}}
&=\frac{b+1}{b-1}\cdot\frac{b^{1/n}-1}{b^{1/n}+1}\\
&=\frac{b+1}{b-1}\cdot\frac{e^{\ln (b)/n}-1}{(1+(b-1))^{1/n}+1}\\
&\ge\frac{b+1}{b-1}\cdot\frac{(1+\ln (b)/n)-1}{1+(b-1)/n+1}\\
&=\left(1-\frac{b-1}{2n+b-1}\right)\cdot\frac{b+1}{b-1}\cdot\frac{\ln (b)}{2n},
\end{align*}
where we used Bernoulli's inequality in the denominator of the third line.
This establishes the first statement.
For the second statement observe
\[p_1=\frac{b\cdot(1-b^{-1/n})}{b-1}=\frac{b\cdot(1-e^{-\ln(b)/n})}{b-1}\le \frac{b\ln (b)}{b-1}\cdot n^{-1}\]
and
\begin{align*}
p_1=\frac{b\cdot(1-b^{-1/n})}{b-1}&=\frac{b\cdot(b^{1/n}-1)}{(b-1)\cdot b^{1/n}}\\
&=\frac{b\cdot(e^{\ln(b)/n}-1)}{(b-1)\cdot (1+(b-1))^{1/n}}\\
&\ge\frac{b\cdot\ln(b)/n}{(b-1)\cdot (1+(b-1)/n)}\\
&=\left(1-\frac{b-1}{n+b-1}\right)\cdot\frac{b\ln (b)}{b-1}\cdot n^{-1}.
\end{align*}
\end{proof}

\statenondec*
\begin{proof}
Let $n\in\N$, $k\in\N$, and $\left(\vec{p}^{(n)}\right)_{n\in\N}$ be arbitrary, but fixed.
Since $m$ is the only free parameter, we let $\mathcal{D}^N(m)$ denote $\mathcal{D}^N\left(n, k, \left(\vec{p}^{(n)}\right)_{n\in\N}, m\right)$ for the sake of simplicity.
Now choose some $m\in\N$ arbitrarily.
We are going to show that
\[\Pro{\Phi\sim\mathcal{D}^N(m+1)}{P(\Phi)=1}\ge\Pro{\Phi\sim \mathcal{D}^N(m)}{P(\Phi)=1}.\]
We interpret each formula $\Phi$ as a sequence of (not necessarily distinct) clauses $(c_1,c_2,\ldots,c_m)$.
Since clauses are drawn independently with replacement in $\mathcal{D}^N$, for two formulas $x=(c_1,c_2,\ldots,c_m)$ and $y=(c_1,c_2,\ldots,c_m, c_{m+1})$ it holds that
\[\Pro{\Phi\sim \mathcal{D}^N(m+1)}{\Phi=y}=\Pro{\Phi\sim \mathcal{D}^N(m)}{\Phi=x}\cdot\Pro{\Phi\sim \mathcal{D}^N(1)}{\Phi=(c_{m+1})}.\]
Now let $P_{k,m}$ denote the set of all formulas with property $P$ in $k$-CNF with at most $m$ clauses and let $\mathcal{C}$ denote the set of all possible $k$-clauses over $n$ variables.
Then,
\begin{align*}
&\Pro{\Phi\sim \mathcal{D}^N(m+1)}{P(\Phi)=1}\\
& = \sum_{y=(c_1,\ldots,c_m,c)\in P_{k,m+1}}\Pro{\Phi\sim \mathcal{D}^N(m+1)}{\Phi=y}\\
& = \sum_{x=(c_1,\ldots,c_m)\in\mathcal{C}^m}\left(\Pro{\Phi\sim \mathcal{D}^N(m)}{\Phi=x}\cdot \sum_{\substack{c\in\mathcal{C}\colon\\ (c_1,\ldots,c_m,c)\in P_{k,m+1}}}\left(\Pro{\Phi\sim \mathcal{D}^N(1)}{\Phi=(c)}\right)\right)\\
& \ge \sum_{x=(c_1,\ldots,c_m)\in P_{k,m}}\left(\Pro{\Phi\sim \mathcal{D}^N(m)}{\Phi=x}\cdot \sum_{\substack{c\in\mathcal{C}\colon\\ (c_1,\ldots,c_m,c)\in P_{k,m+1}}}\left(\Pro{\Phi\sim \mathcal{D}^N(1)}{\Phi=(c)}\right)\right).\\
\intertext{Due to the monotonicity of $P$, if $x\in P_{k,m}$, then any $y$ which extends $x$ by one clause is in $P_{k,m+1}$. Thus,}
& = \sum_{y=(c_1,\ldots,c_m, c_{m+1})\in P_{k,m}}\left(\Pro{\Phi\sim \mathcal{D}^N(m)}{\Phi=x}\cdot \sum_{c\in\mathcal{C}}\left(\Pro{\Phi\sim \mathcal{D}^N(1)}{\Phi=(c)}\right)\right)\\
&=\Pro{\Phi\sim \mathcal{D}^N(m)}{P(\Phi)=1},
\end{align*}
since 
\[\sum_{c\in\mathcal{C}}\Pro{\Phi\sim \mathcal{D}^N(1)}{\Phi=(c)}=1.\]
This proves that the probability for $P$ is non-decreasing in $\mathcal{D}^N$ as $m$ increases.
\end{proof}

\end{document}